\def\dispmuskip{\thinmuskip= 3mu plus 0mu minus 2mu \medmuskip=  4mu plus 2mu minus 2mu \thickmuskip=5mu plus 5mu minus 2mu}
\def\textmuskip{\thinmuskip= 0mu                    \medmuskip=  1mu plus 1mu minus 1mu \thickmuskip=2mu plus 3mu minus 1mu}
\def\beq{\dispmuskip\begin{equation}}    \def\eeq{\end{equation}\textmuskip}
\def\beqn{\dispmuskip\begin{displaymath}}\def\eeqn{\end{displaymath}\textmuskip}
\def\bea{\dispmuskip\begin{eqnarray}}    \def\eea{\end{eqnarray}\textmuskip}
\def\bean{\dispmuskip\begin{eqnarray*}}  \def\eean{\end{eqnarray*}\textmuskip}
\newtheorem{theorem}{Theorem}
\newcommand{\eps}{\epsilon}
\newcommand{\wh}{\widehat}
\def\E{{\mathbb E}}                         
\def\V{{\mathbb V}}
\def\I{1\!\!1} 				
\def\F{{\cal F}}
\def\N{{\cal N}}
\newcommand{\bol}[1]{\textbf{#1}}
\def\E{{\rm E}}                         
\def\I{1\!\!1} 				
\def\F{{\cal F}}
\def\N{{\cal N}}
\def\Var{\text{\rm Var}}
\begin{document}
	
\title{Recurrent Conditional Heteroskedasticity\thanks{\textit{We would like to thank the editor and the three anonymous referees for their constructive comments and suggestions.}}}
\author{T.-N. Nguyen\thanks{\textit{Discipline of Business Analytics, The University of Sydney Business School.}}
\and M.-N. Tran\footnotemark[1]
\and R. Kohn\thanks{\textit{School of Economics, UNSW Business School. The research of Nguyen and Kohn was partially supported by the Australian Research Council's Centre of Excellence for Mathematical and Statistical Frontiers (ACEMS).
Tran is partially supported by the ARC Discovery Project DP200103015.}}
}

\maketitle
\begin{abstract}
We propose a new class of financial volatility models, called the REcurrent Conditional Heteroskedastic (RECH) models, to improve both in-sample analysis and out-of-sample forecasting of the traditional conditional heteroskedastic models. In particular, we incorporate auxiliary deterministic processes, governed by recurrent neural networks, into the conditional variance of the traditional conditional heteroskedastic models, e.g. GARCH-type models, to flexibly capture the dynamics of the underlying volatility.
RECH models can detect interesting effects in financial volatility overlooked by the existing conditional heteroskedastic models such as the GARCH, GJR and EGARCH. The new models often have good out-of-sample forecasts while still explaining well the stylized facts of financial volatility by retaining the well-established features of econometric GARCH-type models.
These properties are illustrated through simulation studies and applications to thirty-one stock indices and exchange rate data.
An user-friendly software package, together with the examples reported in the paper, is available at \url{https://github.com/vbayeslab}.

\noindent\textbf{Keywords.} Deep Learning, volatility modelling, neural networks, conditional heteroskedasticity.

\end{abstract}

\section{Introduction}
Financial time series, e.g. currency exchange rates or stock returns, exhibit stylized facts such as volatility clustering and leverage effects. The volatility clustering phenomenon of financial time series refers to the observation that ``large changes tend to be followed by large changes, of either sign, and small changes tend to be followed by small changes" \citep{Benoit1967}. This behavior implies that the volatilities, i.e. the conditional standard deviations, of financial returns are observed to be highly autocorrelated and exhibit periods of both low and high volatility. The leverage effects exhibited in financial time series, on the other hand, relate to the observation that the negative and positive past returns have asymmetric effects on the volatility \citep{Black:1976}. More specifically, the current volatility tends to be larger following a previous negative shock, i.e. a return below its expected value, than a positive one of the same absolute value. The volatility clustering and leverage imply that the volatility of financial assets changes over time, i.e., being heteroskedastic.

Time-varying volatility is a key assumption in the volatility modeling literature. A large number of volatility models have been developed since \cite{Engle:1982} proposed the Autoregressive Conditional Heteroskedastic (ARCH) model, which allows the conditional variance, i.e, the squared volatility, to change over time as a {\it deterministic} function of the historical shocks while leaving the unconditional variance unchanged. The most successful extension of the ARCH model is the Generalized Autoregressive Conditional Heteroskedasticity (GARCH) model of \cite{Bollerslev1986}; it models the current conditional variance as a linear function of the past conditional variances and squared returns. The GARCH model together with the ARCH model and their variants define a class of models, often referred to as the conditional heteroskedastic or GARCH-type models, which use deterministic functions of historical information, e.g. the past returns and past conditional variances, to model the current conditional variance and are able to capture the stylized facts of financial time series. Another notable line of research in the volatility modeling literature focuses on the stochastic volatility (SV) model \citep{Taylor:1986} and its variants, which formulate the conditional variance using latent {\it stochastic} processes that do not directly involve the past returns. Our article is, however, interested in GARCH-type models which are probably more popular in the volatility modeling literature because it is much easier to estimate GARCH-type models than SV-type alternatives. See \cite{Koopman:2016} for a comprehensive comparison between the GARCH-type and SV-type models.

Recurrent neural networks (RNNs) in the Deep Learning literature are successfully used in a large number of industrial-level applications; e.g. language translation, image captioning, speech synthesis.
RNNs are well-known for their ability to efficiently capture the long-range memory  and non-linear serial dependence existing within various types of sequential data, and are considered as the state-of-the-art models for many sequence learning problems \citep{Lipton:2018}. See \cite{Goodfellow2016} for a comprehensive discussion of various types of neural network models (NNs) and their broad range of applications.
The recent success of RNNs has motivated econometricians to incorporate RNNs and other deep learning models into their econometric models.
There is a large amount of research along this line, with a focus on modelling the mean rather than the variance of financial asset returns; see, e.g., \cite{ZHANG:1998,ZHANG:2003}.
Leveraging the power of deep learning in volatility modelling is still somewhat overlooked in the econometrics literature.
\cite{Donaldson:1997} are one of the first to propose a NN-GARCH model that adds a feedforward NN component into the conditional variance of the GJR model \citep{Glosten:1993}.
Their in-sample and out-of-sample results on four stock markets suggest that the NN-GARCH model is preferred to several benchmark GARCH-type models.
\cite{Roh:2007} proposes NN-based volatility models
that first estimate the conditional variance by an econometric volatility model, then use these estimates as inputs to a feedforward neural network, which then non-linearly transforms these inputs to output the final estimate of the conditional variance.
\cite{Kim:2018} extend this idea by using the outputs from several GARCH-type models rather than a single one as the inputs to a recurrent neural network; see also \cite{Luo:2018}.
\cite{LIU201999} uses the Long Short-term Memory, a sophisticated RNN technique, for volatility modelling and reports its improved prediction over GARCH in two datasets.
In general, these hybrid models that combine neural networks and financial econometric models are empirically superior to several econometric models and plain neural network models, in terms of predictive performance.
However, these models are often engineering-oriented and ignore the interpretable aspects of econometrics volatility models and the important stylized facts of financial time series.
Some of these existing models use feedforward NNs rather than RNNs and thus might ignore the time effects in time series data.
Their design is also rather inflexible in the sense that they combine a particular econometric model with a particular NN.
It is important to design a more flexible framework that is easy to adapt to advances in both the deep learning and volatility modelling literature.

GARCH-type models are generally simple yet highly interpretable in the sense that they are designed to explain the distinct behaviors of financial time series. Any new volatility model should not overlook this interpretability of the traditional econometric models. This paper proposes a new class of models, called the REcurrent Conditional Heteroskedastic models (RECH), that not only improve the forecast performance of GARCH-type models, by leveraging the capability of learning non-linearity and long-range dependence of RNNs, but also place significant emphasis on the interpretation of the estimated volatility, by inheriting the well-established features of GARCH-type models.
We now briefly explain why RECH models fit well within the volatility modeling literature. First, similarly to GARCH-type models, the conditional variances in RECH models are a deterministic function of the past values; hence it is easy to estimate RECH models as their likelihood functions can be evaluated analytically.
Second, RECH models are still able to explain the stylized facts of the underlying volatility dynamics.
Third, by inheriting the predictive power from deep learning techniques, RECH models often forecast better. Fourth, the highly flexible design of RECH models makes it easy to adopt advances in both the deep learning and volatility modeling literatures, allowing it to be used in a wide range of applications in financial time series analysis.
A Matlab software package implementing Bayesian estimation and inference for RECH models together with the examples reported in this paper is available at \url{https://github.com/vbayeslab}.

The rest of the article is organized as follows. Section \ref{sec:convol} briefly reviews the GARCH model and its variants. Section \ref{sec:rech} briefly reviews different types of neural networks and proposes RECH models. Section \ref{Sec:Bayes} discusses Bayesian estimation and inference for RECH models. Section \ref{sec:simulation and applications} presents the simulation study and applies RECH models to analyze four benchmark financial datasets. Section \ref{sec:conclusion} concludes. The Appendix gives implementation details and further empirical results.

\section{Conditional heteroscedastic models}\label{sec:convol}
Let $y=\{y_t,\ t=1,...,T\}$ be a time series of demeaned returns and $\mathcal F_t$ be the $\sigma$-field of the information up to time $t$. Conditional heteroskedastic models represent the conditional variance $\sigma_t^2:=\Var(y_t|\mathcal F_{t-1})$ of the observation $y_t$ as a deterministic function of the observations and the conditional variances in the previous time steps. Mathematically, these models are expressed as:
\begin{subequations}
	\begin{align}
	y_t &= \sigma_t\eps_t,\;\;\eps_t\sim i.i.d \;\;\; \text{with} \;\;t=1,2,...,T, \label{eqn:CH_constant_1}\\
	\sigma_t^2 &= \omega + f(\sigma_{t-1}^2,...,\sigma_{t-p}^2,y_{t-1},...,y_{t-q},\theta); \label{eqn:CH_constant_2}
	\end{align}
\end{subequations}
$f(\cdot)$ is a positive deterministic function parameterized by the vector of unknown parameters $\theta$; $p,q \ge 0$ are the number of lags of $\sigma_t^2$ and $y_t$ respectively;
 and $\omega$ is a non-negative constant ensuring that the conditional variance $\sigma^2_t$ is positive.
The shocks $\eps_t$ are i.i.d. with zero mean and unit variance.

The GARCH model \citep{Bollerslev1986} formulates the conditional variance $\sigma^2_t$ as a linear combination of the previous returns and conditional variances in a ARMA($p,q$) form as:
\begin{subequations}
\begin{align}
y_t &= \sigma_t\eps_t,\;\;t=1,2,...,T,\label{eqn:GARCH1}\\
\sigma_t^2 &= \omega + \sum_{i=1}^{p}\alpha_i y_{t-i}^2 +\sum^{q}_{j=1}\beta_j \sigma_{t-j}^2,\;\;t=2,...,T;\label{eqn:GARCH2}
\end{align}
\end{subequations}
$\omega>0, \alpha_i,\beta_j \ge 0$, $i=1,...,p$, $j=1,...,q$ and $\sum_{i=1}^{p}\alpha_i+\sum^{q}_{j=1}\beta_j <1$ to ensure the stationarity of the GARCH process.

The structure of the conditional variance in \eqref{eqn:GARCH2} is symmetric in the sense that the conditional variance $\sigma^2_t$ does  not depend on the sign of the $y_t$, implying that the conventional GARCH($p,q$) model cannot capture the important leverage effect, i.e., $\sigma_t^2$ depends asymmetrically on the previous returns, in financial time series. \cite{Glosten:1993} propose a variant of the GARCH model, often called the GJR model, of the form
\begin{subequations}
\begin{align}
y_t &= \sigma_t\eps_t,\;\;t=1,2,...,T,\label{eqn:GJR1}\\
\sigma_t^2 &= \omega + \sum_{i=1}^{p}\alpha_i y_{t-i}^2  +\sum^{q}_{j=1}\beta_j  \sigma_{t-j}^2+ \sum^{p}_{i=1} \gamma_i \I[y_{t-i}<0]y_{t-i}^2;\label{eqn:GJR2}
\end{align}
\end{subequations}
$\omega>0, \alpha_i,\beta_j \ge 0, \alpha_i+\gamma_i \ge 0$, $i=1,...,p$, $j=1,...,q$ and $\sum_{i=1}^{p}\alpha_i+\sum^{q}_{j=1}\beta_j +\sum^{p}_{i=1}\gamma_i <1$ to ensure the stationarity of the $y_t$ process and the positivity of the conditional variance. The indicator function $\I[y_t<0]$ in \eqref{eqn:GJR2} equals $1$ if $y_t<0$ and is $0$ otherwise. In the GJR model, if $\gamma_i>0$, negative returns are more influential than positive returns.

Another popular GARCH-type model is the Exponential Garch (EGARCH) model of
\cite{Nelson:1991}
\begin{subequations}
	\begin{align}
	y_t &= \sigma_t\eps_t,\;\;t=1,2,...,T,\label{eqn:EGARCH1}\\
	\log \sigma_t^2 &= \omega + \sum_{i=1}^{p}\beta_i \log\sigma_{t-i}^2  +\sum^{q}_{j=1}\alpha_j \left[\frac{|y_{t-j}|}{\sigma_{t-j}}-\E\left\{\frac{|y_{t-j}|}{\sigma_{t-j}}\right\}\right]  + \sum^{q}_{j=1} \gamma_j \frac{y_{t-j}}{\sigma_{t-j}},\label{eqn:EGARCH2}
	\end{align}
\end{subequations}
where the roots of the polynomial $(1-\beta_1 L - ... -\beta_p L^p)$ must lie outside the unit circle to ensure the stationarity of the $y_t$ process.
By working on the log-scale,
the EGARCH model removes the positivity constraints on the model parameters
and states the leverage terms in Eq. \eqref{eqn:EGARCH2} to capture the asymmetry in volatility clustering.
See \cite{Poon:2003} and \cite{Bollerslev:2008} for a comprehensive discussion of the family of GARCH models and their
properties. We will use GARCH, GJR and EGARCH as the benchmark econometric models to compare against RECH models, because they are widely used in the volatility modelling literature.

\section{Recurrent conditional heteroskedastic models}\label{sec:rech}

\subsection{Recurrent neural network models}
This section denotes the time series data as $\{D_t=(x_t,z_t),t=1,2,..\}$, where $x_t=(x_{t,1},...,x_{t,K})^\top$ is the vector of inputs and $z_t$ the scalar output.
For the sequence $\{x_t\}$, $x_{i:j}$ denotes $(x_i,...,x_j)$ for $i\leq j$.
The goal of recurrent neural network models is to model the conditional distribution $p(z_t|x_t,D_{1:t-1})$.

There are several standard time series models.
One approach is to represent time effects {\it explicitly} via some simple functions, often a linear function, of the lagged values of the time series.
This is the mainstream time series data analysis approach with the well-known ARIMA method \citep{Box:1976}.
This section considers an alternative approach representing time effects {\it implicitly} via latent variables that are designed to store the memory of the dynamics in the data.
These latent variables, also called hidden states, are updated recurrently using the information carried over by their values from the past and the information from the data at the current time.
Recurrent neural networks (RNNs), belonging to the second category, were first developed in cognitive science \citep{Elman:1990} and successfully used in machine learning.

If the serial dependence structure is ignored, then a feedforward neural network (FNN) can be used to
transform the raw input data $x_t$ into a set of hidden units $h_t$, often called {\it learned features}, for the purpose of explaining or predicting $z_t$. Figure \ref{fig:FNN} is a graphical representation of a FNN model with one hidden layer containing $L$ hidden units.

\begin{figure}[ht]
	\centering
	\includegraphics[scale=0.5]{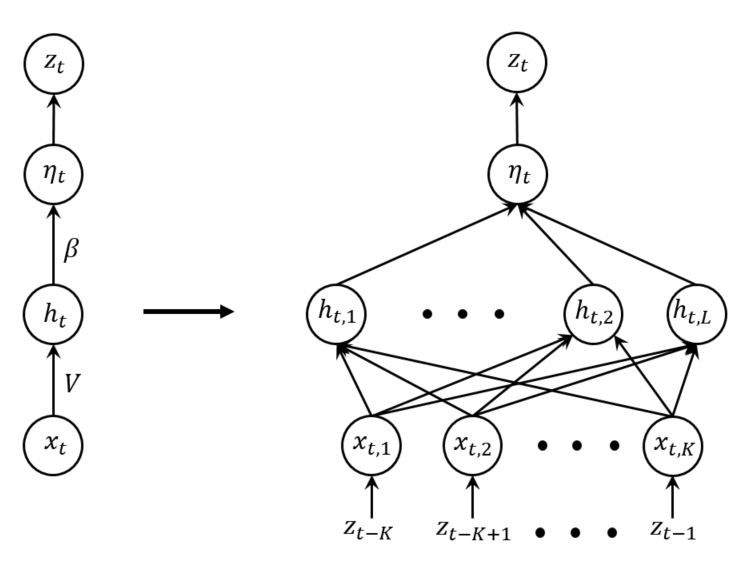}
	\caption{A FNN model with one hidden layer in compact {\it (left)} and explicit {\it {(right)}} styles.}.
	\label{fig:FNN}
\end{figure}

Given the FNN model in Figure \ref{fig:FNN}, the output $z_t$ is calculated as:
\begin{subequations}
	\begin{align}
	h_t     &=\phi(V x_t  + b),\label{eq:FNN1}\\
	\eta_t  &=\beta^\top h_t + \beta_0,\label{eq:FNN2}\\
	z_t|\eta_t&\sim p(z_t|\eta_t); \label{eq:FNN3}
	\end{align}
\end{subequations}
$V$ is a $L \times K$ matrix of weights connecting the input layer to the hidden layer; $\beta =(\beta_1,...,\beta_L)^\top$ is a vector of weights connecting the hidden layer to the output layer; $\beta_0$ is a scalar; $b=(b_1,...,b_L)^\top$ is a bias vector and $\phi(\cdot)$ is a non-linear scalar function, called the activation function. The scalar function $\phi$ is applied component-wise to a vector. In modern neural network modelling, the default recommendation for $\phi(\cdot)$ is to use the rectified linear unit \citep{Nair:2010,Le:2015}, or ReLU, having the form $\phi(z) = max\{0,z\}$.
The density $p(z_t|\eta_t)$ depends on the learning task.
For example, if $z_t$ is continuous, then typically $p(z_t|\eta_t)$ is a normal distribution with mean $\eta_t$ and variance $\sigma^2$;
if $z_t$ is binary, then $z_t|\eta_t$ follows a Bernoulli distribution with probability $\text{logit}^{-1}(\eta_t)$.

FNNs provide a powerful way to approximate the true function that maps the input $x_t$ to the mean $\E(z_t|x_t)$ or to transform the raw data $x_t$ into summary statistics $h_t$ having some desirable properties. However, FNNs are unsuitable for time series data analysis as the time effects and the serial correlations are ignored.
The main idea behind RNNs is to let the set of hidden units $h_t$ feed itself on its lagged value $h_{t-1}$.
Hence, an RNN can be best thought of as a FNN that allows a connection of the hidden units to their value from the previous time step, enabling the network to possess memory. This basic RNN model \citep{Elman:1990} can be written as:
\begin{subequations}
	\begin{align}
	h_t&=\phi(V x_t  + Wh_{t-1}+b),\label{eq:simpleRNN1}\\
	\eta_t&=\beta^\top h_t + \beta_0,\label{eq:simpleRNN2}\\
	z_t|\eta_t& \sim p(z_t|\eta_t); \label{eq:simpleRNN3}
	\end{align}
\end{subequations}
the parameters are the bias vector $b$, the bias scalar $\beta_0$, the weight matrices $V$, $W$, and $\beta$ for input-to-hidden, hidden-to-hidden and hidden-to-output connections, respectively. Similarly to FNNs, $\phi(\cdot)$ is a non-linear activation function; common choices are the ReLU or the sigmoid $\phi(z)=1/(1+e^{-z})$. Usually, we can set $h_1=0$, i.e. the neural network initially memoryless.

Figure \ref{f:RNN} \citep{Nguyen:2019} graphically illustrates the RNN model \eqref{eq:simpleRNN1}-\eqref{eq:simpleRNN3}. The circuit diagram (\textit{Left}) can be interpreted as an unfolded computational graph (\textit{Right}), where each node is associated with a particular time step.
The calculation of $h_t$ can be represented as a Simple Recurrent Neuron (SRN) unit, as Figure \ref{f:rnn_lstm_sru} shows, and we refer to \eqref{eq:simpleRNN1} as $h_t=\text{SRN}(x_t,h_{t-1})$, taking data $x_t$ at time $t$ and the previous state $h_{t-1}$ as the inputs. Using the SRN structure, the unfolded graph of the RNN model of \cite{Elman:1990}, which is normally referred to as the Simple RNN model, can be reinterpreted as the unfolded graph in Figure \ref{f:rnn_lstm_sru}(\textit{right}).

\begin{figure}[ht]
	\centering
	\includegraphics[width=0.6\columnwidth]{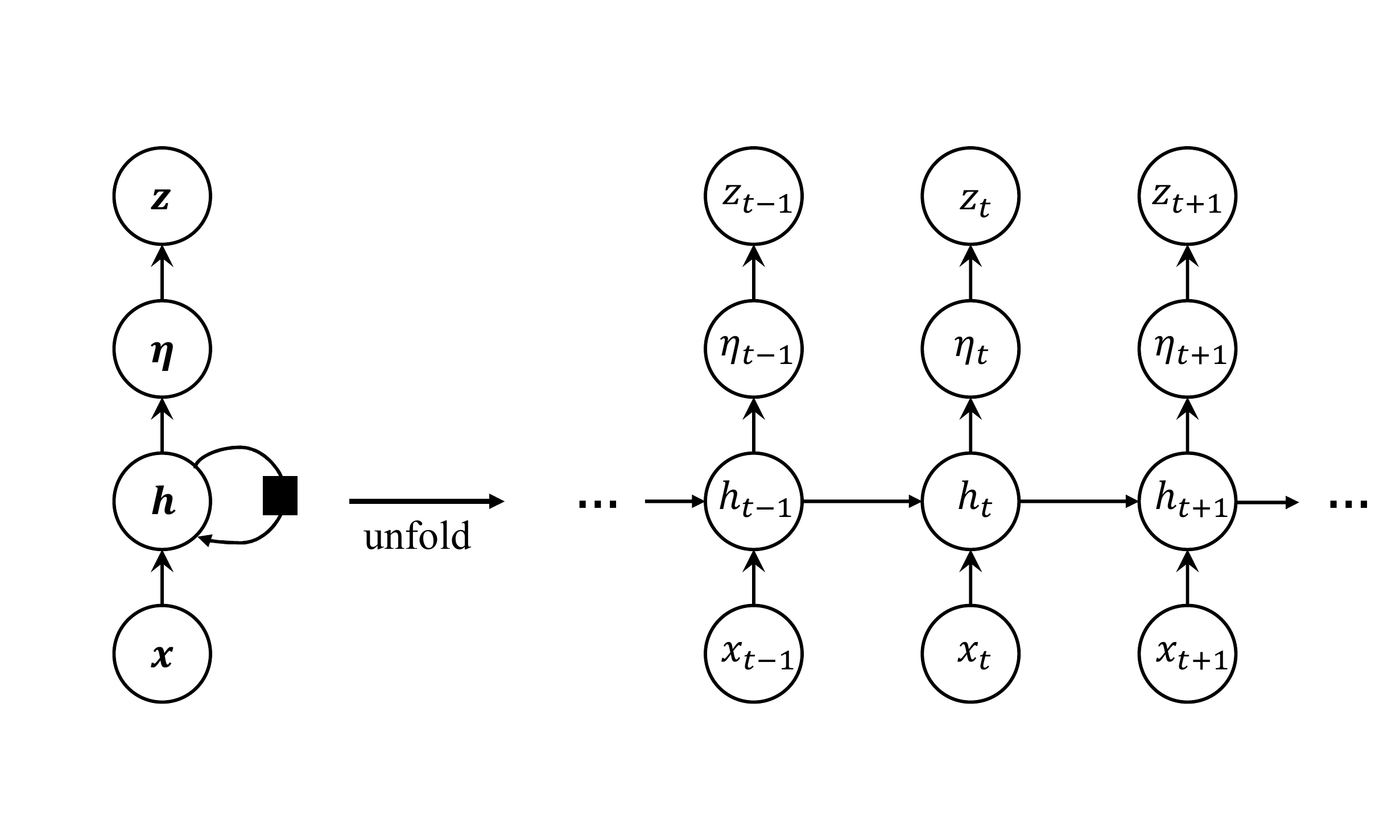}
	\caption{Graphical representation of the RNN model in \eqref{eq:simpleRNN1}-\eqref{eq:simpleRNN3}. }
	\label{f:RNN}
\end{figure}

\begin{figure}[h]
	\centering
	\includegraphics[width=0.6\columnwidth]{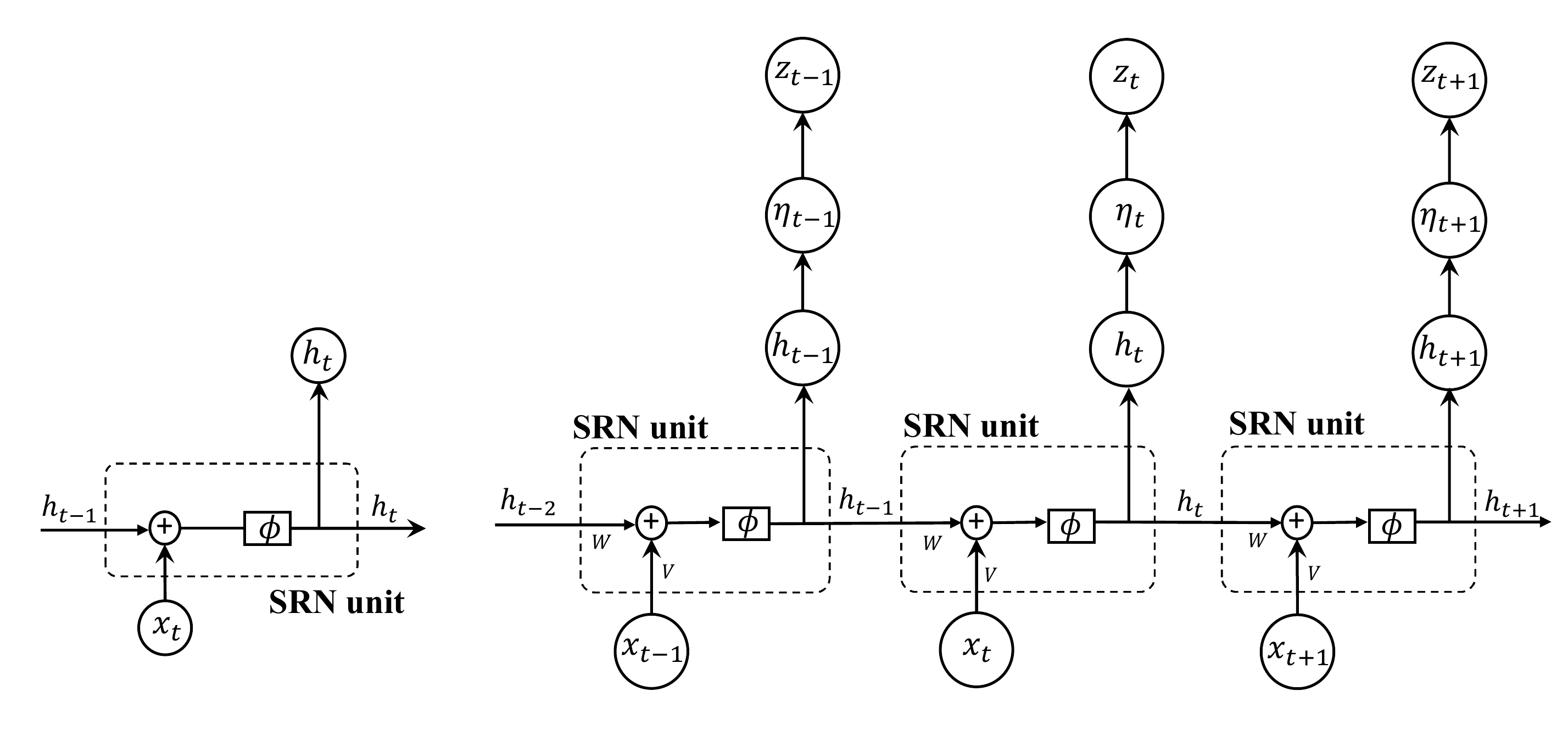}
	\caption{The structures of the SRN unit (\textit{left}) and the graphical representation of the Simple RNN model (\textit{right}), which uses the SRN unit to compute the latent state $h_t$.  The $\oplus$ symbols represents the addition operation.}
	\label{f:rnn_lstm_sru}
\end{figure}

There are more sophisticated recurrent neuron unit structures to compute $h_t$, such as the memory cell in the Long Short-Term Memory (LSTM) model of \cite{Hochreiter1997} and
the Statistical Recurrent Unit (SRU) of \cite{Oliva:2017}.
Below, we sometimes write $h_t=\text{RNN}(x_t,h_{t-1})$ to represent the way the hidden state $h_t$ is computed in an RNN model. Our article, however, only considers the SRN.

\subsection{Recurrent conditional heteroskedastic models}
\subsubsection{The general formulation}
The key motivation of RECH models is to allow the constant term $\omega$ in the general formulation of GARCH-type models in \eqref{eqn:CH_constant_1}-\eqref{eqn:CH_constant_2} to be driven
by an auxiliary deterministic process governed by an RNN, in order to capture complex dynamics such as non-linearity and long-term dependence that might not be captured efficiently by the GARCH-type component $f(\sigma_{t-1}^2,....,y_{t-q})$.
Our general RECH model is written as:
\begin{subequations}
\begin{align}
y_t &= \sigma_t\eps_t,\;\;\eps_t \sim i.i.d ,\;\;t=1,2,...,T  \label{eqn:RECH1}\\
\sigma_t^2 &= g(\omega_t)+ f(\sigma_{t-1}^2,...,\sigma_{t-p}^2,y_{t-1},...,y_{t-q}),\;\;t=2,...,T,\;\; \sigma^2_1=\sigma^2_0             \label{eqn:RECH2}\\
\omega_t&= \beta^\top h_t + \beta_0  ,\;\; t=2,...,T,\label{eqn:RECH3}\\
h_t&=\text{RNN}(x_t,h_{t-1}),\;\;t=2,...,T,\;\; \text{with} \; h_1 \equiv 0;\label{eqn:RECH4}
\end{align}
\end{subequations}
$g(\cdot)$ is a non-negative activation function; $p$ and $q$ are lag orders of $\sigma^2_t$ and $y_t$ respectively; $\beta_0$ is a scalar; $\beta=(\beta_1,...,\beta_L)$ is the weight vector with $L$ the number of hidden states. The reason an activation function is applied to $\omega_t$ is to ensure the conditional variance $\sigma^2_t$ is positive.
We refer to $\omega_t$ in \eqref{eqn:RECH2} as the {\it recurrent} component, as it is driven by a RNN, and $f(\cdot)$ as the {\it GARCH} component as this is formed based on the GARCH-type structures without the constant term.
Hence, we shall refer to the parameters of the recurrent component as the recurrent parameters, and refer to the parameters in $f(\cdot)$ as the GARCH parameters.
The recurrent state $h_t=\text{RNN}(x_t,h_{t-1})$ takes as its inputs the previous state $h_{t-1}$ and a vector of additional information $x_t$ whose choice is discussed shortly.

The conditional variance in \eqref{eqn:RECH2} is a sum of the recurrent and the GARCH components. This flexible design allows RECH models to enjoy many advances from both worlds of deep learning and volatility modelling.
Similarly to deep learning models, RECH models can use highly sophisticated neural network structures to capture complicated dynamics, e.g., long-range dependence and non-linearity, of the volatility dynamics and hence improve the forecasting of traditional GARCH-type models in applications where the underlying volatility dynamics exhibits long memory and nonlinearity.
Similarly to GARCH-type models, RECH models use simple yet interpretable structures to simulate important stylized facts in financial time series such as volatility clustering and leverage effects.
RECH models are well suited to modeling volatility because they inherit many properties from the GARCH-type models and distinguish themselves from the existing NN-based volatility models that often overlook the interpretability of the mainstream econometric models. As the recurrent and GARCH components are additive, increasing the complexity of the recurrent component $\omega_t$ will not decrease the interpretability of the GARCH component, and hence of RECH models.
The general formulation in \eqref{eqn:RECH1}-\eqref{eqn:RECH4} implies that most (if not all) variants of the GARCH models are nested in the RECH framework, because RECH models reduce to the corresponding GARCH-type models if $\beta=0$.

In previous work that combined an NN-based component with a GARCH-type model, \cite{Donaldson:1997} added a FNN-based component to a GJR model and reported some improvement of their FNN-GJR model compared to several benchmark GARCH-type models.
However, as discussed above, it might be inefficient to use a FNN to analyze time series data as FNNs are typically designed for cross-sectional data. Also, the estimation method of \cite{Donaldson:1997} uses randomized weights
for the FNN component rather than optimizing them; this technique is not recommended in the modern deep learning literature \citep{Goodfellow2016}.
\cite{Nguyen:2019} incorporate LSTM into the stochastic volatility models and name their model LSTM-SV.
LSTM-SV belongs to the class of parameter-driven models while RECH is an observation-driven model;
see \cite{Koopman:2016} for a comprehensive comparison between these two classes of models.
More specifically, the volatility dynamics in RECH is a {\it deterministic} process rather than a stochastic latent process as in LSTM-SV,
making it much easier to estimate RECH models than the LSTM-SV model as the likelihood of RECH models can be calculated analytically.
The RNN component of the LSTM-SV model takes only its past values as inputs while the recurrent component $\omega_t$ of RECH models allows any information including past observations as inputs.
We show below that this flexibility enables RECH models with SRN to capture complicated dynamics and the leverage effect in financial time series without needing to use complicated RNN structures such as LSTM.
Finally, like GARCH and SV, RECH and LSTM-SV complement each other and offer different perspectives towards the volatility modelling problem. 
The Appendix compares the performance of the RECH and LSTM-SV models.

\subsubsection{Specifications for RECH models}
\label{sec:RECH_specifications}
The RECH framework is highly flexible because it can easily incorporate advances from both the deep learning and volatility modeling literatures to design the recurrent and GARCH components, respectively.
For example, by using the SRN structure for the recurrent component $\omega_t$ and the conditional variance structure of the GARCH(1,1) model for the GARCH component, we obtain the SRN-GARCH specification of RECH model as:
\begin{subequations}
\begin{align}
y_t &= \sigma_t\eps_t,\;\;\eps_t\stackrel{iid}{\sim}\N(0,1),\;\;t=1,2,...,T  \label{RNN-GARCH1}\\
\sigma_t^2 &= \omega_t+ \alpha y_{t-1}^2 + \beta \sigma_{t-1}^2,\;\;t=2,...,T,\;\; \sigma^2_1=\sigma^2_0             \label{RNN-GARCH2}\\
\omega_t&=\beta_0+\beta_1 h_t,\;\; t=2,...,T,\label{RNN-GARCH3}\\
h_t&=\text{SRN}(x_{t},h_{t-1}),\;\;t=2,...,T,\;\; \text{with} \;\; h_1 \equiv 0;\label{RNN-GARCH4}
\end{align}
\end{subequations}
$x_t$ is the input vector of the RNN at time $t$. Figure \ref{f:SRN-GARCH graph} graphically represents the SRN-GARCH model.
For simplicity, we consider the standard normal distribution $\N(0,1)$ for the errors $\eps_t$. 
Here, we have used a linear activation function for $g(\cdot)$ in Eq. \eqref{eqn:RECH2}, and set $\beta_0,\beta_1 \geq  0$ to ensure the positivity of the conditional variance.
Alternatively, one can use a positive activation function for $g(\cdot)$, such as the ReLu or sigmoid, and relax the positivity constraints for $\beta_0$ and $\beta_1$.
We follow the GARCH literature and put the stationarity and positivity constraints on the GARCH parameters $\alpha$ and $\beta$, i.e., $\alpha,\beta \ge 0$ and $\alpha+\beta<1$.
These constraints do not imply that the SRN-GARCH model is stationary, but might improve numerical stability in estimation.

The recurrent function \text{SRN} is
\[\text{SRN}(x_t,h_{t-1}):=\phi(v^\top x_t + w h_{t-1}+b),\]
with $\phi(\cdot)$ a non-linear activation function.
We use the bounded ReLU activation for $\phi(\cdot)$ as it is easier to train and often performs better than other alternatives in the deep learning literature \citep{LIEW2016718}.
The bounded activation also guarantees a finite unconditional volatility; see Theorem \ref{the:Theorem}.
The recurrent weight $w$ and offset term $b$ are scalars. By default, we use only one recurrent state, i.e. $h_t$ is a scalar; however, it is possible to extend the specification in \eqref{RNN-GARCH1}-\eqref{RNN-GARCH4} to the case where $h_t$ is a vector of hidden states.
There is no restriction on the choice of the input vector $x_t$; typically, $x_t$ should include variables such as covariates and past returns that are deemed useful for predicting volatility $\sigma_t^2$.
If there are no covariates, as in the applications below, our choice for $x_t$ is the vector of the past return $y_{t-1}$ and the past conditional variance $\sigma_{t-1}^2$.
We also found it useful to include in $x_t$ the past recurrent component $\omega_{t-1}$.
Hence, $x_t=(w_{t-1},y_{t-1},\sigma_{t-1}^2)^\top$ and the input weights are $v=(v_0,v_1,v_2)^\top$.

While a non-zero $\beta$ in \eqref{RNN-GARCH2} quantifies the linear dependence of the current conditional variance $\sigma_t^2$ on its past value $\sigma_{t-1}^2$, a non-zero weight $v_2$ quantifies the non-linear dependence of $\sigma_t^2$ on $\sigma_{t-1}^2$.
That is, the recurrent component allows non-linear dependence of $\sigma_t^2$ on $\sigma_{t-1}^2$.
It is also well perceived in the deep learning literature that RNNs are able to capture long-range dependence \citep{pmlr-v97-greaves-tunnell19a},
therefore the recurrent component can allow long-range dependence that $\{\sigma_{s}^2,s<t\}$ have on $\sigma_t^2$.

We note that the SRN-GARCH specification uses the conditional variance structure of the GARCH model, but is still able to capture the leverage effects of the volatility dynamics as the input vector $x_t$ includes the asymmetric leverage term $y_{t-1}$ itself, not $y_{t-1}^2$.
Section \ref{sec:simulation and applications} shows that this choice of the input vector $x_t$ makes the volatility estimated by RECH models less sensitive to the choice of the structure for the GARCH component.
For selecting the lags $p$ and $q$ in the GARCH component, we find that SRN-GARCH(1,1) often works well in almost all cases.
This is probably because the recurrent component $\omega_t$ is able to capture the long-range dependence \citep{pmlr-v97-greaves-tunnell19a}, hence larger lags are unnecessary.
This is also consistent with the observation in the financial econometrics literature that the GARCH(1,1) model often works the best among other GARCH models \citep{Hansen:2005}.
Below, by SRN-GARCH without mentioning the lags we mean  SRN-GARCH(1,1).
We note that the SRN-GARCH specification simplifies to the GARCH(1,1) model if $\beta_0 >0$ and $\beta_1=0$.

\begin{figure}[ht]
	\centering
	\includegraphics[width=0.7\columnwidth]{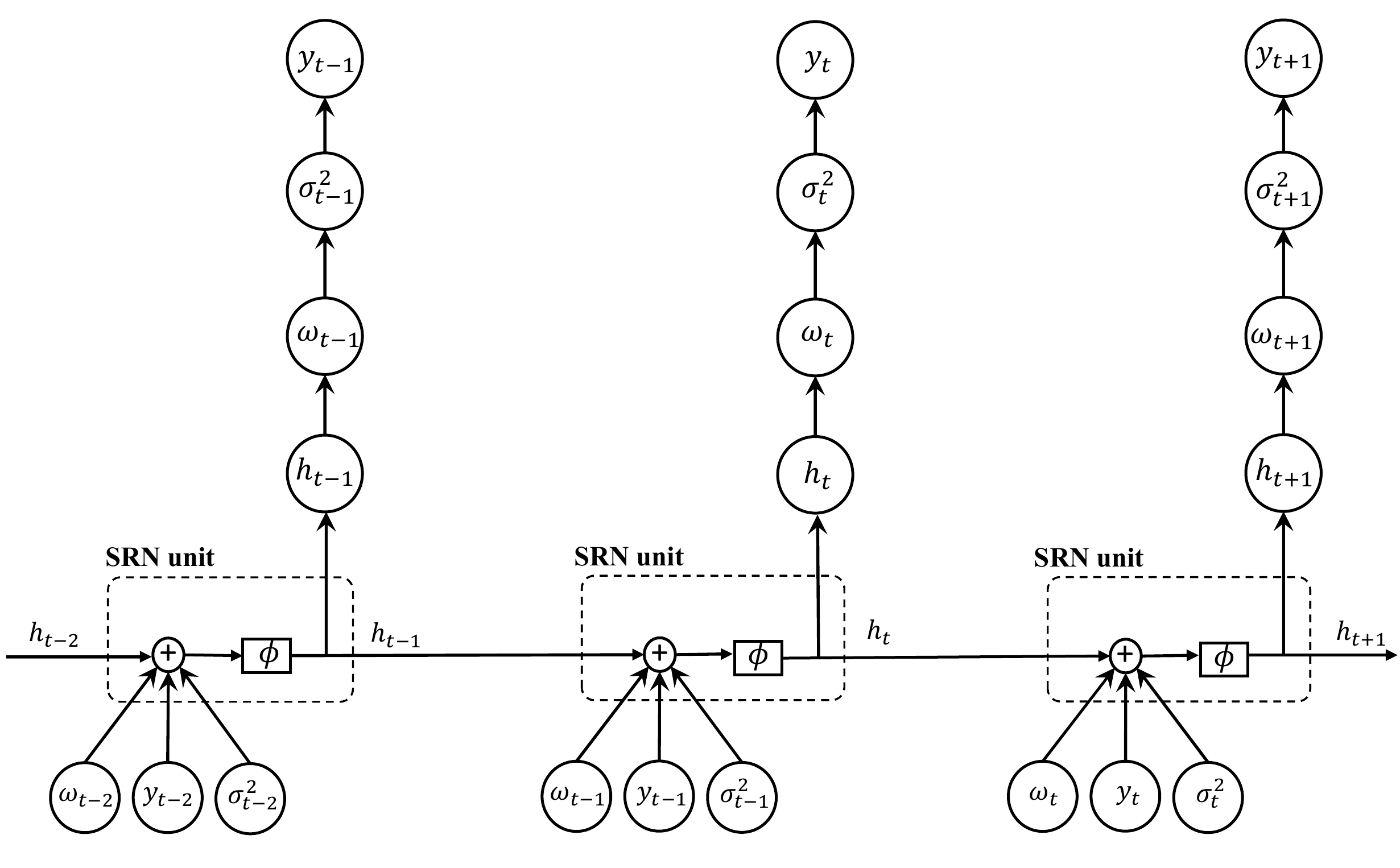}
	\caption{Graphical representation of SRN-GARCH.}
	\label{f:SRN-GARCH graph}
\end{figure}

Many other specifications of RECH models can be constructed. For example, the SRN-GJR specification is obtained by using the SRN structure for the recurrent component and the conditional variance structure of the GJR(1,1) for the GARCH component. Table \ref{tab:RECH_GARCH_models} presents several RECH model specifications. It is also possible to use other RNN structures such as LSTM \citep{Hochreiter1997} or SRU \citep{Oliva:2017} for the recurrent component.
It is worth noting that the GJR and EGARCH models accommodate the leverage effects as linear terms in the conditional variance equation and hence can only capture the linear dependence of the leverage effects.
RECH models, on the other hand, are able to capture other leverage dependence rather than the linearity, e.g. non-linearity or temporal dependence, of the leverage effects by allowing the leverage term $y_{t-1}$ to be an input of the RNN.

\begin{table}[h]
	\begin{center}
		\small
		\begin{tabular}{c|c|c}
			\hline\hline
			\rule{0pt}{3ex}
			\textbf{Models}          &\textbf{Conditional variance}     &\textbf{Constraints} \\
			\hline
			\rule{0pt}{3ex}
			\multirow{4}{*}{SRN-GJR}      &\multirow{3}{*}{$
			\begin{aligned}
			\sigma_t^2 &= \omega_t + \alpha  y_{t-1}^2 +\beta \sigma_{t-1}^2 + \gamma \text{I}_{[y_{t-1}<0]}y^2_{t-1}\\
			\omega_t&=\beta_0+\beta_1 h_t\\
			h_t&=\text{SRN}(x_{t},h_{t-1}), \; \; \text{with} \; \; x_t=(\omega_{t-1},y_{t-1},\sigma^2_{t-1})
			\end{aligned}$}                    &$\alpha,\beta \geq 0$\\
			&&$\alpha + \gamma \geq 0$ \\
			&&$\alpha + \beta + \gamma < 1$\\
			&&$\beta_0,\beta_1 \ge 0$\\
			\hline
			\rule{0pt}{3ex}
			\multirow{4}{*}{SRN-EGARCH}       &\multirow{2}{*}{$
			\begin{aligned}
			\sigma_t^2 &= \omega_t + \text{exp}\left\{\omega+\beta \text{log}\sigma_{t-1}^2 +\alpha \left [\frac{|y_{t-1}|}{\sigma_{t-1}} - \sqrt{\frac{2}{\pi}}\right]+\gamma\frac{y_{t-1}}{\sigma_{t-1}}\right\}\\
			\omega_t&=\beta_0+\beta_1 h_t\\
			h_t&=\text{SRN}(x_{t},h_{t-1}) \; \; \text{with} \; \; x_t=(\omega_{t-1},\text{e}^{y_{t-1}},\sigma^2_{t-1})
			\end{aligned}$  }                  &$0 \le \beta <1$\\
			&& \\	
			&& \\
			&& \\
			&& \\
			\hline\hline
		\end{tabular}
	\end{center}
	\caption{Several specifications of RECH models.}
	\label{tab:RECH_GARCH_models}
\end{table}

Given the general formulation of RECH models in \eqref{eqn:RECH1}-\eqref{eqn:RECH4}, its $\sigma^2_t$ process, and thus its $y_t$ process, is not guaranteed to be stationary unless $\beta_1=0$ and the GARCH parameters satisfy the stationary constraints of the corresponding GARCH components $f(\cdot)$. For example, the SRN-GARCH specification in \eqref{RNN-GARCH1}-\eqref{RNN-GARCH4} is stationary if $\beta_0, \alpha, \beta >0, \beta_1 = 0, \alpha+\beta<1$.
Although non-stationarity for volatility may be mathematically less appealing,
it is often argued to be more realistic in practice, e.g. \cite{Bellegem:2012}.
Theorem \ref{the:Theorem} below guarantees that the variance of $y_t$ is bounded.
\begin{theorem}\label{the:Theorem}
Consider the SRN-GARCH specification in \eqref{RNN-GARCH1}-\eqref{RNN-GARCH4}. Assume that
\begin{itemize}
\item $\alpha,\beta>0$, $\alpha+\beta<1$;
\item the recurrent component is bounded, i.e. $\omega_t\leq M$, almost surely for some $M<\infty$.
\end{itemize}
Then,
\[\V(y_t|\sigma_0^2)\leq\frac{M}{1-\alpha-\beta}+\sigma_0^2,\;\;\text{for all }\; t\geq 0.\]
That is, if the initial volatility $\sigma_0^2$ is finite almost surely, then all the subsequent $y_t$ have a finite variance almost surely.
\end{theorem}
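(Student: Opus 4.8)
The plan is to reduce the claim about $\V(y_t\mid\sigma_0^2)$ to a linear recursion for the conditional second moment of the volatility, and then solve that recursion by summing a geometric series. First I would observe that, since $\sigma_t$ is $\mathcal F_{t-1}$-measurable while $\eps_t$ is independent of $\mathcal F_{t-1}$ with $\E(\eps_t)=0$ and $\E(\eps_t^2)=1$, the process $y_t=\sigma_t\eps_t$ is a martingale difference sequence: $\E(y_t\mid\mathcal F_{t-1})=0$ and $\E(y_t^2\mid\mathcal F_{t-1})=\sigma_t^2$. Taking the tower property with respect to the initial information $\sigma_0^2$ gives $\E(y_t\mid\sigma_0^2)=0$, so that
\[
\V(y_t\mid\sigma_0^2)=\E(y_t^2\mid\sigma_0^2)=\E(\sigma_t^2\mid\sigma_0^2).
\]
It therefore suffices to bound $m_t:=\E(\sigma_t^2\mid\sigma_0^2)$.

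Next I would derive a one-step recursion for $m_t$. Applying $\E(\cdot\mid\sigma_0^2)$ to the conditional variance equation $\sigma_t^2=\omega_t+\alpha y_{t-1}^2+\beta\sigma_{t-1}^2$ and using the same martingale-difference identity as above (so that $\E(y_{t-1}^2\mid\sigma_0^2)=\E(\sigma_{t-1}^2\mid\sigma_0^2)=m_{t-1}$) yields
\[
m_t=\E(\omega_t\mid\sigma_0^2)+(\alpha+\beta)\,m_{t-1}.
\]
The boundedness assumption $\omega_t\le M$ almost surely gives $\E(\omega_t\mid\sigma_0^2)\le M$, so writing $\rho:=\alpha+\beta\in(0,1)$ we obtain the key inequality $m_t\le M+\rho\,m_{t-1}$ for $t\ge2$, with initial value fixed by $\sigma_1^2=\sigma_0^2$, i.e.\ $m_1=\sigma_0^2$.

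Finally I would iterate this inequality down to $t=1$, which produces
\[
m_t\le M\sum_{k=0}^{t-2}\rho^{k}+\rho^{\,t-1}\sigma_0^2
\le\frac{M}{1-\rho}+\sigma_0^2
=\frac{M}{1-\alpha-\beta}+\sigma_0^2,
\]
where the second bound uses $\sum_{k=0}^{t-2}\rho^k\le\sum_{k=0}^{\infty}\rho^k=1/(1-\rho)$ together with $\rho^{\,t-1}\le1$ and $M\ge0$. Combined with the reduction in the first step, this is exactly the asserted bound. The only point requiring care is the bookkeeping of the conditioning and the initial condition: everything must be conditioned on $\sigma_0^2$ throughout (not on the full filtration), the tower property must be invoked correctly when replacing $\E(y_{t-1}^2\mid\sigma_0^2)$ by $m_{t-1}$, and the iteration must terminate at $m_1=\sigma_0^2$ rather than at some unconditional stationary value. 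No stationarity of the $\sigma_t^2$ process is needed; the bound holds precisely because the boundedness of $\omega_t$ replaces the usual constant intercept, and the geometric contraction by $\rho<1$ does the rest.
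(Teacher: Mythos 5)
Your proof is correct, and it rests on exactly the same ingredients as the paper's: the tower property, the bound $\omega_t\le M$, the initial condition $\sigma_1^2=\sigma_0^2$, and a geometric series driven by $\alpha+\beta<1$. The organization, however, is genuinely cleaner. The paper works backward through the nested conditional expectations $\E(y_t^2\mid\F_{t-1}),\E(y_t^2\mid\F_{t-2}),\dots,\E(y_t^2\mid\F_0)$, carrying a two-term remainder of the form $(\alpha+\beta)^{k-1}\bigl(\alpha\sigma_{t-k}^2+\beta y_{t-k}^2\bigr)$ and handling the cases $t=k$ and $t>k$ separately in an induction on the conditioning depth $k$. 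You instead condition on $\sigma_0^2$ once and for all, note that $\E(y_{s}^2\mid\sigma_0^2)=\E(\sigma_{s}^2\mid\sigma_0^2)$, and thereby collapse everything to the scalar affine recursion $m_t\le M+(\alpha+\beta)m_{t-1}$ with $m_1=\sigma_0^2$, which is solved in one line. This buys brevity and robustness: there is no two-term carry and no case split, so there is less room for the kind of harmless bookkeeping slips that appear in the paper's version (the roles of $\alpha$ and $\beta$ are swapped relative to the model equation, and the final display writes $\V(y_t^2\mid\sigma_0^2)$ for $\V(y_t\mid\sigma_0^2)$ — neither affects the result since only $\alpha+\beta$ matters). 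You also use the exact identity $\V(y_t\mid\sigma_0^2)=\E(\sigma_t^2\mid\sigma_0^2)$ from the martingale-difference property, where the paper settles for the one-sided inequality $\V(y_t\mid\sigma_0^2)\le\E(y_t^2\mid\F_0)$; both suffice, but yours makes clear the bound controls the conditional second moment itself. The one technical point common to both arguments is that the filtration must include $\sigma_0^2$ (so that $\sigma_{t-1}^2$ is measurable with respect to the conditioning field); the paper handles this by adjoining $\F_0=\sigma(\sigma_0^2)$, and your argument implicitly needs the same convention.
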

The proof can be found in the Appendix.
The conditions in the theorem prevent the volatility $\sigma_t^2$ from exploding and not cause
numerical issues in training.
The condition that $\alpha>0,\beta>0$, $\alpha+\beta<1$ is standard in the GARCH literature and easy to impose. The second condition, i.e. the finite recurrent component condition, is imposed by using the bounded ReLU activation function (to bound $h_t$), and assuming a bounded support for $\beta_0$ and $\beta_1$ (we used uniform U(0,0.5) for these two parameters).

\section{Bayesian inference}\label{Sec:Bayes}
This section discusses Bayesian estimation and inference for RECH models.
We are interested in sampling from the posterior distribution
\bea
\label{eqn:Bayes_SSM}
\pi(\theta) = p(\theta|y_{1:T}) = \frac{p(y_{1:T}|\theta)p(\theta)}{p(y_{1:T})},
\eea
where $p(y_{1:T}|\theta)$ is the likelihood function, $p(\theta)$ is the prior and $p(y_{1:T}) = \int_{\Theta}p(y_{1:T}|\theta)p(\theta)d\theta$ is the marginal likelihood. Recall that the vector of model parameters $\theta$ consists of the recurrent and GARCH parameters. For example, the SRN-GARCH specification in \eqref{RNN-GARCH1}-\eqref{RNN-GARCH4} has the nine parameters $\theta=(\beta_0,\beta_1,\alpha,\beta,v_0,v_1,v_2,w,b)$.

\subsection{Sequential Monte Carlo (SMC)}
The SMC method is an attractive approach for Bayesian inference and forecasting in volatility modelling \citep{li:2019}.
SMC can sample efficiently from non-standard posteriors, provides the marginal likelihood estimate as a by-product, and is a convenient way for computing one-step-ahead forecasts.
In order to sample from the posterior $\pi(\theta)$, the SMC method \citep{Neal:2001,DelMoral:2006,Chopin:2002} first samples a set of $M$ weighted particles $\{W^j_0,\theta_0^j\}^M_{j=1}$ from an easy-to-sample distribution $\pi_0(\theta)$, such as the prior $p(\theta)$, and then traverses these particles through intermediate distributions $\pi_t(\theta), \;\; t=1,...,K$, which become the posterior distribution $\pi(\theta)$, i.e. $\pi_K(\theta)=\pi(\theta)$.
In our article, we set $\pi_0(\theta)=p(\theta)$ as the prior $p(\theta)$ if it is possible to sample from $p(\theta)$.
There are two common ways to design such a sequence of intermediate distributions: likelihood annealing \citep{Neal:2001} and data annealing \citep{Chopin:2002}.
The SMC with likelihood annealing uses the following intermediate distributions
\bea
\pi_t(\theta):=\pi_t(\theta|y_{1:T}) \propto p(y_{1:T}|\theta)^{\gamma_t}p(\theta),
\label{eqn:lik_anneal}
\eea
where $\gamma_t$ is referred to as the temperature level and $0 = \gamma_0 < \gamma_1 < \gamma_2 < ... < \gamma_K=1$.

\begin{algorithm}
	\caption{SMC with likelihood annealing for RECH models}
	\label{alg:lik_annealing}
		1. Sample  $\theta^j_0 \sim p(\theta)$ and set  $W_0^j=1/M$ for $j=1...M$ \\
		2. \textbf{For} $t=1,...,K$,
		\begin{itemize}
			\item[] \textbf{Step 1: }  \bol{Resampling}: Compute the unnormalized weights
			\beq\label{eq: SMC llh re-weight}
			w_t^j = W_{t-1}^j\frac{p(y_{1:T}|\theta_{t-1}^j)^{\gamma_t}p(\theta_{t-1}^j)}{p(y_{1:T}|\theta_{t-1}^j)^{\gamma_{t-1}}p(\theta_{t-1}^j)} = W_{t-1}^jp(y_{1:T}|\theta_{t-1}^j)^{\gamma_t - \gamma_{t-1}}, \; \; j=1,...,M
			\eeq
			and set the new normalized weights
			\bea
				W^j_t = \frac{w^j_t}{\sum_{s=1}^{M}w^s_t}, \; \; j=1,...,M.
			\eea			
			\item[] \textbf{Step 2: } Compute the effective sample size (ESS)
			\bea
				\text{ESS} = \frac{1}{\sum_{j=1}^{M} \left(W_t^j\right)^2}.
				\label{eq:ESS}
			\eea
			\begin{itemize}
			\item[] \textbf{if} $\text{ESS} < c M$ for some $0<c<1$, \textbf{then}
			\begin{itemize}
				\item[(i)] \bol{Resampling}: Resample from $\{\theta_{t-1}^j\}_{j=1}^M$ using the weights $\{W_{t}^j\}^M_{j=1}$, and then set $W_t^j=1/M$ for $j=1...M$, to obtain the new equally-weighted particles $\{\theta_{t}^j,W_{t}^j\}^M_{j=1}$.
				\item[(ii)] \bol{Markov move}: For each $j=1,...,M$, move the sample $\theta_t^j$ according to $N_{\text{lik}}$ random walk Metropolis-Hasting steps:
				\begin{itemize}
					\item[(a)] Generate a proposal $\theta_t^{j\prime}$ from a multivariate normal distribution $\N(\theta_t^j,\Sigma_t)$ with $\Sigma_t$ the covariance matrix.
					\item[(b)] Set $\theta_t^j = \theta_t^{j \prime}$ with the probability
					\bea
					\text{min}\left(1,\frac{p(y_{1:T}|\theta_t^{j \prime})^{\gamma_t}p(\theta_t^{j \prime})}{p(y_{1:T}|\theta_t^{j})^{\gamma_t}p(\theta_t^{j})}\right);
					\eea
					otherwise keep $\theta_t^j $ unchanged.
				\end{itemize}
			\end{itemize}
			\textbf{end}
			\end{itemize}
		\end{itemize}
		3. The log of the estimated marginal likelihood is
		\bea
		\text{log}\widehat{p(y_{1:T})} = \sum^K_{t=1} \text{log}\left(\sum_{j=1}^M w_t^j\right).
		\label{eq:marllh}
		\eea
\end{algorithm}

The SMC method consists of three main steps: reweighting, resampling and a Markov move.
There are various ways to implement SMC in practice; here we briefly present one of these.
At the begining of iteration $t$, the set of weighted particles $\{W_{t-1}^j,\theta_{t-1}^j\}^M_{j=1}$ that approximate the intermediate distribution $\pi_{t-1}(\theta)$ is reweighted to approximate the target $\pi_{t}(\theta)$.
The efficiency of these weighted particles as a representation of $\pi_{t}(\theta)$ is often measured by the effective sample size (ESS) \citep{Robert:1998,LiuChen:1998} defined in \eqref{eq:ESS}.
If the ESS is below a prespecified threshold, the particles are resampled; the resulting equally-weighted samples are then refreshed by a Markov kernel whose invariant distribution is $\pi_{t}(\theta)$.
Algorithm \ref{alg:lik_annealing} summarizes this SMC using the likelihood annealing method.
We follow \cite{Dang:2018} and choose the tempering sequence $\gamma_t$ adaptively to ensure a sufficient level of particle efficiency by selecting the next value of $\gamma_t$ such that ESS stays above a threshold.

SMC with likelihood annealing sampler is suitable for in-sample analysis, as it uses the sequence of distributions in \eqref{eqn:lik_anneal} which requires the full training data $y_{1:T}$ to be available. For out-of-sample rolling forecasts where the model parameters $\theta$ are updated once new data arrive, it is necessary to use SMC with the data annealing \citep{Chopin:2002}. This SMC sampler generates weighted particles from the following sequence of distributions
\bea
\pi_t(\theta):=\pi_t(\theta|y_{1:t}) \propto p(y_{1:t}|\theta)p(\theta)\propto \pi_{t-1}(\theta)p(y_t|\theta,y_{1:t-1}),
\label{eqn:data_anneal}
\eea
with $y_{1:t}$ the data available up to time $t$.
The unnormalized weights at the SMC step $t$ in \eqref{eq: SMC llh re-weight} become
\begin{equation}
w_t^j = W^j_{t-1}\frac{p(y_{1:t}|\theta^j_{t-1})p(\theta^j_{t-1})}{p(y_{1:t-1}|\theta^j_{t-1})p(\theta^j_{t-1})} = W^j_{t-1}p(y_t|y_{1:t-1},\theta^j_{t-1}), \; \; j=1,...,M.
\end{equation}
Algorithm \ref{alg:data_annealing} in the Appendix summarizes SMC with data annealing.
For RECH models, we use SMC with likelihood annealing for in-sample Bayesian analysis, and SMC with data annealing for out-of-sample analysis and forecasting.

\subsection{Model choice by marginal likelihood}\label{sec:marllh}
The marginal likelihood is often used to choose between models using the Bayes factor \citep{jeffreys:1935,Kass:1995}. In order to compare the relative performance between two models $M_1$ and $M_2$ on data $y_{1:T}$, we can use the Bayes factor
\beq
BF_{M_1,M2}=\frac{p(y_{1:T}|M_1)}{p(y_{1:T}|M_2)}.
\label{eq:bayesfactor}
\eeq
The larger the Bayes factor $BF_{M_1,M2}$, the stronger evidence that $M_1$ is more strongly supported by the data than $M_2$.
We note that the SMC with likelihood annealing sampler in the previous section provides an efficient way to compute the marginal likelihood.

\subsection{Runtime of the SMC with likelihood annealing sampler}
As SMC is parallelizable, the running time depends on how the algorithm is parallelized.
For example, we can run the algorithm on a single multi-core machine or multiple multi-core cluster machines. Table \ref{tab:cpu} shows the runtime of the SMC sampler, with $M=1000$ and $M=10000$ particles, when sampling the GARCH and SRN-GARCH models using only one core and six cores (numbers in parentheses). We run all examples on a standard laptop with moderate specification: Intel Core i7, 16GB RAM, 2.2GHz and 6 cores. We use
$M=1000$ in this paper as this value is sufficient to obtain consistent estimation results.

For comparison, the table also shows the running time of the MCMC sampler for the GARCH model by the R package \texttt{bayesGARCH} of \cite{Ardia:2010} with two different numbers of iterations $N=20,000$ and $N=200,000$, and the runtime of the MCMC sampler for the SV model using the R package \texttt{stochvol} of \cite{Hosszejni:2020} with $N=50,000$ and $N=500,000$.
These values of $M$ and $N$ are selected such that the standard errors of the SMC estimators (characterized by the effective sample size) are similar to that of the MCMC estimators (measured by the Integrated Autocorrelation Time, IACT). Here, we use the CODA R package of \cite{Plummer:2006} to compute the IACT of the MCMC chains obtained from the \texttt{bayesGARCH} and \texttt{stochvol} packages.

\begin{table}[h!]
	\small
	\begin{center}
		\begin{tabular}{c|cccccc}
			\hline\hline			
			\rule{0pt}{3ex}
			&\multicolumn{2}{c}{SMC}&\multicolumn{2}{c}{bayesGARCH}&\multicolumn{2}{c}{stochvol}\\
			&$M=1000$&$M=10000$&$N=20000$&$N=200000$&$N=50000$&$N=500000$\\
			\hline
			\rule{0pt}{3ex}
			SRN-GARCH	&150&1568 &&&&\\
			          	&(73)&(475)&&&&\\
			\rule{0pt}{3ex}
			GARCH			&24&154&38&405&&\\
			     			&(34)&(400)&&&&\\
			\rule{0pt}{3ex}
			SV					&&&&&133&1398\\
			\hline\hline
		\end{tabular}
	\end{center}
	\caption{The running time (in seconds) for the GARCH, SRN-GARCH and SV models for analysing the SP500 dataset, using various sampling methods. For the SMC sampler, the numbers in parentheses show the corresponding runtime when using all 6 cores.}
	\label{tab:cpu}
\end{table}

It is important to note that, unlike the MCMC sampler, SMC is parallelizable and hence its runtime can be reduced significantly when running on a computing cluster.
As shown in Table \ref{tab:cpu}, the SMC sampler for SRN-GARCH is quite computationally expensive when using a single CPU core. However, its runtime is significantly reduced when running in parallel with six cores. In all of our examples in Section \ref{sec:simulation and applications}, SMC was run on a high performance computing cluster.

\section{Simulation study and applications}\label{sec:simulation and applications}
This section evaluates the in-sample and out-of-sample performance of RECH models on simulation and stock return datasets.
We use the SMC with likelihood annealing to perform in-sample Bayesian inference and the SMC with data annealing to obtain one-step-ahead forecasts.
Table \ref{tab:implementation} lists our implementation settings of the SMC samplers.
\begin{table}[h]
	\begin{center}
		\begin{tabular}{clc}
			\hline\hline
			\rule{0pt}{3ex}
			\bol{Variable}      & \bol{Description}   &\bol{Value}\\
			\hline
			\rule{0pt}{3ex}
			$K$& Number of annealing levels    & 10000 \\
			$M$& Number of particles & 10000 \\
			$c$& Constant of the ESS threshold & 0.8 \\
			$N_{\text{lik}}$ & Number of Markov moves in the SMC with likelihood annealing& 30\\
			$N_{\text{data}}$ & Number of Markov moves in the SMC with data annealing& 30\\
			\hline\hline
		\end{tabular}
	\end{center}
	\caption{Implementation settings of the SMC samplers.}
	\label{tab:implementation}
\end{table}

We now discuss the selection of the prior distributions for RECH models. We use a normal prior with a zero mean and variance $0.1$ for the recurrence parameters $(v_0,v_1,v_2,w,b)$; as empirical results from the deep learning literature show that the values of the weights of neural networks are often small. As a linear activation function for $g(\cdot)$ is used, we put the uniform prior $U(0,0.5)$ on $\beta_0$ and $\beta_1$ to impose the positivity of $\omega_t$.
For the GARCH parameters, we choose the same priors as we use for the corresponding GARCH-type models, as summarized in Table \ref{table:prior}.

\begin{table}[ht]
	\begin{center}
		\begin{tabular}{c c c c c c c c}
			\hline\hline
			\multicolumn{3}{c}{\bol{GARCH(1,1)}} &\multicolumn{3}{c}{\bol{GJR(1,1)}} &\multicolumn{2}{c}{\bol{EGARCH(1,1)}}\\
			\cline{1-2}
			\cline{4-5}
			\cline{7-8}
			\rule{0pt}{3ex}
			Parameter       &Prior                    &   &Parameter       &Prior &         &Parameter       &Prior \\
			\hline
			\rule{0pt}{3ex}
			$\omega$  & $U(0,10)$ && $\omega$ & $U(0,10)$ & &$\omega$ & $\N(0,1)$ \\
			$\alpha$  & $U(0,1)$  && $\alpha$ & $U(0,1)$  & &$\alpha$ & $\N(0,1)$ \\
			$\beta$   & $U(0,1)$  && $\beta$  & $U(0,1)$  & &$\beta$  & $U(0,1)$ \\
			&           && $\gamma$ & $\N(0,0.1)$  & & $\gamma$    & $\N(0,0.1)$ \\
			\hline\hline
		\end{tabular}
	\end{center}
	\caption{Prior distributions for the parameters in the GARCH(1,1), GJR(1,1) and EGARCH(1,1) models. The notations $U$ and $\N$ denote the Uniform and Gaussian distributions, respectively.}
	\label{table:prior}
\end{table}

Next, we discuss the score metrics used to evaluate the out-of-sample performance. Denote by $D_{\text{test}}$ a test dataset, $T_{\text{test}}$ the number of observations in $D_{\text{test}}$ and $ \wh\theta$ the posterior mean estimate of $\theta$; we use four predictive scores to measure out of sample performance: the partial predictive score (PPS), the number of violations (\#Vio.), the quantile score (QS) and the hit percentage (\%Hit) to measure the out-of-sample performance. The PPS \citep{Gneiting2007} is evaluated on the test dataset $D_{test}$ as
\[PPS:=-\frac{1}{T_{test}}\sum_{D_{test}} \log \; p(y_t|y_{1:t-1},\wh\theta).\]
The model with smallest PPS is preferred.
The \#Vio. is defined as the number of times over the test data $D_{test}$ that the observation $y_t$ is outside its 99\% one-step-ahead forecast interval.
One of the main applications of volatility modelling is to forecast the Value at Risk (VaR). The $\alpha$-VaR is defined as the $\alpha$-quantile of the one-step-ahead forecast distribution $p(y_t|y_{1:t-1},\wh\theta)$.
The performance of a method producing VaR forecasts is often measured by the quantile score \citep{Taylor:2017} defined as
\[QS:=\frac{1}{T_{test}}\sum_{D_{test}} (\alpha-I_{y_t\leq q_{t,\alpha}})(y_t-q_{t,\alpha}),\]
where $q_{t,\alpha}$ is the $\alpha$-VaR forecast of $y_t$, conditional on $y_{1:t-1}$.
The smaller the quantile score, the better the VaR forecast. The \%Hit \citep{Taylor:2017} is defined as the percentage of the $y_t$ in the test data that is below its $\alpha$-VaR forecast.
The \%Hit is expected to be close to $\alpha$, if the model predicts well.
We note that these predictive performance measures complement each other. For example, it is possible to make the number of violations small by increasing the forecast volatility, but the PPS and QS scores then increase. A volatility model minimizing all three predictive scores, and having a hit percentage close to $\alpha$, is arguably the preferred one.

For the simulation data, given the true volatility $\sigma_t$, we follow \cite{Hansen:2005} and use six additional predictive scores as summarized in Table \ref{tab:forecast_score}.
\begin{table}[h]
	\begin{center}
		\footnotesize
		\begin{tabular}{ccc}
			\hline\hline
			\rule{0pt}{3ex}
			\bol{Score}         &&\bol{Definition}           \\
			\hline
			\rule{0pt}{3ex}
			$\text{MSE}_1$   &&$T_{test}^{-1}\sum_{D_{test}} (\sigma_t -\widehat{\sigma}_t)^2$  \\
			&&\\
			$\text{MSE}_2$   &&$T_{test}^{-1}\sum_{D_{test}} (\sigma_t^2 -\widehat{\sigma}_t^2)^2$  \\
			&&\\
			$\text{MAE}_1$   &&$T_{test}^{-1}\sum_{D_{test}} |\sigma_t -\widehat{\sigma}_t|$  \\
			&&\\
			$\text{MAE}_2$   &&$T_{test}^{-1}\sum_{D_{test}} |\sigma_t^2 -\widehat{\sigma}_t^2|$  \\
			&&\\
			$\text{QLIKE}$   &&$T_{test}^{-1}\sum_{D_{test}} \left(\text{log}(\widehat{\sigma}_t^2) + \sigma_t^2 \widehat{\sigma}_t^{-2}\right)$  \\
			&&\\
			$\text{R}^2\text{LOG}$   &&$T_{test}^{-1}\sum_{D_{test}} \left[\text{log}(\sigma_t^2 \widehat{\sigma}_t^{-2})\right]^2$  \\
			\hline\hline
		\end{tabular}
	\end{center}
	\caption{Definition of the predictive scores to measure the out-of-sample performance on simulation and index data. Here, $\widehat{\sigma}_t$ is an estimate of the volatility $\sigma_t$.}
	\label{tab:forecast_score}
\end{table}

\subsection{Simulation studies}\label{sec:Simulation}
\subsubsection{Simulation study I (SIM I)}
We generated a time series of $2000$ observations from the GARCH(1,1) model
\begin{subequations}
\begin{align}
y_t &= \sigma_t\eps_t,\;\;\eps_t\sim\N(0,1),\;\;\;\;t=1,...,T,\label{eq:Sim1GARCH1}\\
\sigma_t^2 &= 0.05 + 0.18 y_{t-1}^2 + 0.8 \sigma_{t-1}^2 ,\;\;\;\;t=2,...,T, \;\;\;\;\sigma^2_1 = 0.1.\label{eq:Sim1GARCH2}
\end{align}
\end{subequations}
The first $1000$ observations are used for model estimation and the last $1000$ observations for out-of-sample analysis. Table \ref{tab:Sim_params} shows the posterior means and the posterior standard deviations of the GARCH(1,1) and SRN-GARCH model parameters, obtained from the SMC using likelihood annealing.
Figure \ref{f:Sim_volatility_density_1} plots the estimated volatility together with the true volatility of the simulated data, i.e. the true values $\sigma_t^2$ generated from equation \eqref{eq:Sim1GARCH2}.

\begin{table}[h]
	\centering
	\footnotesize
	\begin{tabular}{rcccccccc}
		\hline\hline
		\rule{0pt}{3ex}
		& $\omega$ & $\alpha$ & $\beta$ &$\beta_0$&$\beta_1$&$v_1$&$v_2$&log ML\\
		\hline
		\rule{0pt}{3ex}
		\rule{0pt}{3ex}
		\multirow{2}{*}{}GARCH   &\bol{0.048} &\bol{0.178}  &\bol{0.791}   & & &&& $-1507.2$   \\
		&(0.022) &(0.029) &(0.035) &&& && (0.118)\\
		
		&&&&&&&&\\
		\rule{0pt}{3ex}
		\multirow{2}{*}{}SRN-GARCH &  &\bol{0.150}  &\bol{0.806} & \bol{0.056} &\bol{0.232}&\bol{0.154}&$-\bol{0.248}$&$-1507.3$\\
		& &(0.028) &(0.029)  &(0.022) &(0.152)  &(0.314)&(0.378)&(0.152) \\
		\hline\hline
	\end{tabular}
	\caption{SIM I: Posterior means (in bold) of the GARCH and SRN-GARCH model parameters with the posterior standard deviations in brackets. The last column shows the natural logarithms of the estimated marginal likelihood with the Monte Carlo standard errors in brackets, averaged over 10 different runs of the likelihood annealing SMC.}
	\label{tab:Sim_params}
\end{table}

\begin{figure}[h]
	\centering
	\includegraphics[width=1\columnwidth]{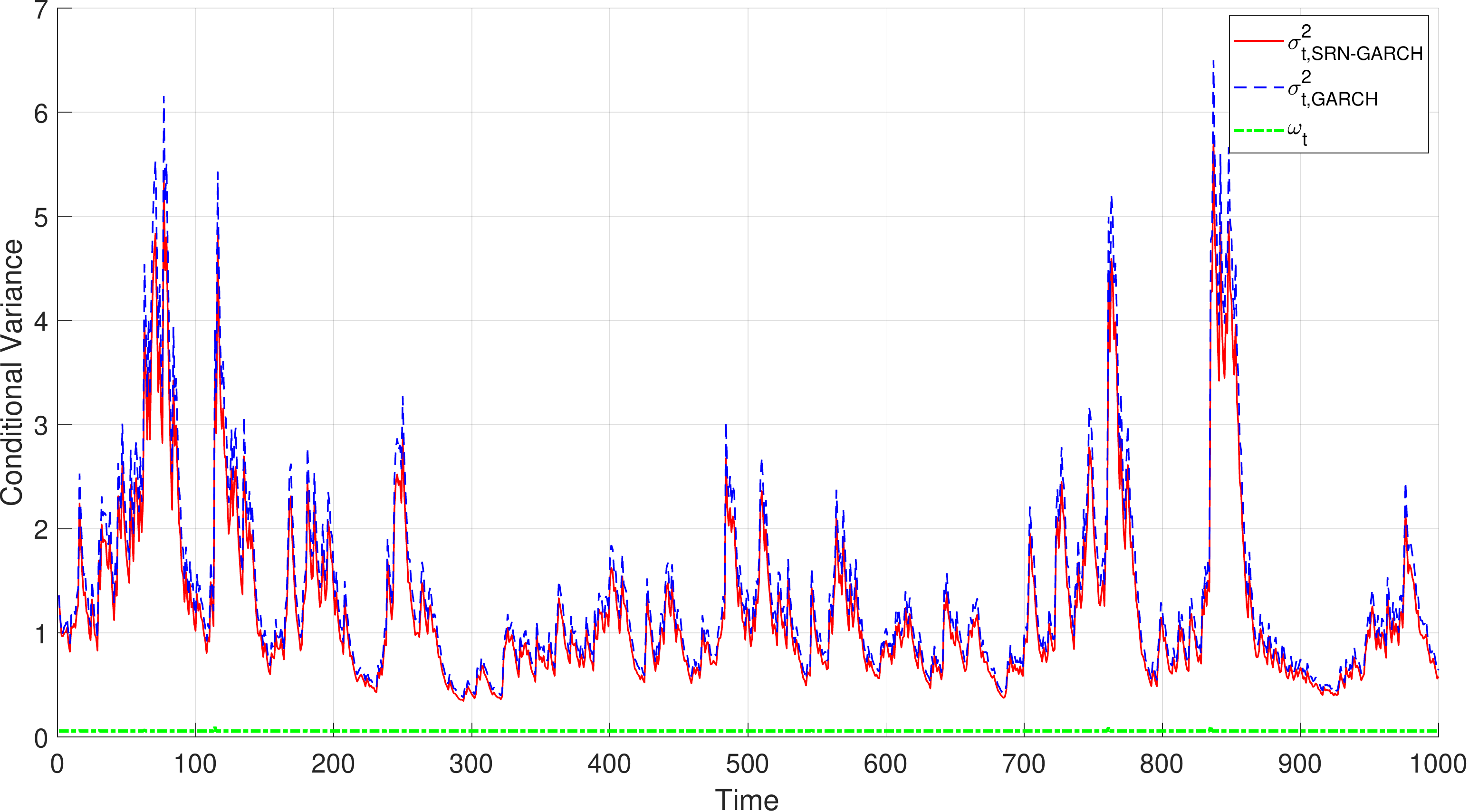}
	\caption{SIM I: The true conditional variance (dashed line) and the estimated conditional variance (solid line) using the SRN-GARCH model. The bottom line shows the values of the recurrent component $\omega_t$ of the SRN-GARCH specification at all time points. (The figure is better viewed in colour).}
	\label{f:Sim_volatility_density_1}
\end{figure}

The estimation results in Table \ref{tab:Sim_params} and the volatility plots in Figure \ref{f:Sim_volatility_density_1} suggest some important implications.
First, the GARCH parameters $\alpha$ and $\beta$ of the SRN-GARCH model in Table \ref{tab:Sim_params} are close to those of the GARCH model and the volatility estimated from the RECH model in Figure \ref{f:Sim_volatility_density_1} are close to the true volatility, implying that the estimated RECH model is close to the GARCH(1,1) model if the GARCH(1,1) is the true model. The almost identical estimates of the marginal likelihood in the last column of Table \ref{tab:Sim_params} also indicate that the GARCH(1,1) and RECH models provide an equally good fit to the data.
Second, the coefficient $\beta_1$ is statistically insignificant, i.e. the posterior mean is less than two standard deviations from zero, and the values of the recurrent component $\omega_t$ are consistently small at all time points as Figure \ref{f:Sim_volatility_density_1} shows, implying that there are no volatility effects other than linearity that are captured by the recurrent component $\omega_t$ and that the recurrent component $\omega_t$ contributes very little to the conditional variance at all time steps.
Additionally, the weights $v_1$ and $v_2$ of the inputs of the recurrent component are statistically insignificant, suggesting that there is no evidence of non-linearity, long range dependence and leverage effects within the data generating process GARCH(1,1).

\subsubsection{Simulation study II (SIM II)}
We generated a time series of $2000$ observations from the following non-linear GARCH-type model
\begin{subequations}
	\begin{align}
	y_t &= \sigma_t\eps_t,\;\;\eps_t\sim\N(0,1),\;\;t=1,2,...,T,\label{eq:Sim2GARCH1}\\
	\sigma_t^2 &= 0.05 + 0.10 y_{t-1}^2 + 0.21 \frac{y_{t-1}^2}{1+y_{t-1}^2} + 0.8 \sigma_{t-1}^2 +0.11 \frac{\sigma_{t-1}^2}{1+\sigma_{t-1}^2}+0.21 I_{[{y}_{t-1}<0]}y_{t-1}^2+0.1 \frac{I_{[{y}_{t-1}<0]}}{1+e^{-y_{t-1}^2}} .\label{eq:Sim2GARCH2}
	\end{align}
\end{subequations}
The model in \eqref{eq:Sim2GARCH1}-\eqref{eq:Sim2GARCH2} modifies the GJR(1,1) model by adding non-linear transformations of the past observation, conditional variance and leverage term to the equation of the conditional variance. The volatility evolution in \eqref{eq:Sim2GARCH2} suggests that the simulated volatility exhibits highly non-linear effects. The parameters of the model in \eqref{eq:Sim2GARCH1}-\eqref{eq:Sim2GARCH2} are set so that the simulated data somewhat resembles real financial time series data exhibiting both volatility clustering and leverage effects. The first $T = 1000$ observations are used for model estimation and the last $1000$ for out-of-sample analysis. Table \ref{tab:Sim2_params} shows the posterior means and standard deviations of the parameters from the GARCH(1,1), GJR(1,1), EGARCH(1,1) and three RECH counterparts.

\begin{table}[ht!]
	\centering
	\footnotesize
	\begin{tabular}{rccccccccc}
		\hline\hline
		\rule{0pt}{3ex}
		& $\omega$ & $\alpha$ & $\beta$ &$\gamma$&$\beta_0$ &$\beta_1$&$v_1$&$v_2$&Log ML\\
		\hline
		\rule{0pt}{3ex}
		\multirow{2}{*}{}GARCH   &\bol{0.854} &\bol{0.188}  &\bol{0.807}  & & &&&& $-3617.9$   \\
		&(0.124) &(0.018) &(0.018) &&& && & (0.164)\\
		\rule{0pt}{3ex}
		\multirow{2}{*}{}SRN-GARCH & &\bol{0.246}  &\bol{0.556}&&\bol{0.328}  &\bol{0.382}&-\bol{0.525}&\bol{0.396}&$-3614.9^\ast$\\
		&  &(0.048) &(0.158)  &&(0.120)&(0.100)  &(0.253)&(0.232)&(0.232) \\
		&&&&&&&&&\\
		\multirow{2}{*}{}GJR   &\bol{0.846} &\bol{0.204}  &\bol{0.801}   &\bol{-0.012} && &&& $-3620.1$   \\
		&(0.130) &(0.035) &(0.021) &(0.024)&& && & (0.141)\\
		\rule{0pt}{3ex}
		\multirow{2}{*}{}SRN-GJR &  &\bol{0.141}  &\bol{0.557}&\bol{0.179}&\bol{0.354} &\bol{0.373}&-\bol{0.180}&-\bol{0.418}&$-3611.6^\ast$\\
		& &(0.045) &(0.138)  &(0.064)&(0.110) &(0.092)  &(0.294)&(0.172)&(0.211) \\
		&&&&&&&&&\\
		\multirow{2}{*}{}EGARCH   &\bol{0.106} &\bol{0.373}  &\bol{0.975}   &\bol{-0.101} && &&& $-3616.3$   \\
		&(0.027) &(0.041) &(0.005) &(0.026)&& && & (0.147)\\
		\rule{0pt}{3ex}
		\multirow{2}{*}{}SRN-EGARCH &\bol{-0.058}  &\bol{0.450}  &\bol{0.976}&\bol{-0.114} &\bol{0.227}&\bol{0.270}&\bol{-0.028}&\bol{0.211}&$-3613.1^\ast$\\
		&(0.173)  &(0.145) &(0.017)  &(0.037)&(0.140) &(0.139)&(0.361)&(0.360)&(0.202) \\
		\hline\hline
	\end{tabular}
	\caption{SIM II: Posterior means (in bold) of the GARCH and RECH model parameters with the posterior standard deviations in brackets. The last column shows the natural logarithms of the estimated marginal likelihood with the Monte Carlo standard errors in brackets, averaged over 10 different runs of the SMC with likelihood annealing sampler. The asterisks indicate when the Bayes factors strongly support RECH models over their GARCH-type counterparts.}
	
	\label{tab:Sim2_params}
\end{table}

The estimation results from Table \ref{tab:Sim2_params} suggest the following. First, the posterior means of $\alpha$ and $\beta$ from the GARCH and GJR models are close to their true values, suggesting that the GARCH and GJR models can capture the linear serial dependence within the volatility dynamics of the data generating process. The constants $\omega$ of both the GARCH and GJR model are significantly inflated compared to the true value, possibly caused by the non-linear effects that cannot be captured by the GARCH and GJR models. The leverage parameter $\gamma$ of the GJR model is close to zero and statistically insignificant, implying that the GJR model cannot capture the leverage effect in the data generating process. The leverage parameter $\gamma$ of the EGARCH model is more than three standard deviations from zero, implying that the EGARCH model is the only benchmark GARCH-type model that can capture the simulated leverage effect.

Second, the coefficients $\beta_1$ of the SRN-GARCH and SRN-GJR models are more than three standard deviations from zero, implying that there is strong evidence of non-linearity in the volatility dynamics, and that the RNN structure within the recurrent component $\omega_t$ of the SRN-GARCH and SRN-GJR model is able to capture such dependence.
The weight $v_1$ with respect to the leverage input $y_{t-1}$ of the RNN in the SRN-GARCH model is more than two standard deviations from zero and the leverage parameters $\gamma$ of the SRN-GJR and SRN-EGARCH are more than three standard deviation from zero, all suggesting that the three RECH specifications can capture the leverage effects exhibited within the simulated volatility. The recurrent component $\omega_t$ of RECH models is useful in capturing the leverage effects overlooked by the GARCH and GJR models. Interestingly, the coefficient $v_2$ with respect to the input $\sigma_{t-1}^2$ in the SRN-GJR is more than two standard deviations from zero, indicating that the SRN-GJR is able to detect the non-linearity dependence of the past conditional variance on the current conditional variance $\sigma_t^2$, exhibited  within the simulated data generating process.
The analysis above suggests that observing the recurrent parameters of RECH models, e.g. $\beta_1, v_1$ and $v_2$, helps to detect the possible non-linearity effects within the underlying volatility.

Third, the estimates of marginal likelihood in the last column of Table \ref{tab:Sim2_params} show that RECH models consistently have higher marginal likelihood than their GARCH-type counterparts and that the SRN-GJR model provides the best fit to the data. The difference between the log marginal likelihood estimates is equivalent to the Bayes factors of the SRN-GARCH, SRN-GJR and SRN-EGARCH models compared to the GARCH, GJR and EGARCH models of roughly $e^3\approx 20.1$, $e^{8}\approx 3000$ and $e^{3}$, respectively, strongly supporting the RECH models. We note that among the benchmark GARCH-type models, the EGARCH model best fits to the SIM II data.

\begin{figure}[h]
	\centering
	\includegraphics[width=1\columnwidth]{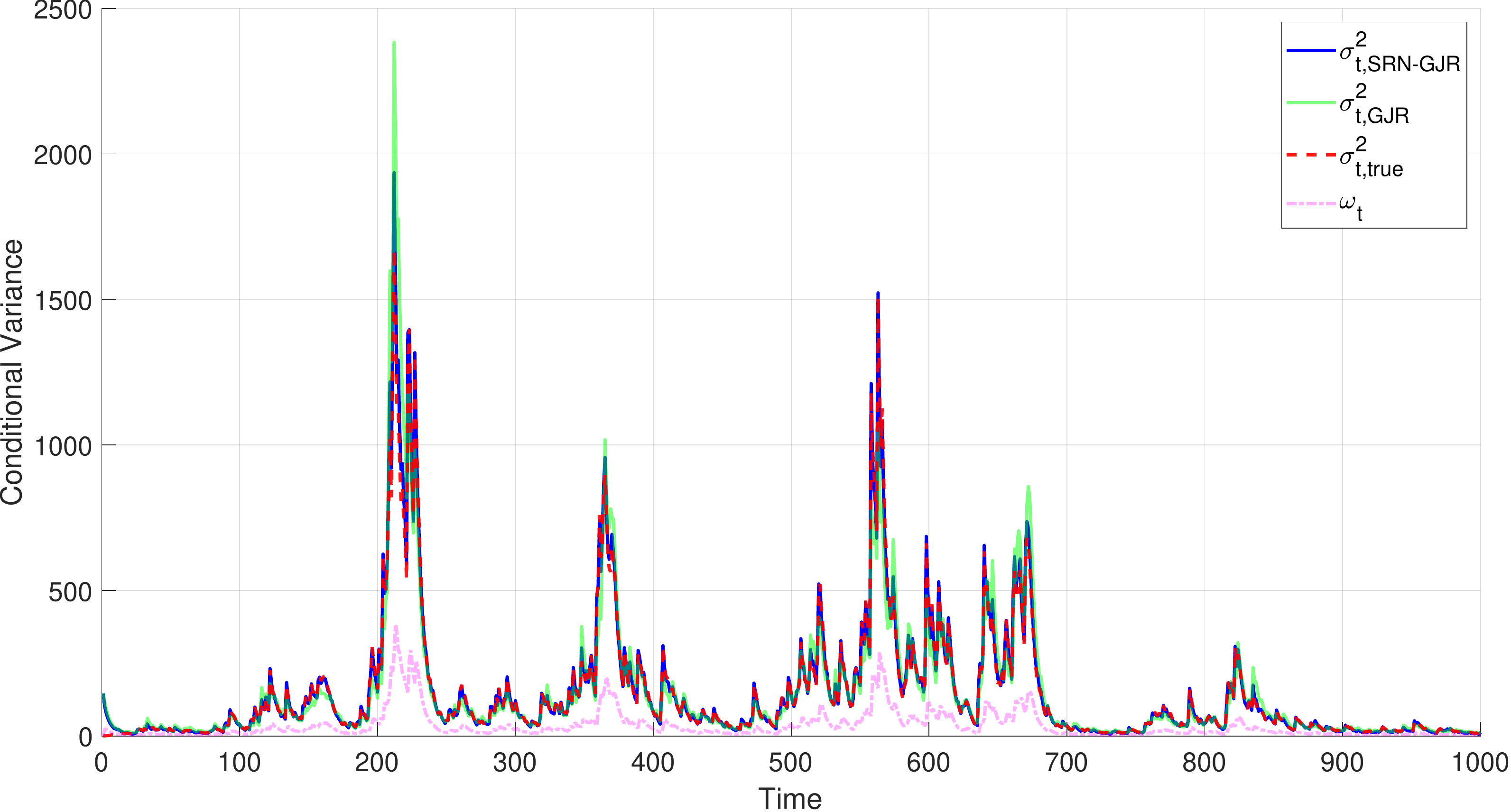}
	\caption{SIM II: The true conditional variance (dashed line) and the estimated conditional variance (solid lines) by the GJR and SRN-GJR models. The bottom line shows the values of the recurrent component $\omega_t$ of the SRN-GJR specification. The SRN-GJR plot appears to trace the true volatility plot better than the GJR plot.
 (The figure is better viewed in colour).}
	\label{f:Sim2_volatility_density_1}
\end{figure}

Figure \ref{f:Sim2_volatility_density_1} plots the values of the recurrent component and estimated volatility of the SRN-GJR model, together with the true volatility, at all time points. Figure \ref{f:Sim2_volatility_density_2} in the Appendix plots the volatility estimated by the GJR model and the true volatility. During the low volatility periods, i.e. time $t$ is between 0-100  or 900-1000, the volatility of the SRN-GJR model are very close to the true volatility, which is also the case for the GJR model in Figure \ref{f:Sim2_volatility_density_2}. During the periods when the volatility changed dramatically and oscillated highly, i.e. $t$ is between 200-250, 350-400 or 500-700, the volatility produced by the SRN-GJR model still tracks the true volatility well, but this is not the case for the GJR model. The GJR model often produces overly-large and overly-small volatility during these periods of abruptly changed volatility. The SRN-GJR model, on the other hand, appears to be able to capture well these changes. The plot of the recurrent component $\omega_t$ at all time points in Figure \ref{f:Sim2_volatility_density_1} shows that it is highly responsive to the changes in the true volatility.

Table \ref{tab:Sim_forecast} reports the forecast performance of the benchmark GARCH-type and the RECH models.
For the SIM I data, the forecast performance of the SRN-GARCH model is very close to that of the GARCH model, in all predictive scores, which again strongly supports the earlier in-sample conclusion that the SRN-GARCH model closely approximates the GARCH model if the GARCH is the true model.
For the SIM II data, Table \ref{tab:Sim_forecast} suggests some important results.
First, the RECH models outperform their GARCH-type counterparts in most of the predictive scores, which is consistent with the in-sample analysis showing that the RECH models fit better than their benchmark GARCH-type counterparts.
Second, the SRN-GJR model has the best forecast performance for all predictive measures and the SRN-EGARCH model also performs well on the SIM II data, which is consistent with the in-sample analysis showing that these two RECH specifications have the highest marginal likelihood estimates. Third, amongst the benchmark GARCH-type models, the EGARCH model has the best predictive performance, compared to the GARCH and GJR models.

\begin{table}[h]
	\begin{center}
		\footnotesize
		\begin{tabular}{rccccccccccc}
			\hline\hline
			\rule{0pt}{3ex}
			&PPS   &\#Vio&QS &\%Hit     &$\text{MSE}_1$   &$\text{MSE}_2$  &$\text{MAE}_1$   &$\text{MAE}_2$ &QLike& $\text{R}^2\text{Log}$ \\
			\hline
			\rule{0pt}{3ex}
			\textbf{SIM I} \\
			GARCH*     &1.794&09   &0.044&0.011  &0.001  &0.020&0.016&0.069&1.689&0.001\\
			\rule{0pt}{3ex}
			SRN-GARCH &1.795&09   &0.044&0.011  &0.001  &0.057 &0.022 &0.101&1.690&0.001\\
			\hline
			\rule{0pt}{3ex}
			\textbf{SIM II} \\
			\rule{0pt}{3ex}
			GARCH      &3.013&14 &0.145&0.008  &0.390  &88.970&0.485 &6.018&4.207&0.046\\
			SRN-GARCH*  &3.009&12 &0.142&0.008  &0.256  &85.151&0.349 &4.739&4.195&0.022\\
			\rule{0pt}{3ex}
			GJR        &3.011&16   &0.144& 0.008 &0.331  &77.430&0.445 &5.503&4.204&0.039\\
			SRN-GJR*    &\bol{3.003}&\bol{10}   &\bol{0.139}&\bol{0.009}  &\bol{0.023}  &\bol{5.071}&\bol{0.114} &\bol{1.339}&\bol{4.186}&\bol{0.003}\\
			\rule{0pt}{3ex}
			EGARCH  &3.005&\bol{10}&0.142&0.006&0.109&22.250&0.251&3.078&4.191&0.013\\
			SRN-EGARCH*&3.004&\bol{10}&0.140&0.008&0.054 &12.876&0.173 &2.153&4.188&0.007\\
			\hline\hline
		\end{tabular}
	\end{center}
	\caption{Simulation: one-step-ahead forecast comparison. For the QS and \%Hit measures, the results are calculated at the $1\%$-quantile. For the SIM II data, the bold numbers denote the best scores. For each pair of the RECH and GARCH-type models, the asterisk indicates the models having better forecast performance.}
	\label{tab:Sim_forecast}
\end{table}

\subsubsection{Simulation study III (SIM III)}
This simulation study examines if RECH models are able to simulate the long-memory volatility,
by fitting the FIGARCH$(1,d,1)$ model of \cite{Baillie:1996} to data generated from RECH models.
The FIGARCH$(1,d,1)$ model is defined as:
\begin{align}
		y_t &= \sigma_t\eps_t,\;\;\eps_t\sim \N(0,1), \;\;t=1,2,...,T, \nonumber\\
   	\sigma_t^2 &= \omega + \beta \sigma_{t-1}^2 + \left[ 1-\beta L - (1-\psi L)(1-L)^d\right]y_t^2, \nonumber
\end{align}
where $ d \in (0,1)$ is the fractional integrated parameter and $L$ is the backshift operator. The parameter $\Theta = (\omega,\psi,d,\beta)$.
When $d=0$, the FIGARCH becomes a GARCH model.
When $d>0$ and is close to 1, the persistence of the past shocks in the FIGARCH process decays at a slow hyperbolic rate \citep{Baillie:1996};
hence the FIGARCH process exhibits long-memory effects in its volatility dynamics.
		
\begin{table}[h]
	\centering
	\small
	\begin{tabular}{rccccccccc}
		\hline\hline
		\rule{0pt}{3ex}
		               & $\alpha$ & $\beta$ &$\beta_0$ &$\beta_1$& $v_0$ & $v_1$    & $v_2$ & $w$&$b$ \\
		\hline
		\rule{0pt}{3ex}
		$\theta_1$   & 0.058    & 0.681   & 0.068    & 0.418   & $-0.018$ & $-0.430$ & 0.524 & 0.161    & $-0.173$\\
		\rule{0pt}{3ex}
		$\theta_2$   & 0.071    & 0.690   & 0.075    & 0.362   & 0.062    & $-0.422$ & 0.538 & 0.087    & $-0.130$\\
		\rule{0pt}{3ex}
		$\theta_3$  & 0.076    & 0.744   & 0.016    & 0.388   & $-0.075$ & $-0.574$ & 0.400 & $-0.040$ & $-0.023$\\
		\rule{0pt}{3ex}
		$\theta_4$ & 0.057    & 0.562   & 0.101    & 0.413   & 0.015    & $-0.380$ & 0.652 & 0.270    & $-0.170$\\
		\hline\hline
	\end{tabular}
	\caption{SIM III: The parameters used in the DGP.}
	\label{tab:sim3_true_values}
\end{table}

We use the SRN-GARCH model as the true data generating process (DGP)
with the four different parameter sets  $\theta_i, \; i=1,..,4$, listed in Table \ref{tab:sim3_true_values}.
These are the estimated parameters obtained in Section \ref{sec:applications} when the SRN-GARCH model is fitted to four real datasets.
For each parameter set $\theta_i$, $500$ datasets of $T=3000$ observations are generated from each of two different specifications of the SRN-GARCH model: $\beta_1 = 0$ and $\beta_1$ equals to the true values in Table \ref{tab:sim3_true_values}, i.e. $\beta_1 \neq 0$. We note that if $\beta_1 = 0$, the DGP is the GARCH$(1,1)$ model.
To generate each time series, we generate $10,000$ observations and use the last $3,000$ for the simulation data.
We then use the Matlab MFE toolbox \footnote{https://github.com/bashtage/mfe-toolbox/}, with the default settings, to produce the Quasi-Maximum Likelihood Estimate (QMLE) of the parameter $\widehat\Theta_i, \;i=1,..,4$, of the FIGARCH$(1,d,1)$ model.
Table \ref{tab:sim3_params} shows the means and standard deviations averaged over $500$ QMLE estimates of the FIGARCH$(1,d,1)$ parameters.
\begin{table}[H]
	\centering
	\small
	\begin{tabular}{rccccccccc}
		\hline\hline
		\rule{0pt}{3ex}
		               & \multicolumn{4}{c}{DGP is GARCH$(1,1)$ $(\beta_1 = 0)$}              && \multicolumn{4}{c}{DGP is SRN-GARCH $(\beta_1 \neq0)$} \\
		               \cline{2-5}
		               \cline{7-10}
		\rule{0pt}{3ex}
		
		               & $\widehat\omega$ & $\widehat\psi$  &$\widehat{d}$       &$\widehat\beta$  && $\widehat\omega$ & $\widehat\psi$ & $\widehat{d}$     &$\widehat\beta$ \\
		\hline
		\rule{0pt}{3ex}
		$\widehat\Theta_1$  & 0.211    & 0.423   & 0.017    & 0.373   && 0.088    & 0.107 & 0.728       & 0.708   \\
		               & (0.043)  & (0.159) & (0.039)  & (0.127) && (0.014)  & (0.052) & (0.109)   & (0.078) \\
		\rule{0pt}{3ex}
		$\widehat\Theta_2$   & 0.225    & 0.366   & 0.036    & 0.323   && 0.126    & 0.152   & 0.608     & 0.647   \\
	                   & (0.064)  & (0.206) & (0.053)  & (0.155) && (0.020)  & (0.058) & (0.12)    & (0.100) \\
		\rule{0pt}{3ex}
		$\widehat\Theta_3$  & 0.045    & 0.210   & 0.093    & 0.222   && 0.095    & 0.105   & 0.711     & 0.653   \\
					   & (0.018)  & (0.214) & (0.059)  & (0.162) && (0.016)  & (0.053) & (0.110)   & (0.089) \\
		\rule{0pt}{3ex}
		$\widehat\Theta_4$ & 0.235    & 0.437   & 0.002    & 0.376   && 0.096    & 0.124   & 0.706     & 0.711   \\
					   & (0.019)  & (0.127) & (0.014)  & (0.125) && (0.017)  & (0.054) & (0.116)   & (0.084) \\
		\hline\hline
	\end{tabular}
	\caption[]{SIM III: Means and standard deviations (in brackets) of $500$ QMLE estimates of the FIGARCH$(1,d,1)$ model parameters.}
	\label{tab:sim3_params}
\end{table}

The important conclusion from Table \ref{tab:sim3_params} is that the short-memory and long-memory properties of the GARCH$(1,1)$ and SRN-GARCH models, respectively, are distinguishable. When the DGP is the GARCH$(1,1)$, the estimates of the fractional integrated parameter $\widehat{d}$ are insignificant in all cases, suggesting that there is no evidence of the long-memory effects in the volatility of the GARCH$(1,1)$ model.
When the DGP is the SRN-GARCH model, i.e. $\beta_1 \neq 0$, the estimates of the fractional integrated parameter $\widehat{d}$ are close to $1$ in all cases, implying the existence of long-memory in the volatility dynamics of simulated time series, which are generated from SRN-GARCH models. The difference in the QMLE estimates of the parameter $d$
between the two DGPs in Table \ref{tab:sim3_params} implies that the SRN-GARCH model is able to simulate long-memory volatility effects. We observe similar results for the SRN-GJR and SRN-EGARCH models.

\subsection{Applications to stock market returns}\label{sec:applications}
We demonstrate the performance of RECH models using four stock index datasets: the Standard and Poor's $500$ Index (SP500), the Japanese Nikkei 225 Index (N225), the Russell 2000 Index (RUT) and the German stock index (DAX). The datasets were downloaded from the Realized Library of The Oxford-Man Institute\footnote{https://realized.oxford-man.ox.ac.uk/}.
We used the daily closing prices $\{P_t, \ t=1,...,T_P\}$ and calculated the demeaned return process as
\bea
\label{eqn:log-return}
y_{t}=100\left(\log\frac{P_{t+1}}{P_{t}}-\frac1{T_P-1}\sum_{i=1}^{T_P-1}\log\frac{P_{i+1}}{P_{i}}\right),\;\;\;t=1,2,...,T_P-1.
\eea
The length of the four return series is fixed to be $T=4000$, with $T=T_P-1$, and each series is divided into an in-sample period of the first $T_{\text{in}}=2000$ observations and an out-of-sample period of the last $T_{\text{out}}=2000$ observations. Table \ref{tab:data summaries} summarizes the datasets.
\begin{table}[h]
	\begin{center}
		\footnotesize
		\begin{tabular}{ccccc}
			\hline\hline
			\rule{0pt}{3ex}
			&\bol{In-sample Period}         &\bol{Out-of-sample Period}           &$T_{\text{in}}$   & $T_{\text{out}}$\\
			\hline
			\rule{0pt}{3ex}
			SP500 &27 Feb 2004 -- 06 Feb 2012    &06 Feb 2012 -- 24 Jan 2020  &2000    &2000 \\
			N225    &16 Sep 2003 -- 16 Nov 2011    &17 Nov 2011 -- 24 Jan 2020  &2000    &2000                 \\
			RUT     &24 Feb 2004 -- 01 Feb 2012    &02 Feb 2012 -- 24 Jan 2020  &2000    &2000                  \\
			DAX    &06 Aug 2003 -- 02 Nov 2011    &02 Nov 2011 -- 24 Jan 2020   &2000    &2000                 \\
			\hline\hline
		\end{tabular}
	\end{center}
	\caption{Description of the four index datasets. }
	\label{tab:data summaries}
\end{table}

Table \ref{tab:data statistics} reports some descriptive statistics for these four datasets together with the modified R/S test \citep{Lo:1991} for long-range memory in the logarithm of the  squared returns. Lo's modified R/S test is widely used in the financial time series literature; see, e.g., \cite{Lo:1991}, \cite{Giraitis:2003}, \cite{Breidt:1998}.
All the index data exhibit some negative skewness, a high excess kurtosis and high variation. The N225 returns are more skewed and leptokurtic than those of the SP500, RUT and DAX data.
The result of Lo's modified R/S test for long-memory dependence with several different lags $q$ indicates that there is significant evidence of long-memory dependence in the SP500, RUT and DAX stock indices. For the N225 data, however, the evidence of long memory is less clear as the null hypothesis of short memory for the squared returns is not rejected at the 5\% level of significance when $q=20$ and $q=30$.

\begin{table}[H]
	\begin{center}
		\footnotesize
		\begin{tabular}{ccccccclll}
			\hline\hline
			\rule{0pt}{3ex}
			&Min         &Max      &Std         & Skew    & Kurtosis &$V_n(10)$& $V_n(20)$&$V_n(30)$\\
			\hline
			\rule{0pt}{3ex}
			\multirow{2}{*}{}SP500     &$-9.351$   &10.220   &1.307       &$-0.256$     & 12.502&3.188*&2.412*&2.047*\\
			&&&&&&2.664*&2.040*&1.748*\\
			\rule{0pt}{3ex}
			\multirow{2}{*}{}N225    &$-10.563$   &11.658    &1.224       &$-0.585$     & 18.171 &2.768*&2.171*&1.905*\\
			&&&&&&1.956*&1.566&1.415\\
			\rule{0pt}{3ex}
			\multirow{2}{*}{}RUT    &$-8.391$   &8.056    &1.364       &$-0.130$     & 8.764 &3.065*&2.385*&2.055*\\
			&&&&&&2.459*&1.943*&1.691\\
			\rule{0pt}{3ex}
			\multirow{2}{*}{}DAX    &$-7.437$   &9.993    &1.267       &0.115     & 10.960 &3.226*&2.501*&2.146*\\
			&&&&&&2.456*&1.926*&1.670\\
			\hline\hline
		\end{tabular}
	\end{center}
	\caption{Descriptive statistics for the demeaned returns of the SP500, N225, RUT and DAX datasets. $V_n(q), \ q=10, \ 20 \ \text{and} \ 30$, are the test statistics of Lo's modified R/S test of long memory with lag $q$. Upper and lower values of the 3 last columns are Lo's test statistics for absolute and squared returns, respectively. The asterisks indicate significance at the 5\% level.}
	\label{tab:data statistics}
\end{table}

The Realized Library provides different realized measures\footnote{See https://realized.oxford-man.ox.ac.uk/documentation/estimators for the list of the available realized measures} that can be used in financial econometrics as a proxy to the latent $\sigma_t^2$. We use the following six common realized measures including Realized Variance (RV) \citep{Andersen:1998}, Bipower Variation (BV) \citep{Nielsen:2004}, Median Realized Volatility (MedRV) \citep{Andersen:2012}, Realized Kernel Variance \citep{Barndorff&Nielsen:2008} with the Non-Flat Parzen kernal ($\text{RKV}_1$), the Tukey-Hanning kernal ($\text{RKV}_2$) and the Two-Scale/Bartlett kernal ($\text{RKV}_3$), to evaluate the forecast performance of the volatility models using the predictive scores in Table \ref{tab:forecast_score}. \cite{Shephard:2010} give more details about the Realized Library.

Denote by $RV_t$ the realized measure of $\sigma_t^2$ at time $t$. As the realized measures ignore the variation of the prices overnight and sometimes the variation in the first few minutes of the trading day when recorded prices may contain large errors \citep{Shephard:2010}, we follow \cite{Hansen:2005} to scale the realized measure $RV_t$ as
\beq
\label{eqn:realized_measure_scale}
\widetilde\sigma^2_t = \widehat{c} \cdot RV_t \;\; \text{where} \;\;\ \widehat{c} = \dfrac {T_\text{out}^{-1} \sum_{t=T_\text{in}+1}^{T}\left[y_t-\E(y_t|\mathcal F_{t-1})\right]^2}{T_\text{out}^{-1}\sum_{t=T_{in}+1}^{T} RV_t},\;\;\;t=T_\text{in}+1,2,...,T,
\eeq
with $\E(y_t|\mathcal F_{t-1}) = 0$, and use $\widetilde\sigma^2_t$ as the estimate of the latent conditional variance $\sigma_t^2$; see Table \ref{tab:data summaries} for a definition of $T_\text{in}$ and $T_\text{out}$ used in our datasets. See \cite{Martens:2002} and \cite{Fleming:2003} for a similar scaling estimator of the daily volatility.

\subsubsection{In-sample analysis}
Table \ref{tab:Real_params} and \ref{tab:Real_params2} summarize the estimation results of fitting the benchmark GARCH-type models and their RECH counterparts to the SP500, N225, RUT and DAX datasets. The posterior mean estimates and posterior standard deviation estimates are obtained using the SMC with likelihood annealing sampler. We draw the following conclusions from the estimation results.
\begin{table}[H]
	\centering
	\footnotesize
	\begin{tabular}{rccccccccc}
		\hline\hline
		\rule{0pt}{3ex}
		& $\omega$& $\alpha$ & $\beta$ &$\gamma$ &$\beta_0$&  $\beta_1$&$v_1$&$v_2$&Mar.llh\\
		\hline
		\rule{0pt}{3ex}
		\textbf{SP500} & & & & &&&&&\\
		\rule{0pt}{3ex}
		\multirow{2}{*}{} GARCH   & \bol{0.016}  & \bol{0.093}  & \bol{0.894}    &   &&&& &$-2778.3$\\
		&(0.004)&(0.011)&(0.012)&&&&&&(0.113)\\
		\rule{0pt}{3ex}
		\multirow{2}{*}{} SRN-GARCH &   &\bol{0.057} & \bol{0.562} &&\bol{0.101} &\bol{0.413}&\bol{-0.380}&\bol{0.652}& $-2742.3$*\\
		& &(0.011)  & (0.082) &&(0.026) &(0.063) &(0.076)&(0.167)&(0.284)\\
		
		 & & & & & & &&&\\
		\rule{0pt}{3ex}
		\multirow{2}{*}{} GJR &\bol{0.024} &\bol{0.040} &\bol{0.891} &\bol{0.065}& &&&&$-2764.4$\\
		&(0.004) &(0.011) & (0.009)&(0.011) &&&& &(0.121)\\
		
		\rule{0pt}{3ex}
		\multirow{2}{*}{}SRN-GJR  & &\bol{0.011} &\bol{0.685} &\bol{0.109}&\bol{0.078}&\bol{0.349} &\bol{-0.222}&\bol{0.581}&$-2736.6$*\\
	    & &(0.008) &(0.109) &(0.026) &(0.036)&(0.102)&(0.093)&(0.196)&(0.323)\\
	
		 & & & & & &&&&\\
		\rule{0pt}{3ex}
		\multirow{2}{*}{}EGARCH &\bol{0.004}&\bol{0.136} &\bol{0.978} &$\bol{-0.126}$ &&&&&$-2754.8$\\
		&(0.002)  &(0.016) &(0.003) &(0.013) & &&&&(0.133)\\

		\rule{0pt}{3ex}
		\multirow{3}{*}{}SRN-EGARCH  &\bol{-0.196} &\bol{0.106} &\bol{0.972} &-\bol{0.236} &\bol{0.066} &\bol{0.343}&\bol{0.104}&\bol{0.538}&$-2744.0$*\\
		&(0.092)  &(0.030) &(0.016) &(0.044)&(0.027) &(0.099) &(0.103)&(0.198)&(0.225)\\
		\hline
		\rule{0pt}{3ex}
		\textbf{N225} & & & & &&&&&\\
		\rule{0pt}{3ex}
		\multirow{2}{*}{} GARCH   & \bol{0.033}  & \bol{0.138}  & \bol{0.839}    &   &&&& &$-2800.1$\\
		&(0.008)&(0.016)&(0.018)&&&&&&(0.119)\\
		\rule{0pt}{3ex}
		\multirow{2}{*}{} SRN-GARCH &   &\bol{0.076} & \bol{0.744} &&\bol{0.016} &\bol{0.388}&\bol{-0.574}&\bol{0.400}& $-2766.8$*\\
		& &(0.016)  & (0.051) &&(0.013) &(0.076) &(0.156)&(0.147)&(0.279)\\
		
		& & & & & & &&&\\
		\rule{0pt}{3ex}
		\multirow{2}{*}{} GJR &\bol{0.048} &\bol{0.060} &\bol{0.835} &\bol{0.100}& &&&&$-2787.5$\\
		&(0.009) &(0.015) & (0.020)&(0.018) &&&& &(0.121)\\
		
		\rule{0pt}{3ex}
		\multirow{2}{*}{}SRN-GJR  & &\bol{0.066} &\bol{0.734} &\bol{0.029}&\bol{0.027}&\bol{0.386} &\bol{-0.531}&\bol{0.429}&$-2769.2$*\\
		& &(0.021) &(0.061) &(0.036) &(0.015)&(0.081)&(0.146)&(0.183)&(0.313)\\
		
		& & & & & &&&&\\
		\rule{0pt}{3ex}
		\multirow{2}{*}{}EGARCH &\bol{-0.003}&\bol{0.215} &\bol{0.967} &$\bol{-0.134}$ &&&&&$-2772.0$\\
		&(0.004)  &(0.022) &(0.006) &(0.015) & &&&&(0.133)\\
		
		\rule{0pt}{3ex}
		\multirow{3}{*}{}SRN-EGARCH  &\bol{-0.088} &\bol{0.255} &\bol{0.990} &\bol{-0.164} &\bol{0.033} &\bol{0.344}&\bol{-0.355}&\bol{0.434}&$-2772.5$\\
		&(0.048)  &(0.046) &(0.009) &(0.019)&(0.020) &(0.117) &(0.148)&(0.325)&(0.225)\\
		\hline\hline
	\end{tabular}
	\caption{SP500 and N225 data: Posterior means (in bold) of the parameters with the posterior standard deviations (in brackets). The last column shows the estimated log marginal likelihood with the Monte Carlo standard errors in brackets, averaged over 10 different runs of the SMC using the likelihood annealing algorithm. The asterisks indicate the cases when the Bayes factors strongly support the RECH models over their corresponding GARCH-type models.}
	\label{tab:Real_params}
\end{table}

\begin{table}[H]
	\centering
	\footnotesize
	\begin{tabular}{rccccccccc}
		\hline\hline
		\rule{0pt}{3ex}
		& $\omega$& $\alpha$ & $\beta$ &$\gamma$ &$\beta_0$&  $\beta_1$&$v_1$&$v_2$&Mar.llh\\
		\hline
		\rule{0pt}{3ex}
		\textbf{RUT} & & & & &&&&&\\
		\rule{0pt}{3ex}
		\multirow{2}{*}{} GARCH   & \bol{0.036}  & \bol{0.112}  & \bol{0.863}    &   &&&& &$-3037.6$\\
		&(0.009)&(0.015)&(0.019)&&&&&&(0.117)\\
		\rule{0pt}{3ex}
		\multirow{2}{*}{} SRN-GARCH &   &\bol{0.071} & \bol{0.690} &&\bol{0.075} &\bol{0.362}&\bol{-0.422}&\bol{0.538}& $-3014.3$*\\
		& &(0.017)  & (0.078) &&(0.040) &(0.082) &(0.118)&(0.203)&(0.284)\\
		
		& & & & & & &&&\\
		\rule{0pt}{3ex}
		\multirow{2}{*}{} GJR &\bol{0.049} &\bol{0.048} &\bol{0.864} &\bol{0.084}& &&&&$-3025.4$\\
		&(0.009) &(0.013) & (0.017)&(0.015) &&&& &(0.121)\\
		
		\rule{0pt}{3ex}
		\multirow{2}{*}{}SRN-GJR  & &\bol{0.019} &\bol{0.780} &\bol{0.109}&\bol{0.078}&\bol{0.311} &\bol{-0.255}&\bol{0.428}&$-3010.6$*\\
		& &(0.008) &(0.109) &(0.026) &(0.036)&(0.102)&(0.093)&(0.196)&(0.323)\\
		
		& & & & & &&&&\\
		\rule{0pt}{3ex}
		\multirow{2}{*}{}EGARCH &\bol{0.013}&\bol{0.162} &\bol{0.969} &$\bol{-0.111}$ &&&&&$-3023.1$\\
		&(0.004)  &(0.021) &(0.006) &(0.015) & &&&&(0.130)\\
		
		\rule{0pt}{3ex}
		\multirow{3}{*}{}SRN-EGARCH  &\bol{-0.090} &\bol{0.156} &\bol{0.990} &\bol{-0.161} &\bol{0.050} &\bol{0.307}&\bol{-0.282}&\bol{0.238}&$-3015.5$*\\
		&(0.049)  &(0.027) &(0.010) &(0.019)&(0.018) &(0.115) &(0.205)&(0.254)&(0.225)\\
		\hline
		\rule{0pt}{3ex}
		\textbf{DAX} & & & & &&&&&\\
		\rule{0pt}{3ex}
		\multirow{2}{*}{} GARCH   & \bol{0.019}  & \bol{0.096}  & \bol{0.890}    &   &&&& &$-2902.0$\\
		&(0.005)&(0.013)&(0.014)&&&&&&(0.119)\\
		\rule{0pt}{3ex}
		\multirow{2}{*}{} SRN-GARCH &   &\bol{0.058} & \bol{0.681} &&\bol{0.068} &\bol{0.418}&\bol{-0.430}&\bol{0.524}& $-2867.2$*\\
		& &(0.019)  & (0.126) &&(0.037) &(0.059) &(0.131)&(0.281)&(0.279)\\
		
		& & & & & & &&&\\
		\rule{0pt}{3ex}
		\multirow{2}{*}{} GJR &\bol{0.031} &\bol{0.046} &\bol{0.884} &\bol{0.066}& &&&&$-2889.1$\\
		&(0.006) &(0.010) & (0.013)&(0.013) &&&& &(0.120)\\
		
		\rule{0pt}{3ex}
		\multirow{2}{*}{}SRN-GJR  & &\bol{0.035} &\bol{0.711} &\bol{0.072}&\bol{0.067}&\bol{0.369} &\bol{-0.301}&\bol{0.575}&$-2866.4$*\\
		& &(0.016) &(0.088) &(0.032) &(0.038)&(0.095)&(0.117)&(0.213)&(0.313)\\
		
		& & & & & &&&&\\
		\rule{0pt}{3ex}
		\multirow{2}{*}{}EGARCH &\bol{0.006}&\bol{0.158} &\bol{0.975} &$\bol{-0.116}$ &&&&&$-2872.5$\\
		&(0.002)  &(0.018) &(0.004) &(0.014) & &&&&(0.128)\\
		
		\rule{0pt}{3ex}
		\multirow{3}{*}{}SRN-EGARCH  &\bol{-0.063} &\bol{0.169} &\bol{0.989} &\bol{-0.127} &\bol{0.034} &\bol{0.265}&\bol{-0.185}&\bol{0.193}&$-2869.0$*\\
		&(0.063)  &(0.026) &(0.011) &(0.020)&(0.024) &(0.132) &(0.160)&(0.394)&(0.232)\\
		\hline\hline
	\end{tabular}
	\caption{RUT and DAX data: Posterior means (in bold) of the parameters with the posterior standard deviations (in brackets). The last column shows the estimated log marginal likelihood with the Monte Carlo standard errors in brackets, averaged over 10 different runs of the SMC using the likelihood annealing algorithm. The asterisks indicate the cases when the Bayes factors strongly support the RECH models over their corresponding GARCH-type models.}
	\label{tab:Real_params2}
\end{table}

First, the marginal likelihood estimates show that the RECH models fit the index datasets better than the GARCH-type models, except for the SRN-EGARCH model for the N225 data.
For example, for the SP500 data, the Bayes factors of the SRN-GARCH, SRN-GJR and SRN-EGARCH models compared to the GARCH, GJR and EGARCH models are roughly $e^{36}$, $e^{28}$ and $e^{10}$, respectively, which, according to Jeffrey’s scale for interpreting the Bayes factor \citep{Jeffreys:1961}, decisively support the RECH models. Among the benchmark GARCH-type models, the EGARCH model constantly has the highest marginal likelihood.

Second, the estimated posterior means of the parameter $\beta_1$ of the RECH models are more than two standard deviations from zero in all cases, providing evidence of the volatility effects rather than linearity, e.g. probably non-linearity and long-memory effects, in the volatility dynamics and also suggesting that the recurrent component of the RECH models is able to effectively detect these effects. Additionally, the coefficients $v_2$ of the RECH models are statistically significant, indicating that the RECH models are able to detect the serial dependence rather than linearity that the previous conditional variance $\sigma_{t-1}^2$ has on $\sigma_t^2$.

Third, the existence of the leverage effects in the volatility is clear across all four stock markets. The leverage parameters $\gamma$ of the GJR and EGARCH models are statistically significant, implying that these models can detect the asymmetric volatility. All the leverage effect-related parameters $\gamma$ and $v_1$ in the RECH models are statistically significant, except the parameter $v_1$ of the SRN-EGARCH model. In particular, the {\it linear} leverage coefficient $\gamma$ of the SRN-GRJ and SRN-EGARCH models are significant, similarly to those of the GRJ and EGARCH models.
Interestingly, the {\it non-linear} leverage coefficient $v_1$ of the RECH models is significant in almost all cases, suggesting that the spillover effect of asymmetric volatility can be non-linear.
In particular, the leverage coefficient $v_1$ of the SRN-GARCH model is also statistically significant across all markets; i.e., unlike the conventional GARCH model, SRN-GARCH can detect the leverage effects in volatility.

Finally, as pointed out by a reviewer, in the EGARCH case the $\alpha$ and $\beta$ appear to be less affected by adding the recurrent component.
This is because the EGARCH structure is very different from GARCH and GJR (see Table \ref{tab:RECH_GARCH_models});
the EGARCH part in SRN-EGARCH is already non-linear in $\sigma_{t-1}$ and hence it can accommodate non-linearity.
This doesn't mean that SRN-EGARCH cannot improve EGARCH - $\beta_1$ is still far away from zero and the improvement is evident in better marginal likelihood and better out-of-sample prediction as confirmed in Section \ref{sec:out of sample analysis}.

\begin{figure}[h]
	\centering
	\includegraphics[width=1\columnwidth]{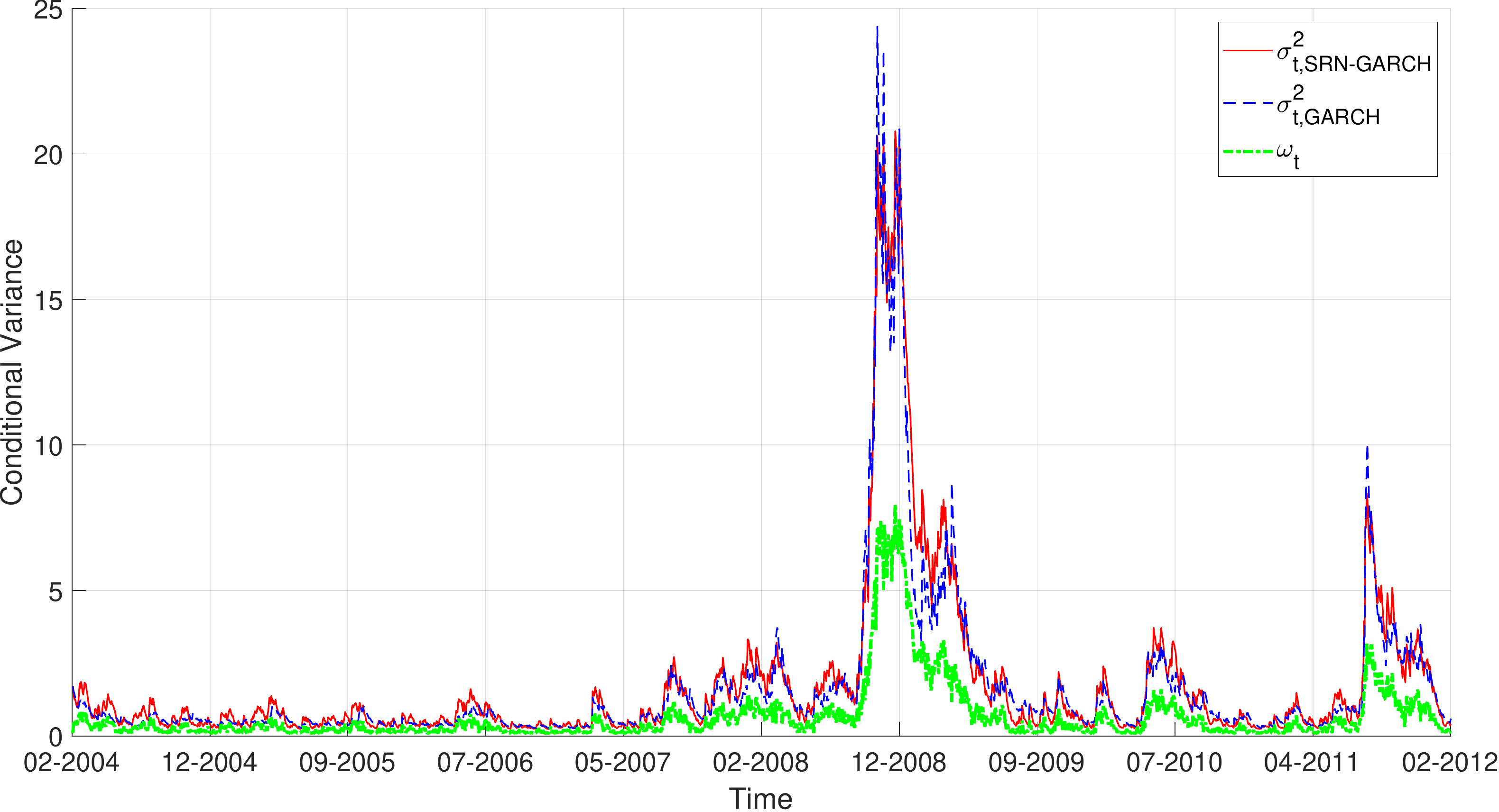}
	\caption{SP500: The in-sample conditional variance of the GARCH (dashed line) and SRN-GARCH (solid line) at all time points. The bottom line shows the values of the recurrent component $\omega_t$ of the SRN-GARCH specification. (The figure is better viewed in colour).}
	\label{f:SP500_volatility_density}
\end{figure}

Figure \ref{f:SP500_volatility_density} shows the volatility estimated by the GARCH and SRN-GARCH models for the SP500 index data, together with the values of the recurrent component $\omega_t$ at all time points.
Figure \ref{f:SP500_volatility_density_SRN_GJR} and \ref{f:SP500_volatility_density_SRN_EGARCH} in the Appendix are similar plots for the SRN-GJR and SRN-EGARCH models.
Clearly, the recurrent component $\omega_t$ is responsive to the changes in the volatility dynamics: it is small during the low volatility periods and large in the high volatility periods.
This distinct behavior of financial volatility is well-captured by the recurrent neural network structure of the recurrent component $\omega_t$.

Figure \ref{f:SP500_residual_plot} plots the standardized residuals $\wh\eps_t$ from the GARCH and SRN-GARCH models together with their QQ-plots.
\begin{figure}[h!]
	\centering
	\includegraphics[width=1\columnwidth]{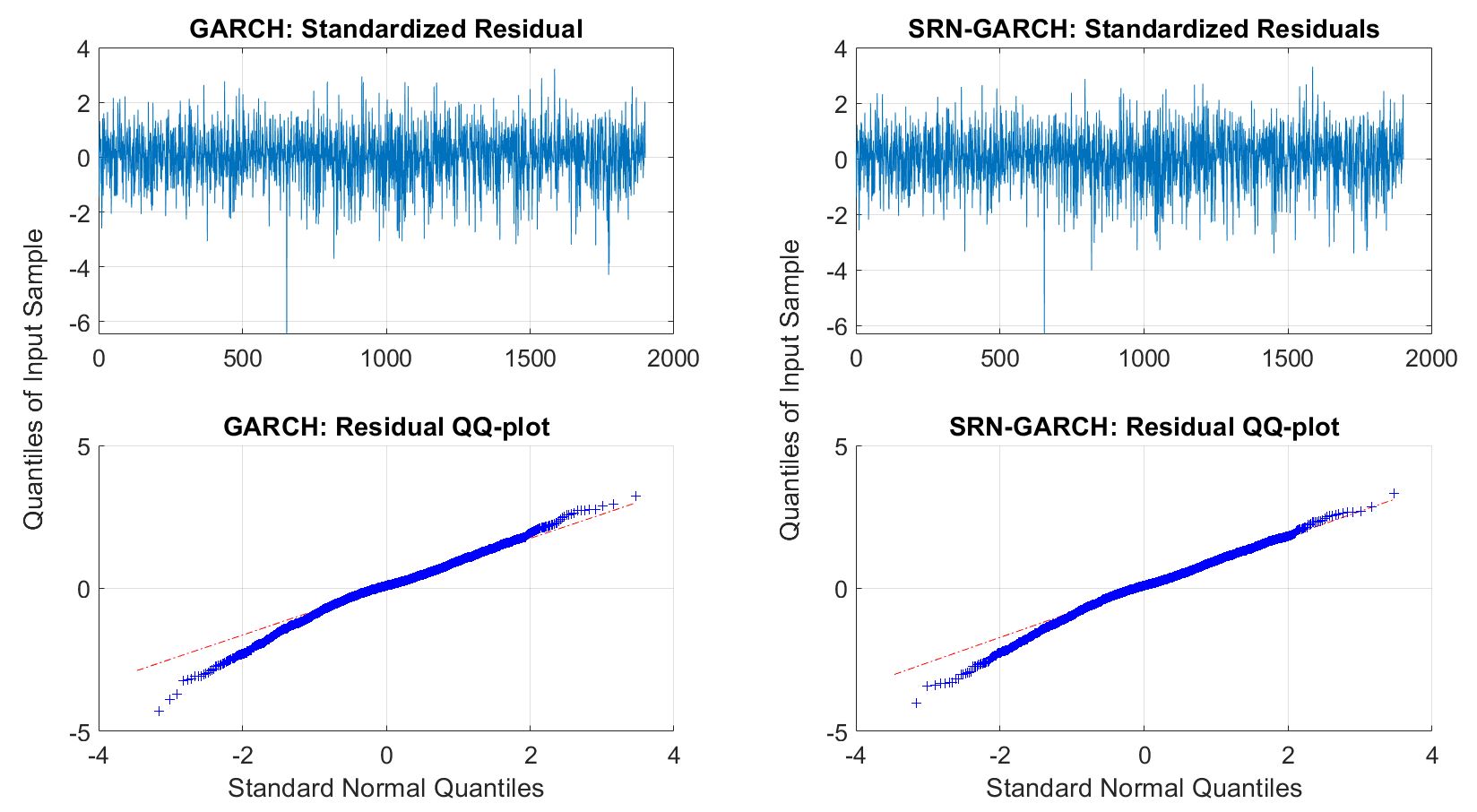}
	\caption{SP500: Estimated residuals $\wh\eps_t$ of the SRN-GARCH and GARCH models and their Q-Q plots.}
	\label{f:SP500_residual_plot}
\end{figure}

We observe similar results for the SRN-GRJ and GRJ models, SRN-EGARCH and EGARCH models. Generally the RECH residuals appear to lie closer to the expected straight line than those of the counterpart GARCH-type models.

\begin{table}[h!]
	\begin{center}
		\footnotesize
		\begin{tabular}{rccccccccc}
			\hline\hline
			\rule{0pt}{3ex}
			&\multicolumn{4}{l}{Fitted Conditional Variance } &&\multicolumn{4}{l}{Residual $\wh\eps_t$}\\
			\cline{2-5}\cline{7-10}
			\rule{0pt}{3ex}
			&Mean &Std &Skew& Kurtosis& & Std      & Skew           &Kurtosis      &LB-$\wh\eps_t$  \\
			\hline
			\rule{0pt}{3ex}
			\textbf{S\&P500}      &&&&&&&&&\\
			\rule{0pt}{3ex}
			GARCH      &1.650&2.845&4.543&26.522&& 1.002  &$-0.505$   &4.535 &0.054 \\
			SRN-GARCH  &1.755&2.910&4.045&21.251&& 1.002 &$-0.509$   &4.219 &0.119  \\
			\rule{0pt}{3ex}
			GJR      &1.423&2.342&4.585&27.169&&  1.001 &$-0.485$   &4.202 &0.058 \\
			SRN-GJR    &1.600&2.877&4.445&25.585&& 1.001&$-0.515$   &4.200 &0.110\\
			\rule{0pt}{3ex}
			EGARCH        &1.534&2.318&4.291&25.389&& 0.999  &$-0.587$   &4.616 &0.051 \\
			SRN-EGARCH    &1.603&2.887&4.661&29.080&& 0.999&$-0.524$   &4.125 &0.107 \\
			\hline
			\rule{0pt}{3ex}
			\textbf{N225}      &&&&&&&&&\\
			\rule{0pt}{3ex}
			GARCH      &1.493&3.133&7.792&72.005&& 0.999  &$-0.528$   &5.037 &0.138 \\
			SRN-GARCH  &1.390&2.521&6.950&58.482&& 0.999 &$-0.407$   &4.275 &0.108 \\
			\rule{0pt}{3ex}
			GJR      &1.352&2.692&7.780&72.782&&  1.003 &$-0.455$   &4.509 &0.104  \\
			SRN-GJR    &1.434&2.634&7.054&60.510&& 1.003&$-0.404$   &4.235 &0.085  \\
			\rule{0pt}{3ex}
			EGARCH        &1.370&2.282&7.252&68.428&& 1.001  &$-0.404$   &4.197 &0.073 \\
			SRN-EGARCH    &1.242&2.430&7.605&73.471&& 1.001&$-0.378$   &4.158 &0.083 \\
			\hline
			\rule{0pt}{3ex}
			\textbf{RUT}      &&&&&&&&&\\
			\rule{0pt}{3ex}
			GARCH      &1.803&2.524&4.177&23.108&& 1.000  &$-0.318$   &3.717 &0.046 \\
			SRN-GARCH  &1.826&2.407&3.746&18.897&& 1.000 &$-0.379$   &3.728 &0.084 \\
			\rule{0pt}{3ex}
			GJR        &1.619&2.122&4.343&25.208&&  1.002 &$-0.322$   &3.663 &0.034  \\
			SRN-GJR    &1.831&2.635&4.000&21.316&& 1.002&$-0.364$   &3.725 &0.085  \\
			\rule{0pt}{3ex}
			EGARCH     &1.681&1.973&3.775&20.933&& 0.999  &$-0.397$   &3.843 &0.061 \\
			SRN-EGARCH &1.634&2.246&4.245&24.624&& 0.999&$-0.373$   &3.690 &0.057 \\
			\hline
			\rule{0pt}{3ex}
			\textbf{DAX}      &&&&&&&&&\\
			\rule{0pt}{3ex}
			GARCH      &1.598&2.203&4.332&26.122&& 1.002  &$-0.462$   &4.870 &0.957 \\
			SRN-GARCH  &1.656&2.378&3.484&16.904&& 1.002 &$-0.468$   &4.323 &0.893 \\
			\rule{0pt}{3ex}
			GJR        &1.400&1.739&3.815&19.772&&  1.001 &$-0.399$   &4.478&0.957  \\
			SRN-GJR    &1.510&2.013&3.517&17.005&& 1.001&$-0.420$   &4.228&0.911  \\
			\rule{0pt}{3ex}
			EGARCH      &1.504&1.766&3.438&17.766&& 1.000  &$-0.414$   &4.116 &0.905 \\
			SRN-EGARCH  &1.262&1.730&4.132&23.497&& 1.000&$-0.385$   &4.066 &0.916 \\
			\hline\hline
		\end{tabular}
	\end{center}
	\caption{SP500: Model diagnostics of the fitted conditional variance and residual $\wh\eps_t$. The LB p-values denote the p-value from the Ljung-Box test with 10 lags.}
	\label{tab:Real Model diagnostics}
\end{table}

Table \ref{tab:Real Model diagnostics} provides the skewness and kurtosis statistics together with the $p$-values of the Ljung-Box (LB) autocorrelation test of the residuals and squared residuals estimated by the RECH and the benchmark GARCH-type models.
The $p$-value of the LB test, together with the sample ACF plots, of the standardized and squared standardized residuals suggest that there is no evidence of autocorrelation.
The residuals produced by all models in Table \ref{tab:Real Model diagnostics} exhibit some negative skewness and have kurtosis values higher than 3 (the kurtosis of the standard normal distribution). In general, the residuals of the RECH models seem closer to normality than those of the corresponding GARCH-type models.
Similarly to the GARCH-type models, it is straightforward to use Student's $t$ distribution for the innovation in the RECH models to improve the residual diagnostics; however, this extension is not considered here.

\subsubsection{Out-of-sample analysis}\label{sec:out of sample analysis}
\begin{figure}[h!]
	\begin{center}
		\includegraphics[width=1\columnwidth]{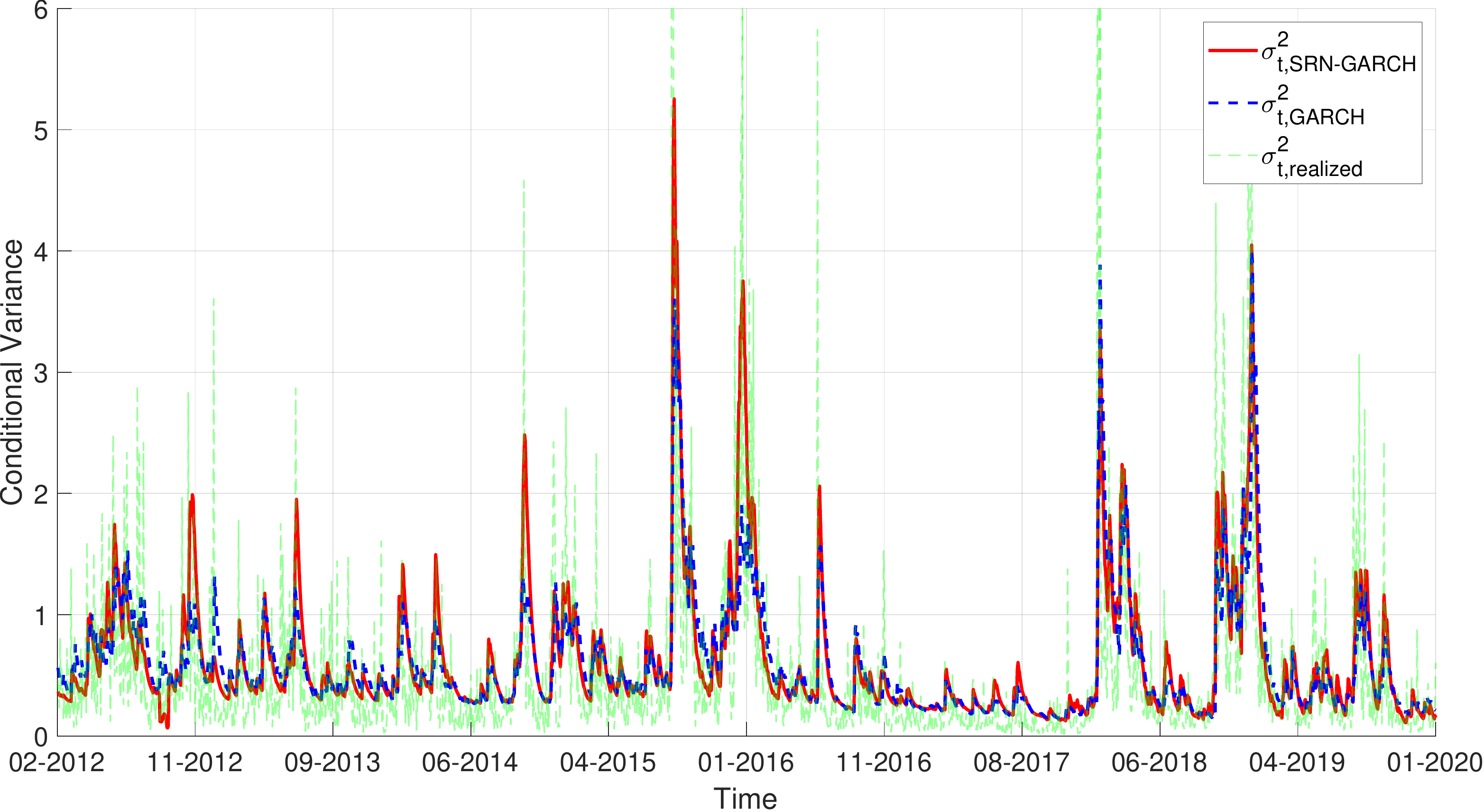}
		\caption{SP500 data: Forecast conditional variance by the GARCH (dashed) and SRN-GARCH (solid) models, together with the realized variance (dotted). (The figure is better viewed in colour).}
		\label{f:SP500_forecast_variance_GARCH}
	\end{center}
\end{figure}

Figure \ref{f:SP500_forecast_variance_GARCH} plots the one-step-ahead forecast conditional variance of the GARCH and SRN-GARCH models, together with the out-of-sample realized variance of the S\&P500 data, obtained by data annealing SMC.
Figure \ref{f:SP500_forecast_variance_GJR} and \ref{f:SP500_forecast_variance_EGARCH} in the Appendix are similar plots for the case of the SRN-GJR and SRN-EGARCH models, respectively. To save space, we do not report the plots for the N225, RUT and DAX datasets as similar behaviors of the forecast variance are observed for these datasets.
We note from Figures \ref{f:SP500_forecast_variance_GARCH}, \ref{f:SP500_forecast_variance_GJR} and \ref{f:SP500_forecast_variance_EGARCH} that in general, the RECH models and their GARCH-type counterparts produce forecast variances that adequately track the movement of the realized variance.
The forecast variance of the RECH and benchmark models are similar in the low volatility regions while the RECH models have higher variance forecasts during high volatility periods. The variance forecasts of the RECH models seem to track the realized variance better than those of the benchmark GARCH-type models.

\begin{table}[h!]
	\begin{center}
		\footnotesize
		\begin{tabular}{rccccccccc}
			\hline\hline
			\rule{0pt}{3ex}
			&\multicolumn{4}{l}{Forecast Conditional Variance } &&\multicolumn{4}{l}{Forecast Residual $\wh\eps_t$}\\
			\cline{2-5}\cline{7-10}
			\rule{0pt}{3ex}
			&Mean &Std &Skew& Kurtosis& & Std      & Skew           &Kurtosis      &LB-$\wh\eps_t$  \\
			\hline
			\rule{0pt}{3ex}
			\textbf{SP500}      &&&&&&&&&\\
			\rule{0pt}{3ex}
			GARCH      &0.597&0.476&2.805&14.391&& 0.932  &$-0.818$  &6.438 &0.471 \\
			SRN-GARCH  &0.620&0.546&2.309&9.009&& 0.921 &$-0.697$   &5.243 &0.592  \\
			\rule{0pt}{3ex}
			GJR        &0.566&0.417&2.856&14.351&&  0.924 &$-0.738$   &5.775 &0.484 \\
			SRN-GJR    &0.626&0.565&2.443&9.868&&  0.932&$-0.730$   &5.477 &0.537\\
			\rule{0pt}{3ex}
			EGARCH        &0.632&0.566&2.404&10.518&& 0.914&$-0.815$   &5.939 &0.438 \\
			SRN-EGARCH    &0.634&0.615&2.945&13.554&& 0.910&$-0.749$   &5.592 &0.598 \\
			\hline
			\rule{0pt}{3ex}
			\textbf{N225}      &&&&&&&&&\\
			\rule{0pt}{3ex}
			GARCH      &0.930&1.061&4.553&30.093&& 0.977  &$-1.327$   &13.908 &0.184 \\
			SRN-GARCH  &0.884&0.921&3.752&21.485&& 1.004 &$-1.640$    &14.399 &0.089 \\
			\rule{0pt}{3ex}
			GJR        &0.903&1.046&5.145&38.668&&  0.986 &$-1.495$   &15.828&0.128  \\
			SRN-GJR    &0.889&0.944&3.761&21.200&& 1.006&$-1.661$     &18.781&0.094  \\
			\rule{0pt}{3ex}
			EGARCH       &0.917&1.245&6.677&69.000&& 1.007&$-1.599$   &17.672&0.146 \\
			SRN-EGARCH   &0.897&1.231&6.858&74.418&& 1.017&$-1.668$   &18.652&0.116 \\
			\hline
			\rule{0pt}{3ex}
			\textbf{RUT}      &&&&&&&&&\\
			\rule{0pt}{3ex}
			GARCH      &0.928&0.459&1.961&9.011&& 0.959  &$-0.471$   &4.101 &0.721 \\
			SRN-GARCH  &0.953&0.531&2.139&9.850&& 0.947 &$-0.547$   &4.378 &0.562 \\
			\rule{0pt}{3ex}
			GJR        &0.919&0.448&2.034&8.473&& 0.953&$-0.462$   &4.084 &0.689  \\
			SRN-GJR    &0.958&0.548&2.258&10.044&& 0.940&$-0.465$   &4.080 &0.572  \\
			\rule{0pt}{3ex}
			EGARCH     &0.990&0.545&1.828&8.450&& 0.932&$-0.492$   &4.273 &0.631 \\
			SRN-EGARCH &0.964&0.594&3.273&12.528&& 0.937&$-0.462$   &4.106 &0.618 \\
			\hline
			\rule{0pt}{3ex}
			\textbf{DAX}      &&&&&&&&&\\
			\rule{0pt}{3ex}
			GARCH      &0.867&0.474&1.238&4.394&& 0.973  &$-0.245$   &4.923 &0.152 \\
			SRN-GARCH  &0.904&0.585&1.566&5.807&& 0.962 &$-0.235$   &4.992 &0.125 \\
			\rule{0pt}{3ex}
			GJR        &0.829&0.440&1.661&6.735&& 0.977&$-0.235$   &4.733&0.097  \\
			SRN-GJR    &0.907&0.594&1.646&6.226&& 0.964&$-0.250$   &4.969&0.107  \\
			\rule{0pt}{3ex}
			EGARCH      &0.924&0.617&1.584&6.046&& 0.967&$-0.296$   &5.205 &0.124 \\
			SRN-EGARCH  &0.902&0.634&1.987&7.982&& 0.972&$-0.228$   &4.956 &0.105 \\
			\hline\hline
		\end{tabular}
	\end{center}
	\caption{Application: Summary statistics on the one-step-ahead out-of-sample forecast conditional variances $\widehat\sigma^2_t$ and residual $\wh\eps_t$. The LB p-values denote the p-value from the Ljung-Box test with 10 lags.}
	\label{tab:Real Model diagnostics_Out_sample}
\end{table}

The in-sample analysis suggests that the RECH models fit the in-sample data of the four index datasets better than the counterpart GARCH-type models. However, it is possible that the superior in-sample performance is the result of overfitting \citep{Pagan:1990,Donaldson:1997}. Table \ref{tab:Real Model diagnostics_Out_sample} provides summary statistics on the one-step-ahead forecasts of conditional variance and standardized residuals.
The most important conclusion from Table \ref{tab:Real Model diagnostics_Out_sample} is that the RECH models do not overfit the data, as the forecast conditional variance of the RECH models are not excessively variable and the forecast residuals of the RECH models are very close to those of the GARCH-type benchmark models. The RECH models occasionally produce one-step-ahead forecast residuals with lower kurtosis than the counterpart GARCH-type models.

Table \ref{tab:forecast_score_compare_pps} shows the forecast performance of the models using the four predictive scores PPS, \#Vio, QS and \%Hit.
As these four predictive scores complement each other, for each pair of the RECH and GARCH-type models, we compare their forecast performance by counting the number of times one model has a better predictive score than the other and report this count in the last column of Table \ref{tab:forecast_score_compare_pps}. The model with the higher count is preferred. Table \ref{tab:forecast_score_compare_pps} shows that the RECH models consistently outperform their counterpart GARCH-type models for the S\&P500, RUT and DAX data. For the N225 index, the predictive improvement of the RECH models over the benchmark counterparts is less clear, especially for the SRN-GARCH and GARCH models.

\begin{table}[h!]
	\centering
	\footnotesize
	\begin{tabular}{rcccccc}
		\hline\hline
		\rule{0pt}{3ex}
	    && PPS & \# Violation &QS &Hit Per. &Count\\
		\hline
		\rule{0pt}{3ex}
		\bol{SP500}&&   & & & &\\
		\rule{0pt}{3ex}
		GARCH     &&0.993&32&0.026&0.018 &0\\
		SRN-GARCH* &&0.955&25&0.024&0.017 &4\\
		\rule{0pt}{3ex}
		GJR       &&0.981&27&0.025&0.018 &0\\
		SRN-GJR*  & &0.959&25&0.024&0.017 &4\\
		\rule{0pt}{3ex}
		EGARCH    &&0.963&29&0.025&0.018 &0\\
		SRN-EGARCH*&&0.963&23&0.025&0.015 &2\\
		\hline
		\rule{0pt}{3ex}
		\bol{N225}&&   & & & &\\
		\rule{0pt}{3ex}
		GARCH*     &&1.214&32&0.036&0.016 &2\\
		SRN-GARCH &&1.216&33&0.035&0.016 &1\\
		\rule{0pt}{3ex}
		GJR       &&1.217&31&0.036&0.017 &1\\
		SRN-GJR*  &&1.216&32&0.035&0.017 &2\\
		\rule{0pt}{3ex}
		EGARCH    &&1.212&32&0.035&0.017 & 0\\
		SRN-EGARCH*&&1.212&30&0.035&0.017 & 1\\
		\hline
		\rule{0pt}{3ex}
		\bol{RUT}&&   & & & &\\
		\rule{0pt}{3ex}
		GARCH     &&1.298&34&0.032&0.018 &0\\
		SRN-GARCH* &&1.286&26&0.030&0.016&4\\
		\rule{0pt}{3ex}
		GJR &&1.290&31&0.031&0.017&0\\
		SRN-GJR*  &&1.283&24&0.030&0.015&4\\
		\rule{0pt}{3ex}
		EGARCH    &&1.285&23&0.030&0.014 &0\\
		SRN-EGARCH*&&1.281&23&0.030&0.014 &1\\
		\hline
		\rule{0pt}{3ex}
		\bol{DAX}&&   & & & &\\
		\rule{0pt}{3ex}
		GARCH     &&1.257&47&0.031&0.020 & 0\\
		SRN-GARCH* &&1.247&38&0.029&0.020 & 3\\
		\rule{0pt}{3ex}
		GJR       &&1.249&50&0.030&0.022&0\\
		SRN-GJR*   &&1.246&41&0.029&0.022&3\\
		\rule{0pt}{3ex}
		EGARCH    &&1.246&39&0.028&0.018 &1\\
		SRN-EGARCH*&&1.243&37&0.029&0.020 &3\\
		\hline\hline
	\end{tabular}
	\caption{Applications: Forecast performance of the RECH and benchmark GARCH-type models. For the QS and \%Hit measures, the results are calculated at the $1\%$-quantile. For each pair of the RECH and GARCH-type models, the asterisks  indicate the model with the higher count. }
	\label{tab:forecast_score_compare_pps}
\end{table}

Tables \ref{tab:SP500_forecast_score_compare} to \ref{tab:DAX_forecast_score_compare} summarize the forecast performance measured by the predictive scores defined in Table \ref{tab:forecast_score}. Each table contains six panels, corresponding to the six realized measures mentioned earlier. For each pair of the RECH and GARCH-type models, their forecast performance are also compared in the same way as in Table \ref{tab:forecast_score_compare_pps}. Additionally, in each panel, bold numbers are used to indicate the lowest forecast errors. For each type of realized measure, the model with the highest number of lowest forecast errors is preferred. 
The table shows that the RECH models in general outperform their counterpart GARCH-type models.

\begin{table}[h!]
	\centering
	\footnotesize
	\begin{tabular}{crccccccc}
		\hline\hline
		\rule{0pt}{3ex}
		Estimator&& $\text{MSE}_1$ &$\text{MSE}_2$ & $\text{MAE}_1$&$\text{MAE}_2$&$\text{Qlike}$&$\text{R}^2\text{LOG}$&Count\\
		\hline
		\rule{0pt}{3ex}
		&GARCH     &0.088&0.780&0.220&0.351&0.115&0.715&0 \\
		&SRN-GARCH* &\bol{0.075}&0.703&\bol{0.203}&0.325&\bol{0.078}&\bol{0.631} &6\\
		\rule{0pt}{3ex}
		BV&GJR       &0.077&0.721&0.206&\bol{0.322}&0.097&0.678 &1\\
		&SRN-GJR*   &0.076&\bol{0.691}&0.205&0.328&0.080&0.642 &5\\
		\rule{0pt}{3ex}
		&EGARCH    &0.078&0.695&0.207&0.332&0.091&0.644 &1\\
		&SRN-EGARCH*&\bol{0.075}&0.694&0.204&0.326&0.081&0.646 &5\\
		\hline
		\rule{0pt}{3ex}
		&GARCH     &0.097&0.600&0.237&0.367&0.127&0.911 &0\\
		&SRN-GARCH* &\bol{0.084}&0.538&\bol{0.222}&0.346&\bol{0.093}&0.815 &6\\
		\rule{0pt}{3ex}
		$\text{RKV}_1$&GJR       &0.085&0.545&0.225&\bol{0.339}&0.111&0.873 &1\\
		&SRN-GJR*   &0.084&0.531&0.224&0.348&0.094&0.827 &5\\
		\rule{0pt}{3ex}
		&EGARCH    &0.086&\bol{0.528}&0.224&0.349&0.098&\bol{0.810} &2\\
		&SRN-EGARCH&0.085&0.538&0.224&0.349&0.097&0.834 &2\\
		\hline
		\rule{0pt}{3ex}
		&GARCH     &0.071&0.422&0.200&0.315&0.135&0.575&0 \\
		&SRN-GARCH* &\bol{0.061}&0.371&\bol{0.187}&0.296&\bol{0.109}&\bol{0.507} &6\\
		\rule{0pt}{3ex}
		$\text{RKV}_2$&GJR &0.062&0.374&0.188&\bol{0.288}&0.121&0.542 &2\\
		&SRN-GJR*   &\bol{0.061}&\bol{0.366}&0.189&0.299&0.110&0.517 &3\\
		\rule{0pt}{3ex}
		&EGARCH    &0.063&0.368&0.190&0.301&0.117&0.514 &2\\
		&SRN-EGARCH*&0.062&0.376&0.188&0.298&0.111&0.521 &4\\
		\hline
		\rule{0pt}{3ex}
		&GARCH     &0.070&0.401&0.199&0.314&0.137&0.569&0 \\
		&SRN-GARCH* &\bol{0.060}&0.351&\bol{0.187}&\bol{0.295}&\bol{0.110}&\bol{0.501} &6\\
		\rule{0pt}{3ex}
		$\text{RKV}_3$&GJR  & \bol{0.060}&0.353&\bol{0.187}&0.287&0.123&0.536 &2\\
		&SRN-GJR*   &\bol{0.060}&\bol{0.347}&0.188&0.297&0.112&0.511&3\\
		\rule{0pt}{3ex}
		&EGARCH    &0.063&0.348&0.189&0.300&0.120&0.509 &2\\
		&SRN-EGARCH*&0.061&0.358&0.188&0.298&0.114&0.515 &4\\
		\hline
		\rule{0pt}{3ex}
		&GARCH     &0.102&0.637&0.246&0.386&0.083&0.946&0 \\
		&SRN-GARCH* &\bol{0.087}&0.555&\bol{0.227}&\bol{0.357}&\bol{0.039}&\bol{0.849} &6\\
		\rule{0pt}{3ex}
		MedRV&GJR       &0.091&0.586&0.234&0.360&0.067&0.909 &1\\
		&SRN-GJR*   &0.088&0.544&0.230&0.361&0.043&0.863 &5\\
		\rule{0pt}{3ex}
		&EGARCH    &0.091&0.550&0.232&0.366&0.051&0.863 &1\\
		&SRN-EGARCH*&\bol{0.087}&\bol{0.540}&0.229&0.359&0.045&0.867&5\\
		\hline
		\rule{0pt}{3ex}
		&GARCH     &0.096&1.124&0.226&0.361&0.105&0.782 &0\\
		&SRN-GARCH* &\bol{0.083}&1.054&\bol{0.212}&0.340&\bol{0.073}&0.701 &6\\
		\rule{0pt}{3ex}
		RV&GJR       &0.084&1.061&0.214&\bol{0.333}&0.091&0.751 &1\\
		&SRN-GJR*   &0.084&1.039&0.214&0.343&0.075&0.713&3\\
		\rule{0pt}{3ex}
		&EGARCH*    &0.085&\bol{1.037}&0.214&0.345&0.075&\bol{0.695} &3\\
		&SRN-EGARCH&0.084&1.045&0.214&0.344&0.078&0.720 &2\\
		\hline\hline
	\end{tabular}
	\caption{SP500 data: Forecast performance of the RECH and benchmark GARCH-type models using different realized measures. For each pair of the RECH and GARCH-type models, the asterisks indicate the model with the higher count. In each panel, the bold numbers indicate the best predictive scores.}
	\label{tab:SP500_forecast_score_compare}
\end{table}

\begin{table}[h!]
	\centering
	\footnotesize
	\begin{tabular}{crccccccc}
		\hline\hline
		\rule{0pt}{3ex}
		Estimator&& $\text{MSE}_1$ &$\text{MSE}_2$ & $\text{MAE}_1$&$\text{MAE}_2$&$\text{QLIKE}$&$\text{R}^2\text{LOG}$&Count\\
		\hline
		\rule{0pt}{3ex}
		&GARCH     &0.138&1.924&0.242&0.526&0.562&0.537&0 \\
		&SRN-GARCH* &\bol{0.124}&1.795&0.227&\bol{0.487}&0.557&0.491 &6\\
		\rule{0pt}{3ex}
		BV&GJR       &0.133&1.896&0.234&0.505&0.561&0.526 &0\\
		&SRN-GJR*   &0.125&\bol{1.792}&0.227&0.491&0.553&0.485 &6\\
		\rule{0pt}{3ex}
		&EGARCH    &0.130&1.870&0.226&0.499&0.553&0.478 &0\\
		&SRN-EGARCH*&\bol{0.124}&1.794&\bol{0.224}&\bol{0.487}&\bol{0.552}&\bol{0.472} &6\\
		\hline
		\rule{0pt}{3ex}
		&GARCH     &0.216&3.792&0.311&0.661&0.573&1.028 &0\\
		&SRN-GARCH* &\bol{0.200}&\bol{3.635}&0.297&0.624&0.568&0.956 &6\\
		\rule{0pt}{3ex}
		$\text{RKV}_1$&GJR  &0.210&3.758&0.304&0.641&0.574&1.012 &0\\
		&SRN-GJR*   &0.201&3.648&0.298&0.628&0.565&0.952 &6\\
		\rule{0pt}{3ex}
		&EGARCH    &0.206&3.761&0.295&0.634&0.567&0.939 &0\\
		&SRN-EGARCH*&\bol{0.200}&3.668&\bol{0.293}&\bol{0.623}&\bol{0.564}&\bol{0.930} &6\\
		\hline
		\rule{0pt}{3ex}
		&GARCH     &0.121&1.421&0.228&0.492&0.580&0.460&0 \\
		&SRN-GARCH* &\bol{0.108}&\bol{1.268}&0.215&0.455&0.575&0.420 &6\\
		\rule{0pt}{3ex}
		$\text{RKV}_2$&GJR &0.116&1.404&0.221&0.473&0.577&0.449 &0\\
		&SRN-GJR*   &\bol{0.108}&1.277&0.214&0.457&\bol{0.569}&0.411 &6\\
		\rule{0pt}{3ex}
		&EGARCH    &0.114&1.438&0.213&0.466&\bol{0.569}&0.405 &0\\
		&SRN-EGARCH*&\bol{0.108}&1.304&\bol{0.211}&\bol{0.454}&\bol{0.569}&\bol{0.400} &5\\
		\hline
		\rule{0pt}{3ex}
		&GARCH     &0.120&1.405&0.228&0.491&0.581&0.456&0 \\
		&SRN-GARCH* &\bol{0.107}&\bol{1.252}&0.214&0.453&0.576&0.415 &6\\
		\rule{0pt}{3ex}
		$\text{RKV}_3$&GJR  & 0.115&1.387&0.220&0.471&0.578&0.444 &0\\
		&SRN-GJR*   &\bol{0.107}&1.261&0.213&0.455&\bol{0.569}&0.407&6\\
		\rule{0pt}{3ex}
		&EGARCH    &0.113&1.421&0.212&0.465&0.571&0.401 &0\\
		&SRN-EGARCH*&\bol{0.107}&1.287&\bol{0.210}&\bol{0.452}&\bol{0.569}&\bol{0.396} &6\\
		\hline
		\rule{0pt}{3ex}
		&GARCH     &0.132&1.575&0.252&0.518&0.541&0.655&0 \\
		&SRN-GARCH* &\bol{0.118}&\bol{1.415}&0.238&0.484&0.530&0.600 &6\\
		\rule{0pt}{3ex}
		MedRV&GJR   &0.129&1.574&0.248&0.507&0.540&0.651 &0\\
		&SRN-GJR*   &\bol{0.118}&1.421&0.238&0.486&0.525&0.593 &6\\
		\rule{0pt}{3ex}
		&EGARCH    &0.125&1.636&0.238&0.502&0.524&0.580 &0\\
		&SRN-EGARCH*&\bol{0.118}&1.458&\bol{0.234}&\bol{0.482}&\bol{0.523}&\bol{0.576}&6\\
		\hline
		\rule{0pt}{3ex}
		&GARCH     &0.145&2.068&0.246&0.535&0.577&0.557 &0\\
		&SRN-GARCH* &\bol{0.132}&\bol{1.918}&0.233&0.500&0.572&0.510 &6\\
		\rule{0pt}{3ex}
		RV&GJR       &0.141&2.046&0.239&0.515&0.576&0.546 &0\\
		&SRN-GJR*   &0.133&1.927&0.232&0.501&\bol{0.567}&0.503&6\\
		\rule{0pt}{3ex}
		&EGARCH    &0.138&2.067&0.231&0.508&0.569&0.496 &0\\
		&SRN-EGARCH*&0.133&1.948&\bol{0.229}&\bol{0.498}&\bol{0.567}&\bol{0.490} &6\\
		\hline\hline
	\end{tabular}
	\caption{N225 data: Forecast performance of the RECH and benchmark GARCH-type models using different realized measures. For each pair of the RECH and GARCH-type models, the asterisks indicate the model with the higher count. In each panel, the bold numbers indicate the best predictive scores.}
	\label{tab:N225_forecast_score_compare}
\end{table}

\begin{table}[h!]
	\centering
	\footnotesize
	\begin{tabular}{crccccccc}
	\hline\hline
	\rule{0pt}{3ex}
	Estimator&& $\text{MSE}_1$ &$\text{MSE}_2$ & $\text{MAE}_1$&$\text{MAE}_2$&$\text{QLIKE}$&$\text{R}^2\text{LOG}$&Count\\
	\hline
	\rule{0pt}{3ex}
	&GARCH      &0.099&0.593&0.246&0.468&0.708&0.534&0 \\
	&SRN-GARCH* &0.087&0.509&0.232&0.440&0.687&0.482 &6\\
	\rule{0pt}{3ex}
	BV&GJR      &0.089&0.538&0.236&0.443&0.694&0.499 &0\\
	&SRN-GJR*   &0.087&0.509&0.233&0.442&0.686&0.686 &6\\
	\rule{0pt}{3ex}
	&EGARCH     &0.092&0.527&0.242&0.460&0.693&0.502 &0\\
	&SRN-EGARCH*&\bol{0.085}&\bol{0.505}&\bol{0.231}&\bol{0.436}&\bol{0.684}&\bol{0.475} &6\\
	\hline
	\rule{0pt}{3ex}
	&GARCH     &0.121&0.620&0.278&0.518&0.740&0.713 &0\\
	&SRN-GARCH* &0.108&0.543&0.266&0.493&0.720&0.651&6\\
	\rule{0pt}{3ex}
	$\text{RKV}_1$&GJR  &0.110&0.569&0.268&0.497&0.725&0.669 &0\\
	&SRN-GJR*   &0.109&0.549&0.266&0.497&0.718&0.652 &5\\
	\rule{0pt}{3ex}
	&EGARCH    &0.113&0.562&0.273&0.510&0.725&0.670 &0\\
	&SRN-EGARCH*&\bol{0.106}&\bol{0.540}&\bol{0.263}&\bol{0.491}&\bol{0.714}&\bol{0.640} &6\\
	\hline
	\rule{0pt}{3ex}
	&GARCH     &0.097&0.524&0.247&0.464&0.708&0.565&0 \\
	&SRN-GARCH* &\bol{0.087}&\bol{0.459}&0.234&0.437&0.690&0.514 &6\\
	\rule{0pt}{3ex}
	$\text{RKV}_2$&GJR &0.090&0.483&0.238&0.441&0.697&0.537 &0\\
	&SRN-GJR*   &0.088&0.464&0.235&0.440&0.690&0.520 &6\\
	\rule{0pt}{3ex}
	&EGARCH    &0.092&0.474&0.241&0.451&0.694& 0.534 &0\\
	&SRN-EGARCH*&\bol{0.087}&0.462&\bol{0.233}&\bol{0.435}&\bol{0.688}&\bol{0.512} &6\\
	\hline
	\rule{0pt}{3ex}
	&GARCH     &0.095&0.520&0.244&0.459&0.709&0.538&0 \\
	&SRN-GARCH* &0.085&\bol{0.456}&0.231&0.433&0.691&0.489 &6\\
	\rule{0pt}{3ex}
	$\text{RKV}_3$&GJR  & 0.088&0.478&0.234&0.436&0.698&0.509 &0\\
	&SRN-GJR*   &0.086&0.460&0.232&0.436&0.691&0.494&5\\
	\rule{0pt}{3ex}
	&EGARCH    &0.090&0.472&0.238&0.448&0.696&0.508 &0\\
	&SRN-EGARCH*&\bol{0.084}&0.459&\bol{0.230}&\bol{0.431}&\bol{0.690}&\bol{0.487} &6\\
	\hline
	\rule{0pt}{3ex}
	&GARCH     &0.137&0.950&0.289&0.553&0.669&0.819&0 \\
	&SRN-GARCH* &0.123&0.824&0.275&0.524&0.642&0.763 &6\\
	\rule{0pt}{3ex}
	MedRV&GJR   &0.127&0.879&0.278&0.527&0.654&0.784 &0\\
	&SRN-GJR*   &0.123&0.817&0.276&0.525&0.641& 0.764&6\\
	\rule{0pt}{3ex}
	&EGARCH    &0.128&0.839&0.284&0.543&0.645&0.784 &0\\
	&SRN-EGARCH*&\bol{0.119}&\bol{0.793}&\bol{0.274}&\bol{0.519}&\bol{0.638}&\bol{0.758}&6\\
	\hline
	\rule{0pt}{3ex}
	&GARCH     &0.100&0.562&0.250&0.475&0.716&0.546 &0\\
	&SRN-GARCH* &0.089&\bol{0.490}&0.235&0.443&0.698&0.492&6\\
	\rule{0pt}{3ex}
	RV&GJR       &0.092&0.513&0.239&0.449&0.704&0.510 &0\\
	&SRN-GJR*   &0.090&0.492&0.236&0.445&0.697&0.494&6\\
	\rule{0pt}{3ex}
	&EGARCH    &0.094&0.507&0.244&0.462&0.704&0.513 &0\\
	&SRN-EGARCH*&\bol{0.088}&0.492&\bol{0.234}&\bol{0.441}&\bol{0.695}&\bol{0.486} &6\\
	\hline\hline
	\end{tabular}
	\caption{RUT data: Forecast performance of the RECH and benchmark GARCH-type models using different realized measures. For each pair of the RECH and GARCH-type models, the asterisks indicate the model with the higher count. In each panel, the bold numbers indicate the best predictive scores.}
	\label{tab:RUT_forecast_score_compare}
\end{table}

\begin{table}[h!]
	\centering
	\footnotesize
	\begin{tabular}{crccccccc}
		\hline\hline
		\rule{0pt}{3ex}
		Estimator&& $\text{MSE}_1$ &$\text{MSE}_2$ & $\text{MAE}_1$&$\text{MAE}_2$&$\text{QLIKE}$&$\text{R}^2\text{LOG}$&Count\\
		\hline
		\rule{0pt}{3ex}
		&GARCH     &0.083&0.572&0.216&0.404&0.618&0.449&0 \\
		&SRN-GARCH* &0.078&\bol{0.542}&0.211&0.399&0.611&0.419 &6\\
		\rule{0pt}{3ex}
		BV&GJR*       &\bol{0.075}&\bol{0.542}&\bol{0.205}&\bol{0.379}&0.609&0.417 &4\\
		&SRN-GJR   &0.078&0.545&0.210&0.399&0.606&0.412 &2\\
		\rule{0pt}{3ex}
		&EGARCH    &0.078&0.547&0.210&0.401&0.602&0.408 &1\\
		&SRN-EGARCH*&0.077&0.548&0.207&0.395&\bol{0.601}&\bol{0.401} &5\\
		\hline
		\rule{0pt}{3ex}
		&GARCH     &0.103&0.691&0.245&0.452&0.628&0.609 &0\\
		&SRN-GARCH* &0.098&0.672&0.240&0.448&0.621&0.567 &6\\
		\rule{0pt}{3ex}
		$\text{RKV}_1$&GJR* &\bol{0.094}&\bol{0.664}&\bol{0.234}&\bol{0.426}&0.619&0.570 &4\\
		&SRN-GJR   &0.098&0.676&0.239&0.448&0.616&0.560 &2\\
		\rule{0pt}{3ex}
		&EGARCH    &0.099&0.678&0.240&0.452&0.613&0.556 &1\\
		&SRN-EGARCH*&0.097&0.681&0.237&0.447&\bol{0.612}&\bol{0.549} &5\\
		\hline
		\rule{0pt}{3ex}
		&GARCH     &0.067&0.396&0.197&0.368&0.623&0.354&0 \\
		&SRN-GARCH* &0.064&0.375&0.192&0.363&0.621&0.333 &6\\
		\rule{0pt}{3ex}
		$\text{RKV}_2$&GJR* &\bol{0.060}&\bol{0.369}&\bol{0.185}&\bol{0.341}&0.614&0.324 &4\\
		&SRN-GJR   &0.064&0.379&0.191&0.363&0.615&0.326 &2\\
		\rule{0pt}{3ex}
		&EGARCH    &0.064&0.380&0.191&0.366&\bol{0.610}&0.320 &1\\
		&SRN-EGARCH*&0.063&0.382&0.188&0.359&\bol{0.610}&\bol{0.314} &4\\
		\hline
		\rule{0pt}{3ex}
		&GARCH     &0.068&0.410&0.198&0.370&0.623&0.362&0 \\
		&SRN-GARCH* &0.065&0.390&0.193&0.366&0.621&0.340 &6\\
		\rule{0pt}{3ex}
		$\text{RKV}_3$&GJR*  & \bol{0.061}&\bol{0.384}&\bol{0.186}&\bol{0.343}&\bol{0.614}&0.331 &6\\
		&SRN-GJR   &0.065&0.394&0.193&0.367&0.615& 0.333&0\\
		\rule{0pt}{3ex}
		&EGARCH    &0.065&0.395&0.193&0.369&\bol{0.610}&0.327 &1\\
		&SRN-EGARCH*&0.064&0.397&0.189&0.363&\bol{0.610}&\bol{0.321} &4\\
		\hline
		\rule{0pt}{3ex}
		&GARCH     &0.084&0.525&0.223&0.416&0.586&0.490&0 \\
		&SRN-GARCH* &0.077&0.460&0.215&0.402&0.574&0.454 &6\\
		\rule{0pt}{3ex}
		MedRV&GJR       &0.077&0.490&0.213&\bol{0.392}&0.578&0.462 &2\\
		&SRN-GJR*   &0.076&0.462&0.214&0.402&0.569&0.448 &4\\
		\rule{0pt}{3ex}
		&EGARCH    &0.076&0.458&0.213&0.403&\bol{0.562}&0.436 &1\\
		&SRN-EGARCH*&\bol{0.074}&\bol{0.450}&\bol{0.211}&0.396&0.563&\bol{0.433}&5\\
		\hline
		\rule{0pt}{3ex}
		&GARCH     &0.087&0.637&0.218&0.410&0.621&0.456&0\\
		&SRN-GARCH* &0.083&0.619&0.213&0.407&0.616&0.426 &6\\
		\rule{0pt}{3ex}
		RV&GJR*       &\bol{0.079}&\bol{0.613}&\bol{0.206}&\bol{0.384}&0.613&0.423 &4\\
		&SRN-GJR   &0.083&0.624&0.214&0.408&0.612&0.421&2\\
		\rule{0pt}{3ex}
		&EGARCH    &0.083&0.624&0.214&0.412&0.608&0.415 &1\\
		&SRN-EGARCH*&0.082&0.627&0.210&0.405&\bol{0.607}&\bol{0.409} &5\\
		\hline\hline
	\end{tabular}
	\caption{DAX data: Forecast performance of the RECH and benchmark GARCH-type models using different realized measures. For each pair of the RECH and GARCH-type models, the asterisks indicate the model with the higher count. In each panel, the bold numbers indicate the best predictive scores.}
	\label{tab:DAX_forecast_score_compare}
\end{table}

In particular, for the SP500, N225 and RUT datasets, the RECH models consistently perform the best across all panels. For example, the forecast results in Table \ref{tab:SP500_forecast_score_compare} show that the SRN-GARCH model has the highest numbers of lowest forecast errors in all panels, implying that the SRN-GARCH model forecasts volatility the best for the SP500 data.
For the N225 and RUT datasets, the SRN-EGARCH model clearly outperforms the other RECH and benchmark models in all realized measures. The superior predictive performance of the RECH models over the GARCH-type counterparts provides further evidence to support the conclusion that the RECH models do not overfit the index datasets.

As mentioned in Section \ref{sec:RECH_specifications}, we now discuss an useful feature of the RECH models; that is, the volatility estimates and volatility forecasts of the RECH specifications are less sensitive to the choice of the structure for the GARCH component than a single GARCH-type model. For example, for each dataset in Table \ref{tab:Real_params} and \ref{tab:Real_params2}, we compute the difference between the highest and lowest marginal likelihood estimates among the RECH specifications and calculate the same value for the GARCH-type benchmark models. Table \ref{tab:marllh_difference} shows that these discrepancies of in-sample performance among the RECH models are much smaller than those of the GARCH-type models, across all datasets. For each panel in Tables \ref{tab:SP500_forecast_score_compare} to \ref{tab:DAX_forecast_score_compare}, we compute the difference between the highest and lowest forecast scores among the RECH models and do the same for the benchmark models; Figure \ref{f:Application_forecast_discrepancy} plots the results.

\begin{table}[h!]
	\begin{center}
		\footnotesize
		\begin{tabular}{ccccc}
			\hline\hline
			\rule{0pt}{3ex}
			&SP500         &N225      &RUT         &DAX\\
			\hline
			\rule{0pt}{3ex}
            GARCH-type models &21.7& 30.3 &12.6 & 29.8 \\
            RECH models       &6.9 & 6.8  &5.5  & 3.3 \\
			\hline\hline
		\end{tabular}
	\end{center}
	\caption{Applications: The difference between the highest and lowest marginal likelihood estimates of the RECH and the benchmark models across all in-sample data. The numbers are in the natural log scale. }
	\label{tab:marllh_difference}
\end{table}
\begin{figure}[h!]
	\centering
	\includegraphics[width=1\columnwidth]{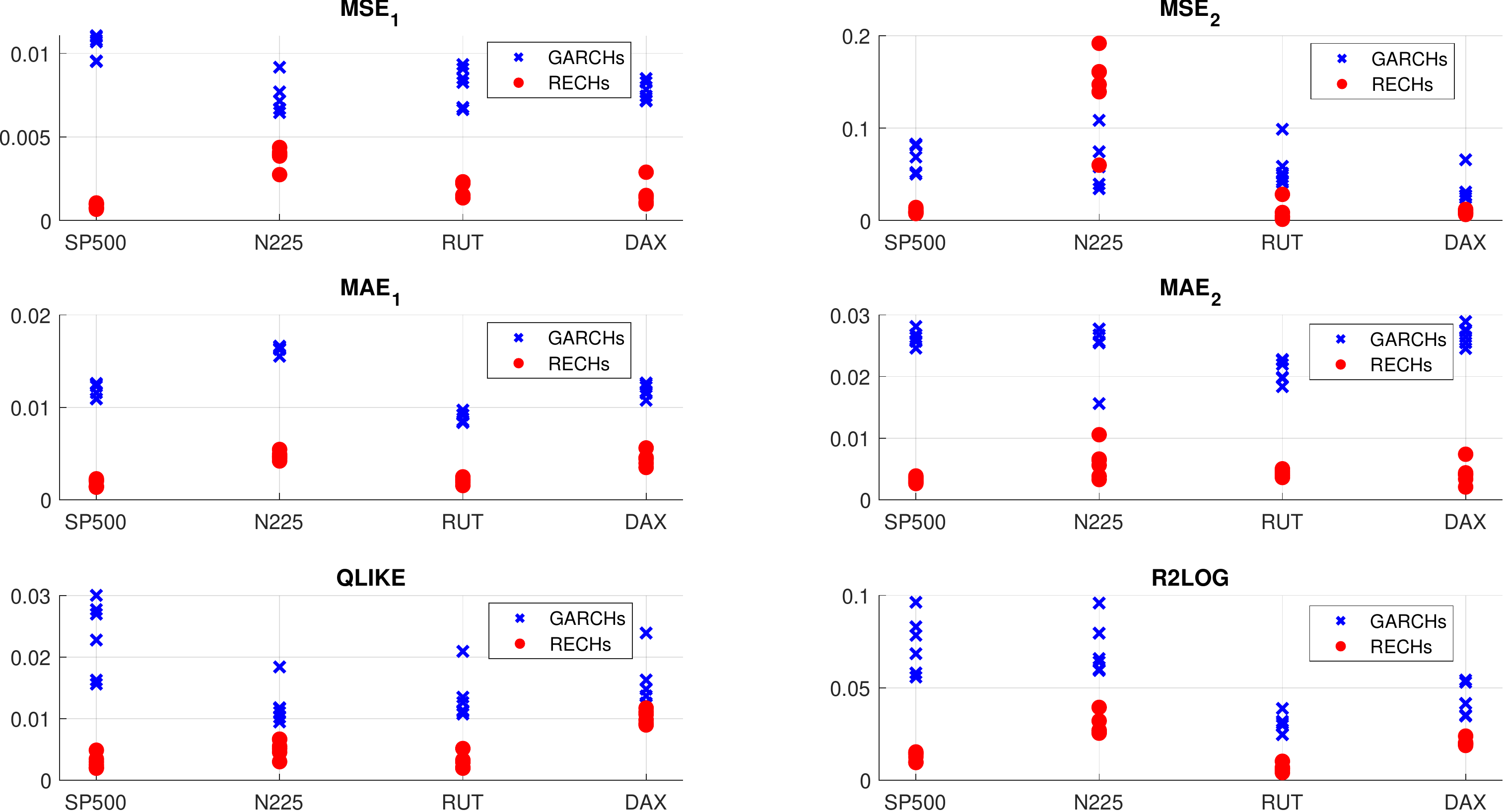}
	\caption{Applications: The difference between the highest and lowest forecast scores of the RECH and the benchmark models. In each panel, each column contains 6 values of the RECH models and 6 values of the benchmark models, corresponding to 6 realized measures. }
	\label{f:Application_forecast_discrepancy}
\end{figure}

The comparison results in Figure \ref{f:Application_forecast_discrepancy} indicate that the discrepancies of out-of-sample performance among the RECH models are consistently lower than those of the GARCH-type models, except for the $\text{MSE}_2$ score for the N225 data.
The result that model performance is less sensitive to the RECH specification is useful in practice as users do not need to worry about which specification should be used for their financial dataset.

\subsection{Application to exchange rate data}
This section reports on the application of the RECH model to analyse the USD/GBP daily exchange rates observed from 21/03/2001 to 01/03/2009\footnote{The dataset was also downloaded from the Realized Library of The Oxford-Man Institute}. We use the first $1000$ observations for model estimation and the last $1000$ observations for evaluating predictive performance.

The in-sample results (not shown) suggest that, unlike for the stock data, for exchange rate data GARCH performs the best compared to GJR and EGARCH.
This is consistent with the study of \cite{Hansen:2005} who show that there is no evidence that GARCH is outperformed by more sophisticated models on the DM/USD exchange rates.
The in-sample result also shows that SRN-GARCH outperforms GARCH, thus being the best model, in terms of marginal likelihood.
\begin{table}[h!]
	\begin{center}
		\footnotesize
		\begin{tabular}{rccccccccccc}
			\hline\hline
			\rule{0pt}{3ex}
			&PPS   &\#Vio&QS &\%Hit     &$\text{MSE}_1$   &$\text{MSE}_2$  &$\text{MAE}_1$   &$\text{MAE}_2$ &QLike& $\text{R}^2\text{Log}$\\
			\hline
			\rule{0pt}{3ex}
			GARCH       &\bol{0.827} &17 &0.018 &0.017  &0.029 &0.228 &0.099 &0.169 &-0.121 &0.166\\
			SRN-GARCH*  &\bol{0.827} &\bol{14} &\bol{0.017} &0.017  &\bol{0.024} &0.180 &\bol{0.094} &\bol{0.154} &\bol{-0.128} &\bol{0.159}\\
			\rule{0pt}{3ex}
			GJR         &0.829 &17 &0.018 &0.017 &0.029 &0.225 &0.100 &0.169 &-0.121&0.167\\
			SRN-GJR*    &0.829 &15 &\bol{0.017} &0.017 &\bol{0.024} &\bol{0.178} &\bol{0.094} &\bol{0.154} &-0.127&0.160\\
			\rule{0pt}{3ex}
			EGARCH      &0.836 &17 &0.019 &\bol{0.016} &0.039 &0.297 &0.112 &0.196 &-0.097 &0.198\\
			SRN-EGARCH* &0.843 &15 &\bol{0.017} &0.017 &0.028 &0.194 &0.107 &0.178 &-0.104 &0.202\\
			\hline\hline
		\end{tabular}
	\end{center}
	\caption{USD/GBP exchange rate: one-step-ahead forecast comparison. 
		The bold numbers denote the best scores. For each pair of the RECH and GARCH-type models, the asterisk indicates the models having better forecast performance.}
	\label{tab:exchange_forecast}
\end{table}

Table \ref{tab:exchange_forecast} summarizes the forecast performance measured by the predictive scores discussed in Section \ref{sec:simulation and applications},
which suggests that RECH models are able to improve on their counterpart GARCH-type models in terms of volatility forecasts.

\section{Conclusion}\label{sec:conclusion}
We propose a new class of conditional heteroskedastic models, which we call RECH models, by incorporating a RNN structure into the conditional variance of the GARCH-type models, and study in detail three RECH specifications: SRN-GARCH, SRN-GJR and SRN-EGARCH.
We use Sequential Monte Carlo with likelihood annealing and data annealing for in-sample Bayesian inference and out-of-sample forecasting. We also use the estimate of marginal likelihood as a by-product of the SMC for model choice. The extensive simulation and empirical studies suggest that the RECH models not only have both attractive in-sample performance and accurate out-of-sample forecasts, but can also explain the volatility movement.
In addition to the empirical study reported in Section \ref{sec:simulation and applications},
we tested the RECH models on all of the other datasets included in the Realized Library which contains 31 major stock markets around the world.
In all cases, with the adding of the RNN component, the RECH model is not worse than its GARCH counterpart,
whereas in some of these cases, significant predictive improvement is achieved by RECH.

An attractive feature of the proposed hybrid framework is that it is easy to use advances in both the deep learning and volatility modeling literatures to extend the current RECH models.
This opens up many interesting future applications and areas of research.
For example, one can use the Fourier Recurrent Unit (FRU) of \cite{zhang:2018} to construct the recurrent component of the RECH framework; FRU is currently considered as the state-of-the-art RNN architecture in deep learning. For the GARCH component, one can use the Bad Environment - Good Environment (BEGE) model of \cite{Bekaert:2015}, which can efficiently simulate the heavy tailed behavior of financial returns.
Another interesting research direction is extending the univariate RECH models to the multivariate case.
We conjecture that the recurrent neural network architectures will be more powerful for multivariate inputs as they can naturally capture the interaction between the inputs. This research is in progress.

\newpage
\section*{Appendix}
\subsection*{A1:  SMC with data annealing}
\label{sec:data_anneal}
\begin{algorithm}[h!]
	\caption{SMC with data annealing}
	\label{alg:data_annealing}
	1. Sample  $\theta^j_0 \sim p(\theta)$ and set  $W_0^j=1/M$ for $j=1...M$ \\
	2. \textbf{For} $t=1,...,T$,
	\begin{itemize}
		\item[] \textbf{Step 1, reweighting:} Compute the unnormalized weights
		\bea
		w_t^j = W^j_{t-1}p(y_t|y_{1:t-1},\theta^j_{t-1}), \; \; j=1,...,M,
		\eea
		and set the new normalized weights as
		\bea
		W^j_t = \frac{w^j_t}{\sum_{s=1}^{M}w^s_t}, \; \; j=1,...,M.
		\eea			
		\item[] \textbf{Step 2: } Compute the effective sample size (ESS)
		\bea
		\text{ESS} = \frac{1}{\sum_{j=1}^{M} \left(W_t^j\right)^2}.
		\eea
		\begin{itemize}
			\item[] \textbf{if} $\text{ESS} < c M$ for some $0<c<1$, \textbf{then}
			\begin{itemize}
				\item[(i)] {\bf Resampling}: Resample from $\{\theta_{t-1}^j,W_{t}^j\}^M_{j=1}$, and then set $W_t^j=1/M$ for $j=1...M$, to obtain the new equally-weighted particles $\{\theta_{t}^j,W_{t}^j\}^M_{j=1}$.
				\item[(ii)] {\bf Markov move}: for each $j=1,...,M$, move the sample $\theta_t^j$ according to $N_{\text{data}}$ random walk Metropolis-Hasting steps:
				\begin{itemize}
					\item[(a)] Generate a proposal $\theta_t^{j\prime}$ from multivariate normal distribution $\N(\theta_t^j,\Sigma_t)$ with $\Sigma_t$ the covariance matrix.
					\item[(b)] Set $\theta_t^j = \theta_t^{j \prime}$ with the probability
					\bea
					\text{min}\left(1,\frac{p(y_{1:t}|\theta_t^{j \prime})p(\theta_t^{j \prime})}{p(y_{1:t}|\theta_t^{j})p(\theta_t^{j})}\right)
					\eea
					otherwise keep $\theta_t^j$.
				\end{itemize}
			\end{itemize}
			\textbf{end}
		\end{itemize}
	\end{itemize}
\end{algorithm}

\newpage
\subsection*{A2: Additional results for Section \ref{sec:simulation and applications}}
\begin{figure}[h!]
	\centering
	\includegraphics[width=0.95\columnwidth]{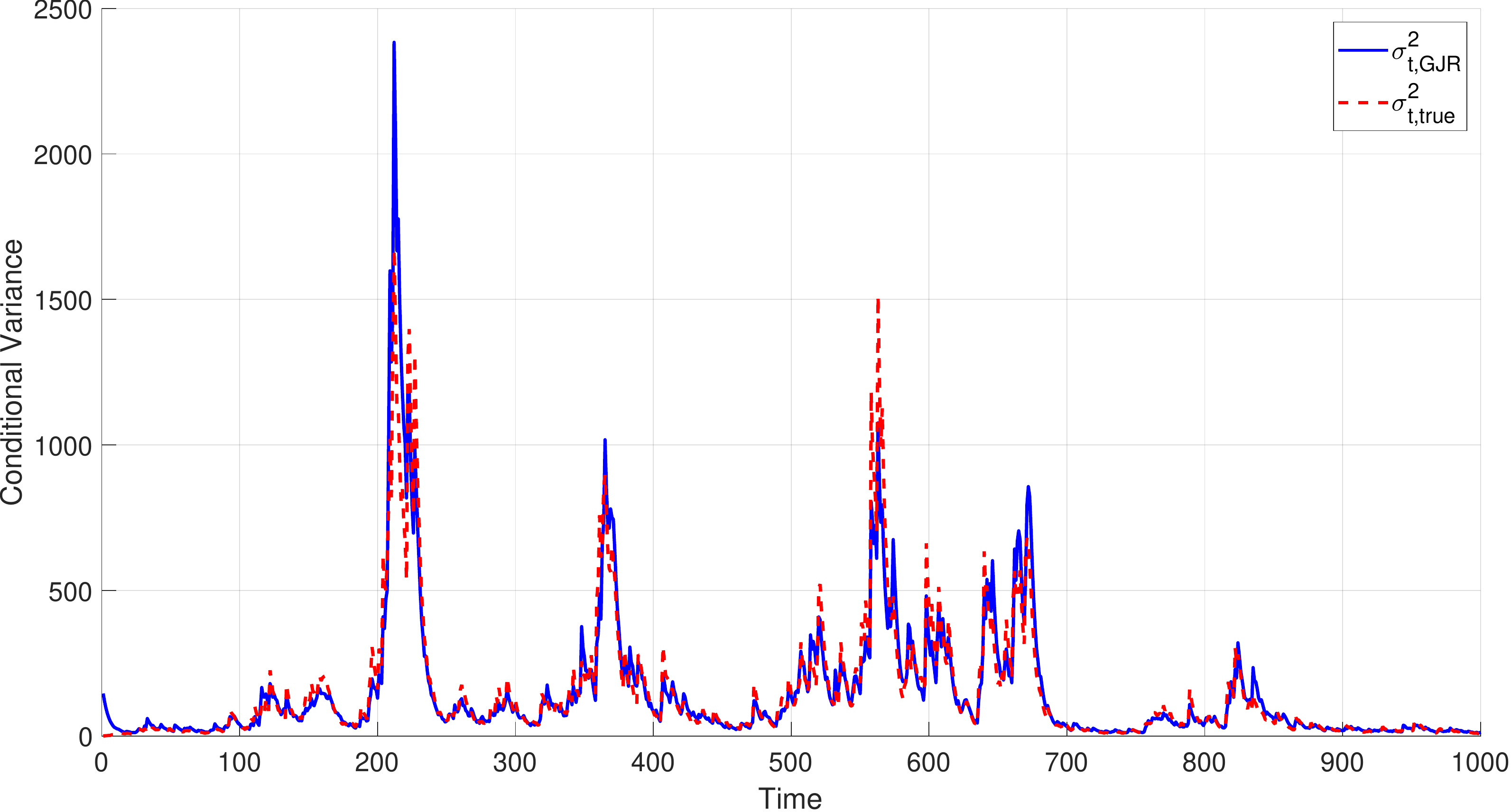}
	\caption{SIM II: The true conditional variance (dashed line) and estimated conditional variance (solid line) using the GJR model. (The figure is better viewed in colour).}
	\label{f:Sim2_volatility_density_2}
\end{figure}

\begin{figure}[h!]
	\centering
	\includegraphics[width=0.95\columnwidth]{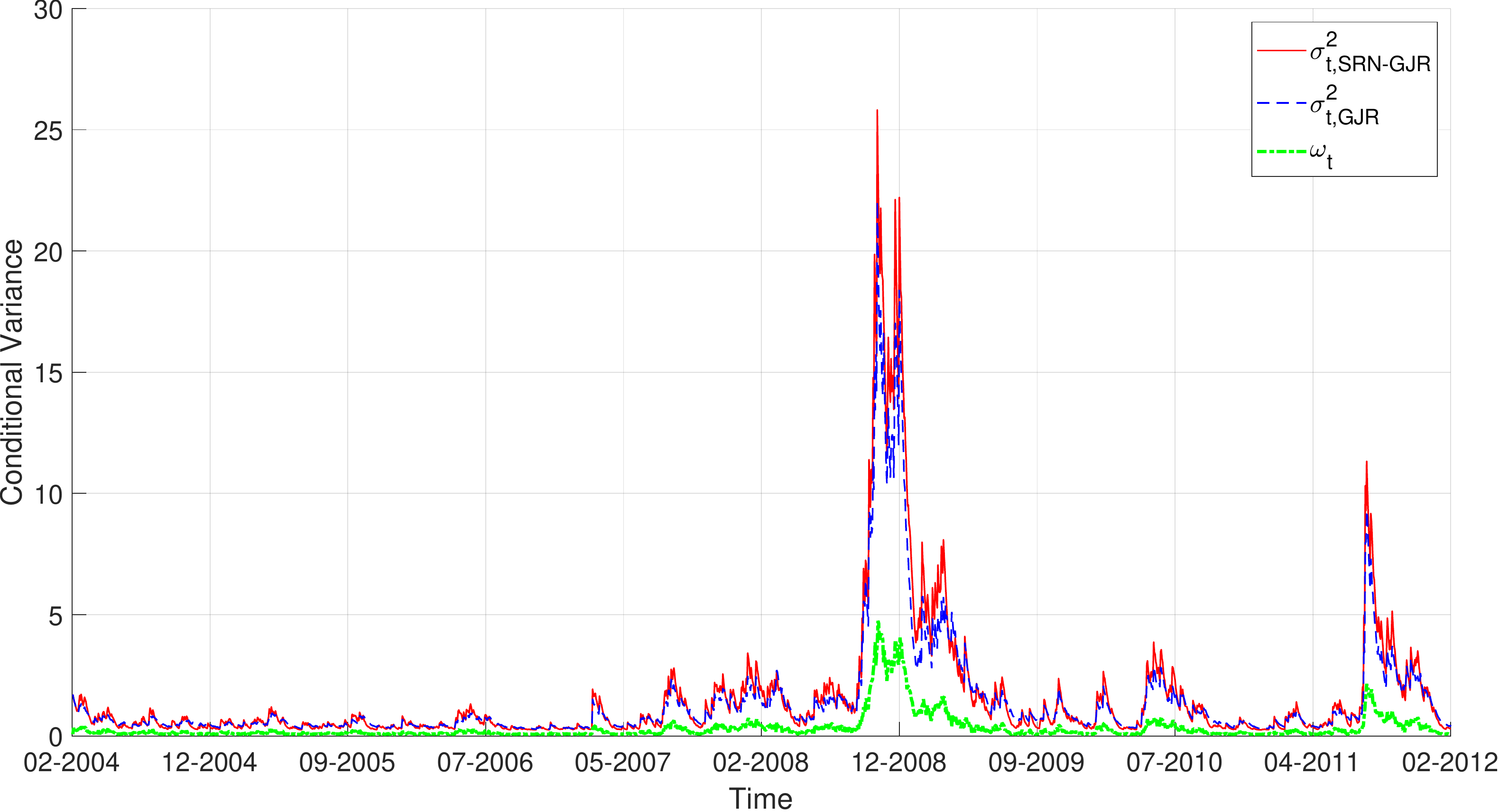}
	\caption{SP500: The in-sample conditional variance of the GJR (dashed line) and SRN-GJR (solid line) at all time steps. The bottom line shows the values of the recurrent component $\omega_t$ of the SRN-GJR specification. (The figure is better viewed in colour).}
	\label{f:SP500_volatility_density_SRN_GJR}
\end{figure}

\begin{figure}[h!]
	\centering
	\includegraphics[width=0.95\columnwidth]{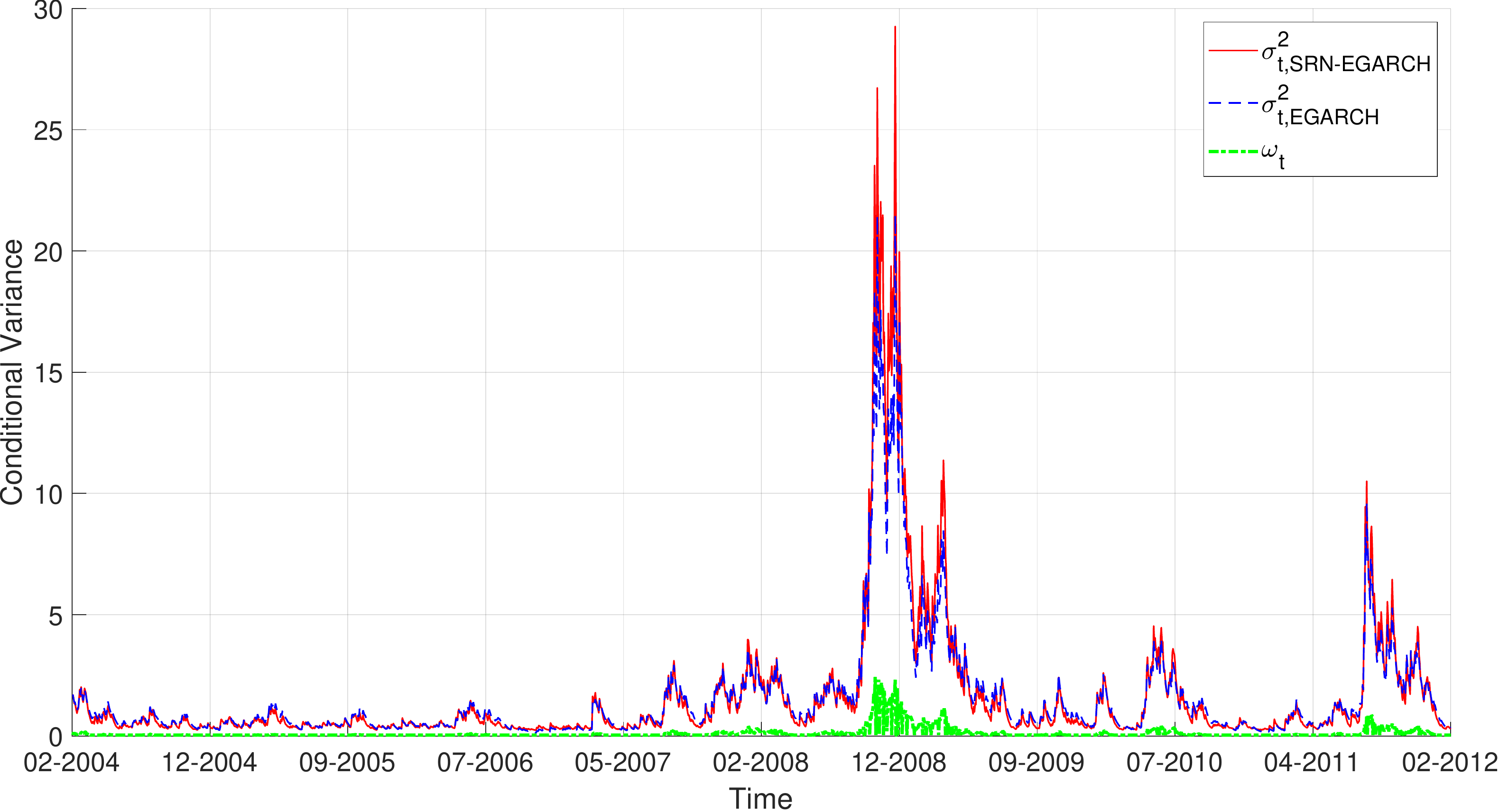}
	\caption{SP500: The in-sample conditional variance of the EGARCH (dashed line) and SRN-EGARCH (solid line) at all time steps. The bottom line shows the values of the recurrent component $\omega_t$ of the SRN-EGARCH specification. (The figure is better viewed in colour).}
	\label{f:SP500_volatility_density_SRN_EGARCH}
\end{figure}

\begin{figure}[h!]
	\begin{center}
		\includegraphics[width=0.95\columnwidth]{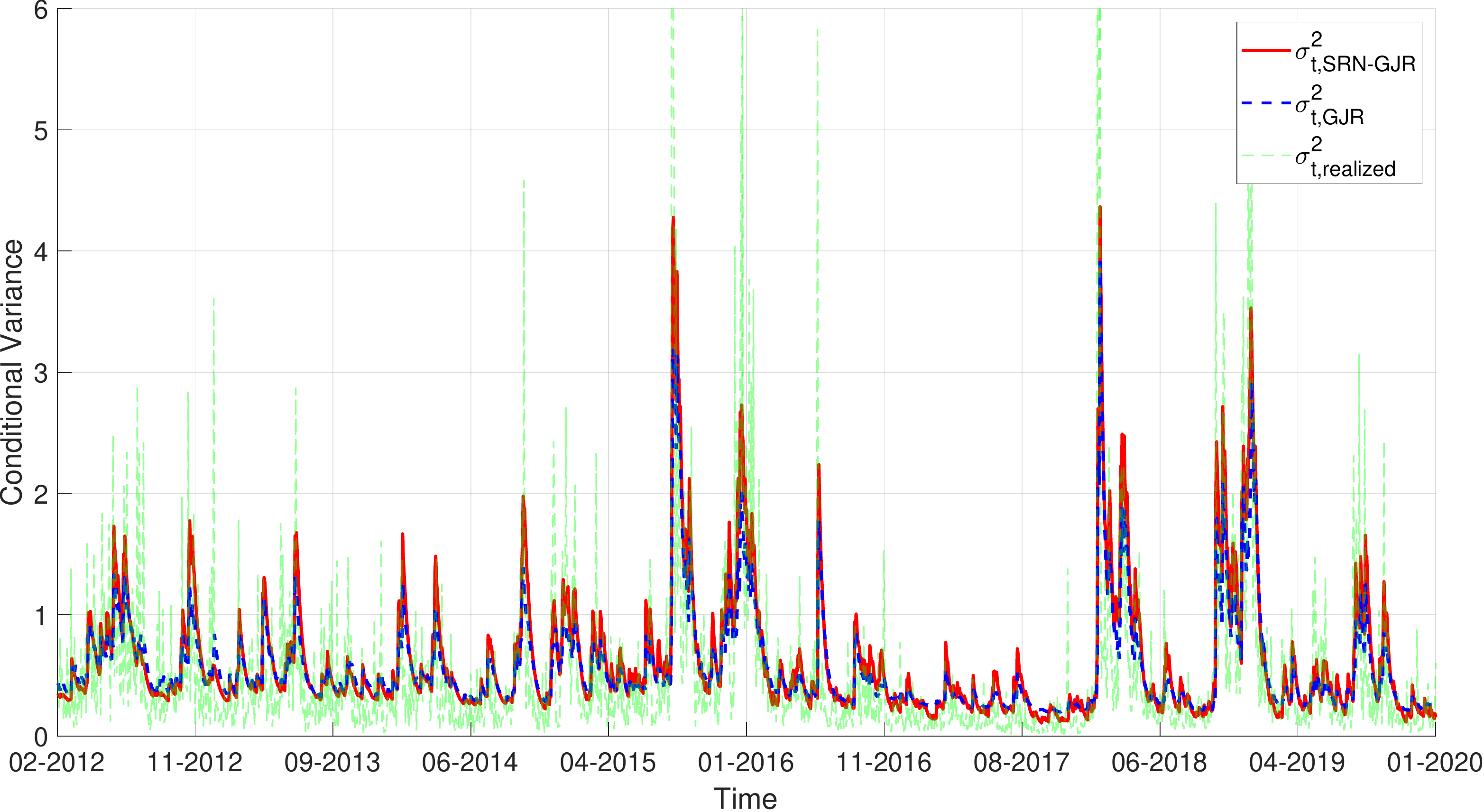}
		\caption{SP500: Forecast conditional variance by the GJR (dashed) and SRN-GJR (solid) models, together with the realized variance (dotted). (The figure is better viewed in colour).}
		\label{f:SP500_forecast_variance_GJR}
	\end{center}
\end{figure}

\begin{figure}[ht!]
	\begin{center}
		\includegraphics[width=0.95\columnwidth]{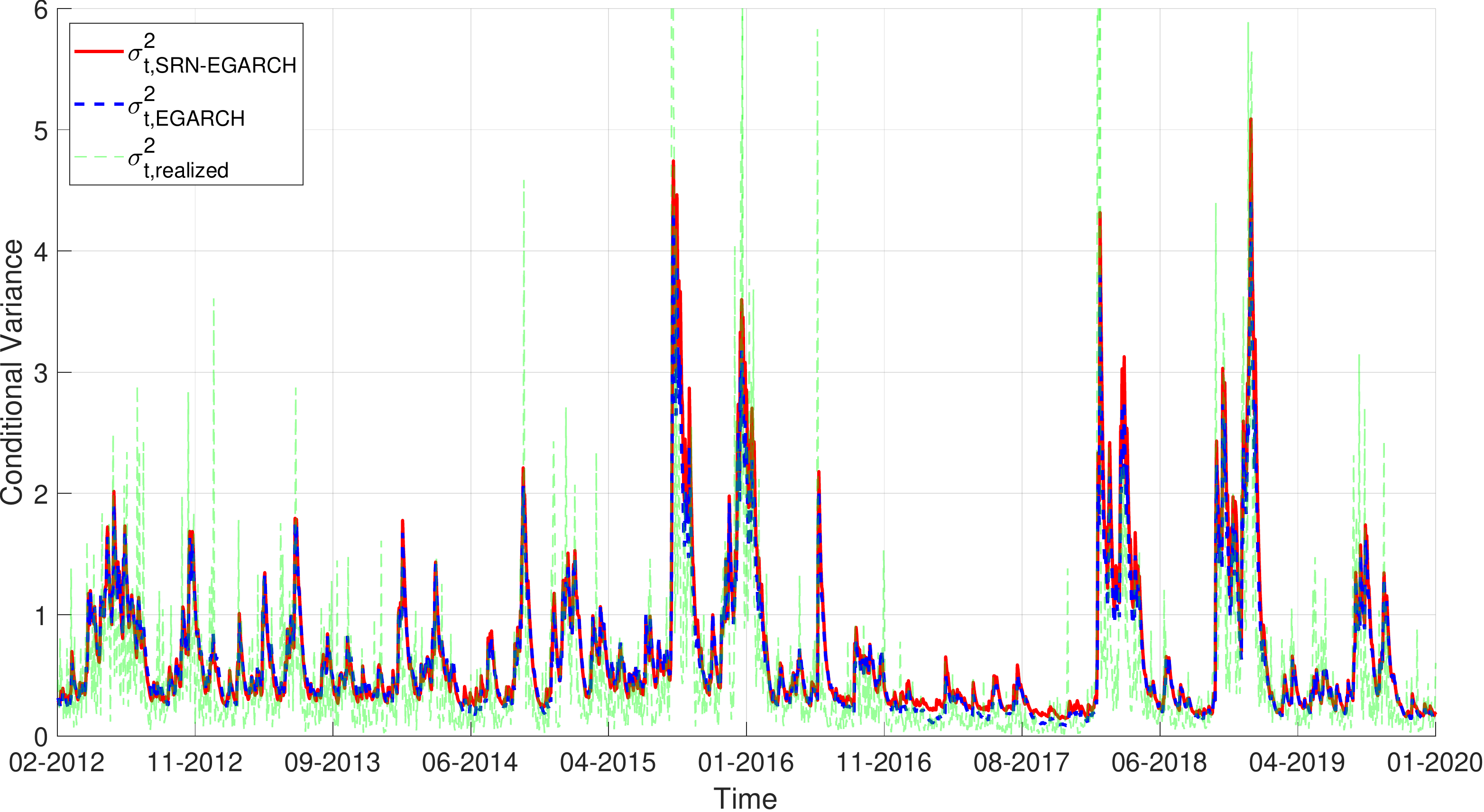}
		\caption{SP500: Forecast conditional variance by the EGARCH (dashed) and SRN-EGARCH (solid) models, together with the realized variance (dotted). (The figure is better viewed in colour).}
		\label{f:SP500_forecast_variance_EGARCH}
	\end{center}
\end{figure}

\newpage
\subsection*{A3:  Comparison to the LSTM-SV and GP-Vol models}

This section reports the comparison of the predictive performance between the RECH and other non-linear volatility models including the LSTM-SV model of \cite{Nguyen:2019} and the GP-Vol model of \cite{Lobato:2014}. The GP-Vol model is an engineering-oriented stochastic volatility model using a Gaussian process to capture the non-linearity in the volatility dynamics. As the GP-Vol model focuses mainly on prediction, we use it as another benchmark model, together with the LSTM-SV model, to assess the predictive performance of the RECH models. We use the software packages provided by \cite{Nguyen:2019} and \cite{Lobato:2014} to perform Bayesian inference and prediction for the LSTM-SV and GP-Vol models, respectively, with all settings at their default values. For all models, we use the posterior means estimated from in-sample data to perform one-step-ahead forecasting for out-of-sample data. We use $T_{\text{out}}=1000$ observations for the out-of-sample period.

Table \ref{tab:LSTM_SV_forecast} shows the out-of-sample performance of the models evaluated on the SP500 index, using the same realize measures discussed in Section \ref{sec:applications}.
Table \ref{tab:LSTM_SV_forecast} suggests that RECH models in general predict better than  the LSTM-SV and GP-Vol models. The superiority of the SRN-GARCH model, and RECH models in general, over the LSTM-SV model is expected as the SRN component in RECH models is able to capture the leverage effect, while this is not the case for the LSTM component in LSTM-SV. We note that the GP-Vol model uses a Gaussian process with its covariance matrix expanding over time and hence it becomes computationally expensive in applications with long time series. We observe similar results between the RECH, LSTM-SV and GP-Vol models for the other datasets in Section \ref{sec:applications}.

\begin{table}[h!]
	\centering
	\footnotesize
	\begin{tabular}{crcccccc}
		\hline\hline
		\rule{0pt}{3ex}
		Estimator& & $\text{MSE}_1$ &$\text{MSE}_2$ & $\text{MAE}_1$&$\text{MAE}_2$&$\text{QLIKE}$&$\text{R}^2\text{LOG}$\\
		\hline
		\rule{0pt}{3ex}
		\multirow{5}{*}{BV}&GP-Vol     &0.176  &1.586  &0.309  &0.555   &0.546  &1.097   \\
		\rule{0pt}{3ex}
		&LSTM-SV           &0.104  &1.326  &0.237  &0.397  &0.354  &0.717    \\
		\rule{0pt}{3ex}
		&SRN-GARCH*         &0.105  &1.210  &\bol{0.201}  &\bol{0.332}  &0.448  &\bol{0.572}  \\
		&SRN-GJR*           &\bol{0.096}  &\bol{1.210}  &0.220 &0.370  &\bol{0.288}  &0.637 \\
		&SRN-EGARCH        &0.100  &1.222  &0.228  &0.383  &0.297  &0.681 \\
		\hline
		\rule{0pt}{3ex}
		\multirow{5}{*}{$\text{RKV}_1$}&GP-Vol      &0.173  &1.039  &0.317  &0.549  &0.501  &1.260  \\
		\rule{0pt}{3ex}
		&LSTM-SV     &0.115  &\bol{0.829}  &0.263  &0.425  &\bol{0.366}  &0.978    \\
		\rule{0pt}{3ex}
		&SRN-GARCH*       &0.129  &1.059  &\bol{0.240}  &\bol{0.395}  &0.598  &\bol{0.772}  \\
		&SRN-GJR         &\bol{0.104}  &0.875  &\bol{0.240}  &0.400  &0.369  &0.785 \\
		&SRN-EGARCH      &0.107  &0.882  &0.246  &0.411  &0.373  &0.826 \\
		\hline
		\rule{0pt}{3ex}
		\multirow{5}{*}{$\text{RKV}_2$}&GP-Vol  &0.154  &0.851  &0.295  &0.522  &0.603  &0.955  \\
		\rule{0pt}{3ex}
		&LSTM-SV     &0.078  &0.524  &0.213  &0.349  &0.383  &0.567    \\
		\rule{0pt}{3ex}
		&SRN-GARCH*       &\bol{0.070}  &\bol{0.440}  &\bol{0.179}  &\bol{0.271}  &0.307  &\bol{0.471}  \\
		&SRN-GJR         &0.077  &\bol{0.440}  &0.209  &0.332  &\bol{0.224}  &0.587 \\
		&SRN-EGARCH      &0.083  &0.457  &0.220  &0.350  &0.236  &0.635 \\
		\hline
		\rule{0pt}{3ex}
		\multirow{5}{*}{$\text{RKV}_3$}&GP-Vol      &0.155  &0.893  &0.296  &0.523  &0.595  &0.963 \\
		\rule{0pt}{3ex}
		&LSTM-SV    &0.080  &0.574  &0.214  &0.352  &0.379  & 0.575   \\
		\rule{0pt}{3ex}
		&SRN-GARCH*       &\bol{0.074}  &0.499  &\bol{0.182}  &\bol{0.278}  &0.326  &\bol{0.479}  \\
		&SRN-GJR         &0.077  &\bol{0.480}  &0.208  &0.332  &\bol{0.232}  &0.579 \\
		&SRN-EGARCH      &0.082  &0.496  &0.219  &0.350  &0.244  &0.626 \\
		\hline
		\rule{0pt}{3ex}
		\multirow{5}{*}{MedRV} &GP-Vol      &0.194  &1.147  &0.333  &0.585  &0.559  &1.329 \\
		\rule{0pt}{3ex}
		&LSTM-SV      &0.117  &0.830  &0.266  &0.437  &0.336  &0.936    \\
		\rule{0pt}{3ex}
		&SRN-GARCH*       &\bol{0.103}  &0.734  &\bol{0.209}  &\bol{0.327}  &0.303  &\bol{0.703}  \\
		&SRN-GJR         &0.106  &\bol{0.639}  &0.253  &0.400  &\bol{0.196}  &0.927 \\
		&SRN-EGARCH      &0.112  &0.645  &0.263  &0.414  &0.210  &0.990 \\
		\hline
		\rule{0pt}{3ex}
		\multirow{5}{*}{RV} &GP-Vol      &0.186  &2.048  &0.318  &0.571  &0.488  &1.189\\
		\rule{0pt}{3ex}
		&LSTM-SV     &0.123  &\bol{1.865}  &0.255  &0.429  &0.342  &0.861    \\
		\rule{0pt}{3ex}
		&SRN-GARCH   &0.143  &2.366  &0.228  &\bol{0.404} &\bol{0.337}  &0.675 \\
		&SRN-GJR*     &\bol{0.114}  &2.132  &\bol{0.227}  &0.406 &0.381  &\bol{0.658} \\
		&SRN-EGARCH  &0.117  &2.144  &0.236  &0.420 &0.387  &0.696 \\
		\hline\hline
	\end{tabular}
	\caption{SP500: Out-of-sample performance of the GP-Vol, LSTM-SV and RECH models using different realized measures. In each panel, the bold numbers indicate the best predictive scores and the asterisk indicates the model with the best predictive performance. }
	\label{tab:LSTM_SV_forecast}
\end{table}

\subsection*{A4: Proofs}
\begin{proof}[Proof of Theorem \ref{the:Theorem}]
Recall the $\sigma$-fields $\mathcal F_t=\sigma(y_s,s\leq t)$, $t\geq 1$, and let us define $\mathcal F_0$ to be the $\sigma$-field generated by $\sigma_0^2$.
As the recurrent component is bounded,
\beq\label{eq:basic eq}
\E(y_t^2|\F_{t-1})=\sigma_t^2\leq M+\alpha\sigma_{t-1}^2+\beta y_{t-1}^2,\;\;t>1.
\eeq
We have that
\bean
\E(y_t^2|\F_{t-2})&=&\E\big(\E(y_t^2|\F_{t-1})|\F_{t-2}\big)\\
&\leq& M+\alpha\sigma_{t-1}^2+\beta \E(y_{t-1}^2|\F_{t-2})\\
&=& M+\alpha\sigma_{t-1}^2+\beta \sigma_{t-1}^2,
\eean
hence, by \eqref{eq:basic eq},
\beqn
\E(y_t^2|\F_{t-2})\leq\begin{cases}
M+(\alpha+\beta)\sigma_{1}^2<M+\sigma_{0}^2,&t=2\\
M+(\alpha+\beta)\big(M+\alpha\sigma_{t-2}^2+\beta y_{t-2}^2\big),&t>2.
\end{cases}
\eeqn
Similarly,
\bean
\E(y_t^2|\F_{t-3})&=&\E\big(\E(y_t^2|\F_{t-2})|\F_{t-3}\big)\\
&\leq& M+(\alpha+\beta)\big(M+\alpha\sigma_{t-2}^2+\beta \E(y_{t-2}^2|\F_{t-3})\big)\\
&=& M\big(1+(\alpha+\beta)\big)+(\alpha+\beta)^2\sigma_{t-2}^2,\;\;t\geq 3.
\eean
For $t=3$,
\[\E(y_t^2|\F_{t-3})=M\big(1+(\alpha+\beta)\big)+(\alpha+\beta)^2\sigma_{1}^2< M\big(1+(\alpha+\beta)\big)+\sigma_{0}^2,\]
and for $t>3$, by \eqref{eq:basic eq},
\bean
\E(y_t^2|\F_{t-3})&\leq& M\big(1+(\alpha+\beta)\big)+(\alpha+\beta)^2\big(M+\alpha\sigma_{t-3}^2+\beta y_{t-3}^2\big)\\
&\leq& M\big(1+(\alpha+\beta)+(\alpha+\beta)^2\big)+(\alpha+\beta)^2\big(\alpha\sigma_{t-3}^2+\beta y_{t-3}^2\big).
\eean
Hence
\beqn
\E(y_t^2|\F_{t-3})\leq\begin{cases}
M\big(1+(\alpha+\beta)\big)+\sigma_{0}^2,&t=3\\
M\big(1+(\alpha+\beta)+(\alpha+\beta)^2\big)+(\alpha+\beta)^2\big(\alpha\sigma_{t-3}^2+\beta y_{t-3}^2\big),&t>3.
\end{cases}
\eeqn
By deduction we have that, for $k=2,...,t-1$,
\beq
\E(y_t^2|\F_{t-k})\leq
\begin{cases}
M\big(1+\sum_{i=1}^{k-2}(\alpha+\beta)^i\big)+\sigma_0^2,&t=k\\
M\big(1+\sum_{i=1}^{k-1}(\alpha+\beta)^i\big)+(\alpha+\beta)^{k-1}\big(\alpha\sigma_{t-k}^2+\beta y_{t-k}^2\big),&t>k.
\end{cases}
\eeq
Therefore,
\bean
\V(y_t^2|\sigma_0^2)\leq\E(y_t^2|\F_0)&\leq&M\big(1+\sum_{i=1}^{k-2}(\alpha+\beta)^i\big)+\sigma_0^2\\
&<&M\big(1+\sum_{i=1}^{\infty}(\alpha+\beta)^i\big)+\sigma_0^2=\frac{M}{1-\alpha-\beta}+\sigma_0^2.
\eean
\end{proof}

\clearpage
\bibliographystyle{apalike}
\bibliography{references}
\end{document}